\documentclass[pdflatex,sn-basic,Numbered]{sn-jnl}
\usepackage{graphicx,hyperref,enumitem,grffile,parskip}
\usepackage{amssymb,amsmath,amsthm,xcolor,mathrsfs}
\usepackage{tikz,longtable,tcolorbox,array,booktabs,subcaption}
\usepackage{listings}
\usepackage{doi}

\usetikzlibrary{arrows, positioning, shapes.geometric}
\tikzstyle{process} = [ellipse, minimum width=1cm, minimum height=0.6cm, text centered, draw=black]
\tikzstyle{process1} = [ellipse, minimum width=1cm, minimum height=1cm, text centered, draw=white]
\tikzstyle{process2} = [ellipse, minimum width=1.8cm, minimum height=1cm, text centered, draw=blue]
\tikzstyle{decision} = [diamond, aspect = 3, text centered, draw=black]
\tikzstyle{bag} = [align=center]
\tikzstyle{arrow} = [->,>=stealth]

\renewcommand{\AA}{{\mathcal A}}
\newcommand{\BB}{{\mathcal B}}
\newcommand{\CC}{{\mathcal C}}

\newcommand{\GG}{{\mathcal G}}

\newcommand{\II}{{\mathcal I}}
\newcommand{\KK}{{\mathcal K}}

\newcommand{\LL}{{\mathcal L}}

\newcommand{\sS}{{\mathcal S}}

\newcommand{\PH}{H}
\newcommand{\R}{{\mathbf R}}

\newcommand{\biarrow}{\;\smash{{}^{\displaystyle \rightarrow}_{\displaystyle\leftarrow}}\;}

\theoremstyle{thmstyleone}%
\newtheorem{theorem}{Theorem}[section]
\newtheorem{lemma}[theorem]{Lemma}
\newtheorem{definition}[theorem]{Definition}
\newtheorem{corollary}[theorem]{Corollary}

\newtheorem{remark}[theorem]{Remark}

\newcommand{\ignore}[1]{}

\newcommand{\END}{\hfill\mbox{\raggedright$\Diamond$}}
\newcommand{\etal}{\textrm{et al.\,}}

\newcommand{\Matrixc}[1]{\ensuremath{\left[\begin{array}{cccccccccccc|ccc} #1 \end{array}\right]}}

\numberwithin{equation}{section}
\renewcommand{\theequation}{\thesection.\arabic{equation}}
\begin{document}

\title[Article Title]{Automated Classification of Homeostasis Structure in Input-Output Networks}

\author[1]{\fnm{Xinni} \sur{Lin}}

\author[2]{\fnm{Fernando} \sur{Antoneli}}

\author[3]{\fnm{Yangyang} \sur{Wang}}

\affil[1]{\orgdiv{Department of Mathematics}, \orgname{University of Southern California}, \orgaddress{\city{Los Angeles}, \state{California}, \postcode{90007}, \country{USA}}. Email: xinnilin@usc.edu}

\affil[2]{\orgdiv{Escola Paulista de Medicina}, \orgname{Universidade Federal de São Paulo}, \orgaddress{\city{São Paulo}, \state{SP} \postcode{04039-032}, \country{Brazil}}. Email: fernando.antoneli@unifesp.br}

\affil[3]{\orgdiv{Department of Mathematics, Volen National Center for Complex Systems}, \orgname{Brandeis University}, \orgaddress{\city{Waltham}, \state{Massachusetts}, \postcode{02453}, \country{USA}}. Email: yangyangwang@brandeis.edu}


\abstract{Homeostasis is widely observed in biological systems and refers to their ability to maintain an output quantity approximately constant despite variations in external disturbances. Mathematically, homeostasis can be formulated through an input–output function mapping an external parameter to an output variable. Infinitesimal homeostasis occurs at isolated points where the derivative of this input–output function vanishes, allowing tools from singularity theory and combinatorial matrix theory to characterize and classify homeostatic mechanisms in terms of network topology. Although the theoretical framework allows homeostasis subnetworks to be identified directly from combinatorial structures of the input–output network without numerical simulation, the required combinatorial enumeration becomes increasingly intractable as network size grows. Moreover, the reliance on advanced graph-theoretic concepts limits its broader accessibility and practical use across disciplines, particularly in biological applications.
To overcome these limitations, we develop a Python-based algorithm that automates the identification of homeostasis subnetworks and their associated homeostasis conditions directly from network topology. Given an input–output network specified solely by its connectivity structure and the designation of input and output nodes, the algorithm automatically identifies the relevant graph-theoretical structures and enumerates all homeostatic mechanisms. We demonstrate the applicability of the algorithm across a range of biological examples, including small and large networks, networks with a single input parameter (with single or multiple input nodes), multiple input parameters, and cases where input and output coincide. This wide applicability stems from our extension of the theoretical framework from single-input–single-output networks to networks with multiple input nodes through an augmented single-input-node representation.
The resulting computational framework provides a scalable and systematic approach to classifying homeostatic mechanisms in complex biological networks, facilitating the application of advanced mathematical theory to a broad range of biological systems.}

\keywords{Infinitesimal Homeostasis, Coupled Dynamical Systems, Input-Output Network, Robust Perfect Adaptation}

\maketitle

\section{Introduction}


Homeostasis is the ability to maintain internal stability while facing external perturbations.  
Usually there is an \emph{input-output function} $z(\II)$ that is required to be approximately constant over a range of external stimuli $\II$.
Many researchers have emphasized that homeostasis is an important phenomenon in biology.  
For example, the extensive work of Nijhout, Reed, Best and collaborators~\cite{NRBU04, RLN10, best2009homeostatic, NR14, nijhout2017systems,RBGSN17, NBR15, NBR18} considers networks whose nodes represent the concentrations of certain biochemical substrates.
Further examples include regulation of cell number and size \cite{L13}, the control of sleep~\cite{WRCD99}, and the expression level regulation in housekeeping genes~\cite{AGS18}.  
The literature is huge, and these articles are a small sample. 

In related work under the name of \emph{perfect adaptation} Ma \etal~\cite{MTELT09} identify, among all three-node enzyme networks with Michelis-Menten chemical kinetics, two networks that are able to \emph{adapt}, that is, to reset themselves after responding to an external stimulus.
This amounts to ask for the input-output function to be independent of the external stimuli $\II$, i.e., $z=$ constant.
{\em Perfect adaptation} has been used widely, for example, in ecology, chemistry, and control engineering (cf.~\cite{MTELT09, AM13, TM16, F16, QD18, AL18, DQMS18, ALGBSK19,hong2025topological,drengstig2012robust}).

Golubitsky and Stweart \cite{GS17} introduced the notion of \emph{infinitesimal homeostasis}~\cite{GS17} that is intermediary between perfect adaptation and general homeostasis.  
Consider system of ordinary differential equations (ODEs) depending on an \emph{input parameter} $\II$, which varies over a range of external stimuli.
Suppose there is a family of equilibrium points $X(\II)$ and an observable $\phi$, such that the \emph{input-output function} $z(\II)=\phi(X(\II))$ is well-defined on the range of $\II$.
In this situation, we say that the system exhibits \emph{homeostasis} if, under variation of the input parameter $\II$, the input-output function $z(\II)$ remains approximately constant over the interval of external stimuli.
We say that the system exhibits \emph{infinitesimal homeostasis} at $\II_0$ if $z'(\II_0) = 0$, where $'$ indicates differentiation with respect to $\II$.
Obviously, \emph{perfect adaptation} (or \emph{perfect homeostasis}) occurs when
$z'\equiv 0$.

Therefore, perfect homeostasis implies infinitesimal homeostasis at all points and infinitesimal homeostasis at $\II_0$ implies homeostasis in a neighborhood of $\II_0$.
It is not difficult to see that the converse do not hold in both cases.
One of the benefits of working with the notions of infinitesimal homeostasis is that it amounts to a very simple property to check, namely, if a derivative of a function is zero at some point.
On the other hand, this simple property opens up to the possibility to apply the powerful theorems of singularity theory to describe the topological structure of input-output functions.
Indeed, already in \cite{GS17} the authors are able to show that scalar input-output functions $z:\R\to\R$ can be classified by \emph{Elementary Catastrophe Theory} \cite{thom1969,thom1975,zeeman1977}.

A particularly fruitful approach to understand the mechanisms that lead to homeostasis is to regard homeostasis as a network concept.
This is motivated by the abundance of networks in biology, specially connected to ODE modeling, see e.g. \cite{NBR14,RBGSN17,F16}.
In fact, almost all ODE models in biology come together with a `wiring diagram', or a network, describing the interactions
among the elements in the model.
The recently published book \cite{gs2023} gives an exposition of a formal framework for studying networks of coupled ODEs, that have been developed by the authors and collaborators in the past decade. 
More precisely, a network of coupled ODEs is determined by directed graph whose nodes and edges are classified into types. 
Nodes, or cells, represent the variables of a component ODE. 
Edges, or arrows, between nodes represent couplings
from the tail node to the head node. 
Nodes of the same type have the same phase spaces (up to a canonical identification); edges of the same type represent identical couplings.

In the setting of networks of coupled ODEs there is a preferable class of observables, namely the output of the node variables (coordinate functions).
Network systems are distinguished from large systems by the ability to keep track of the output from each node individually. 
Now homeostasis can be naturally defined as the fact that the output of a node variable (the `output node') is held approximately constant as other variables (other nodes) vary (perhaps wildly) under variation of an input parameter that affects another node (the `input node').
Placing homeostasis in the general context of network dynamics leads naturally to the methods reviewed here.

A special kind of network of coupled ODE, with two distinguished nodes (one input node and one output node)
is called an \emph{input-output network} or \emph{single-input-single-output network}.
It was introduced in \cite{WHAG21} in their study of homeostasis on $3$-node networks.
Wang \etal~\cite{WHAG21} extended the notion of \emph{input-output network} to arbitrary large networks and developed a combinatorial theory for the classification of `homeostasis types' in such networks.
A \emph{homeostasis type} is essentially a `combinatorial mechanism' that causes homeostasis in the full network and is represented by specific subnetworks, called \emph{homeostasis subnetworks}.
The homeostasis subnetworks can be subdivided into two classes called \emph{structural} and \emph{appendage}.

The motivation for the term \emph{structural homeostasis} comes from \cite{RBGSN17}, where the authors identify the feedforward loop as one of the homeostatic motifs in $3$-node biochemical networks.
In this case, hemostasis is caused by a `balance' along the directed paths of feedforward loop, forming the 'structural backbone' of the input-output network.
The intuition behind the term \emph{appendage homeostasis} is that homeostasis is generated by a cycle of nodes attached to nodes in `structural backbone' of the input-output network.
Hence, the structural and appendage classes are abstract generalizations of the usual `feedforward' and `feedback' mechanisms. 
A striking outcome of \cite{WHAG21} is that they do not specify any homeostasis generating mechanisms at the outset and they find \emph{a posteriori} that there are essentially only the two types of homeostasis generating mechanisms: \emph{generalized feedback} (structural) and \emph{generalized feedforward} (appendage).

Currently, homeostasis subnetworks are recognized by manually computing certain combinatorial structures of the input-output network (see section \ref{sec:homeostasis-theory} for the definitions). 
However, this process would pose significant computational challenges in the case of large networks. 
As the network size increases, identifying these combinatorial structures becomes increasingly untractable. 

To address this issue, in this paper we present a Python-based algorithm that automates the identification of homeostasis subnetworks of an input-output network. The main contributions of this work are as follows:
\begin{itemize}
    \item We extend the existing theory developed for single-input-single-output network \cite{WHAG21} to more general input-output networks with multiple inputs or multiple input nodes.
    \item We develop a fully automated algorithm that identifies all homeostasis subnetworks and derives the corresponding homeostasis conditions directly from network topology.  
    \item We demonstrate the effectiveness of the approach through biologically motivated examples, including networks ranging from small to moderately large scale, with single or multiple inputs, single or multiple input nodes, as well as networks in which the input coincides with the output node. 
\end{itemize}

The remainder of our paper is organized as follows. In Section \ref{sec:singleinput}, we give the basic definitions of infinitesimal homeostasis, input-output networks, admissible system, and the homeostasis matrix, together with a number of combinatorial terms required for the theoretical characterization of homeostasis in an input-output network. These combinatorial structures enable the identification and classification of all the homeostasis subnetwork motifs based only on the network architecture, without performing numerical simulations on model equations. In Section \ref{sec:multipleinput}, we extend the theory developed for the single-input framework \cite{WHAG21} to more general input-output networks with multiple inputs or multiple input nodes. In Section \ref{S:algorithm-overview}, we describe the implementation and usage of our Python-based classification algorithm based on the theory developed in Section \ref{sec:homeostasis-theory}. We illustrate the workflow using a 12-node example network whose homeostatic structure has been fully analyzed in \cite{WHAG21}. In Section \ref{sec:application}, we apply the algorithm to several biological network examples ranging from small to relatively large networks. Finally, we conclude in Section \ref{sec:discussion}.

\section{Infinitesimal Homeostasis in Input-Output Networks}
\label{sec:homeostasis-theory}

\subsection{Single-Input Single-Output Networks}
\label{sec:singleinput}

We start with the case of single-input single-output input-output networks $\GG$, see \cite{GW20,WHAG21}.

The types of homeostasis that admissible systems of differential equations of $\GG$ can exhibit are characterized by the topology of so called homeostasis subnetworks.
Given an input-output network $\GG$ there is an algorithm for determining the homoestasis subnetworks of $\GG$.
In order to present the algorithm, we briefly review the basic definitions and results of \cite{WHAG21}.

As noted previously \cite{GS17,RBGSN17,GW20}, a straightforward application of Cramer's rule gives a formula for determining  infinitesimal homeostasis points.  See Lemma~\ref{L:det}.

An input-output network $\GG$ is a directed graph consisting of $n+2$ nodes. 
There is an {\em input node} $\iota$, an {\em output node} $o$, and $n$ {\em regulatory nodes} $\rho = (\rho_1,\ldots,\rho_n)$.
An {\em admissible system} associated with the input-output network $\GG$ is a parameterized system of ODEs
\begin{equation}\label{eq:ad io}
\dot{X} = F(X, \II)
\end{equation}
where $X = (x_\iota,x_\rho,x_o) \in \R^{n+2}$ are the node state variables, $\II\in\R$ is the {\em external input parameter}, and $F = (f_\iota,f_{\rho},f_o)$ is the associated vector field.
Explicitly, \eqref{eq:ad io} is the system 
\begin{equation} \label{eq: io}
\begin{split}
\dot{x}_\iota & = 
f_\iota(x_\iota,x_\rho, x_o,\II) \\
\dot{x}_\rho & =
f_\rho(x_\iota,x_\rho,x_o) \\
\dot{x}_o & = f_o(x_\iota,x_\rho,x_o)
\end{split}
\end{equation}
The compatibility of $F$ with the network $\GG$ is given by the following conditions:
\begin{enumerate}[label=(\alph*)]
\item $f_j$ depends on node $\ell$ only if there is an arrow in the network $\GG$ from $\ell\to j$. 
\item $f_\iota$ is the only vector field component that depends explicitly on $\II$ and $f_{\iota,\II}\neq 0$ generically.
\end{enumerate}
We write $f_{i,x_j}$ to denote the partial derivative of $f_i$ with respect to $j$ at $(X_0,\II_0)$.

\begin{remark} \rm \label{rmk:input=output}
In \cite{WHAG21} the authors explicitly exclude the possibility that the output node is one of the input nodes.
This assumption is included purely for the sake of convenience. 
In fact, all the definitions and results remain valid (including the algorithm for classification of homeostasis subnetworks) when the input node is the same as the output nodes (see \cite{antoneli2025a} and sub-section \ref{ssec:input_equal_output}).
In example \ref{ss:zinc_example} we apply our algorithm and discuss a network with input node equals the output node.
\END
\end{remark}

In order to define the notion of `infinitesimal homeostasis' in the context of input-output networks, assume that $X_0$ is a linearly stable equilibrium of \eqref{eq: io} at $\II=\II_0$.
Stability of $X_0$ implies that there is a unique stable equilibrium at $X(\II)=\big(x_\iota(\II),x_\rho(\II),x_o(\II)\big)$ as $\II$ varies on neighborhood of $\II_0$.

\begin{definition} \normalfont \label{def:inf_homeo}
The {\em input-output} function of system \eqref{eq: io}, at the family of equilibria $\big(X(\II),\II\big)$, is the function $\II\to x_o(\II)$, that is, the projection of $X(\II)$ onto the coordinate $x_o$.
We say that the input-output function $x_o(\II)$ exhibits \emph{infinitesimal homeostasis} at $\mathcal{I}_0$, if
\begin{equation} \label{def:inf_homeo_io}
x_o^{\prime}(\mathcal{I}_0) = 0
\end{equation}
where $'$ indicates differentiation with respect to $\II$.
We say that the input-output function $x_o(\II)$ exhibits \emph{perfect homeostasis} or \emph{perfect adaptation} if
$x_o^\prime \equiv 0$, that is, $x_o(\II)$ is constant.
\END
\end{definition}

We use the following notation.  
Let $J$ be the $(n+2)\times(n+2)$ Jacobian matrix of \eqref{eq: io} and let $\PH$ be the $(n+1)\times(n+1)$ \emph{homeostasis matrix} given by dropping the first row and the last column of $J$:
\begin{equation} \label{xo'_reduced2}
J = \Matrixc{ f_{\iota, x_\iota}   &  f_{\iota, x_\rho} & f_{\iota, x_o} \\
  f_{\rho, x_\iota}   &  f_{\rho, x_\rho} & f_{\rho, x_o} \\
  f_{o, x_\iota} &  f_{o, x_\rho} & f_{o, x_o} }
\quad 
H = \Matrixc{ f_{\rho, x_\iota}   &  f_{\rho, x_\rho}\\ f_{o, x_\iota} &  f_{o, x_\rho}}
\end{equation}
Here all partial derivatives $f_{\ell,x_j}$ are evaluated at the equilibrium $X_o$.

\begin{lemma} \label{L:det}
Let $(X_0,\II_0)$ be an asymptotically stable equilibrium of \eqref{eq: io}.  The input-output function $x_o(\II)$ satisfies
\begin{equation} \label{xo'_new}
x_o' = \pm\frac{f_{\iota, \II}}{\det(J)} \det(\PH)
\end{equation}
Hence, $\II_0$ is a point of infinitesimal homeostasis if and only if 
\begin{equation} \label{xo'_reduced}
\det(\PH)= 0
\end{equation}
at $(X_o, \II_0)$.
\end{lemma}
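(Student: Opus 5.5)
The plan is to differentiate the equilibrium condition implicitly and then solve for $x_o'$ with Cramer's rule. Asymptotic stability of $X_0$ means every eigenvalue of $J$ has negative real part, so $\det(J)\neq 0$. The implicit function theorem therefore gives a smooth family of equilibria $X(\II)$ near $\II_0$ with $F(X(\II),\II)\equiv 0$. Differentiating this identity with respect to $\II$ gives
\[
J\,X'(\II) = -F_{\II} = -\begin{bmatrix} f_{\iota,\II} \\ 0 \\ 0\end{bmatrix},
\]
where the right-hand side has a single nonzero entry because, by admissibility condition (b), only $f_\iota$ depends on $\II$. Here $X'=(x_\iota',x_\rho',x_o')^T$.

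Since $x_o'$ is the last component of $X'$, Cramer's rule gives
\[
x_o' = \frac{\det(J_o)}{\det(J)},
\]
where $J_o$ is $J$ with its last column replaced by $-F_{\II}$. That column has $-f_{\iota,\II}$ in the first row and zeros elsewhere. Expanding $\det(J_o)$ along this column leaves the single term
\[
(-1)^{1+(n+2)}\,(-f_{\iota,\II})\,\det(M),
\]
where $M$ is the minor obtained by deleting the first row and the last column of $J$. By \eqref{xo'_reduced2}, this minor is exactly $\PH$. Hence $x_o' = \pm f_{\iota,\II}\det(\PH)/\det(J)$, with the sign $(-1)^{n+2}$ (times $-1$) depending only on $n$. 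This is \eqref{xo'_new}.

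For the equivalence, $\det(J)\neq0$ by stability, and $f_{\iota,\II}\neq 0$ by the genericity assumption in condition (b). So $x_o'(\II_0)=0$ exactly when $\det(\PH)=0$ at $(X_0,\II_0)$.

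The only delicate point is bookkeeping: we must check that deleting the first row and last column of $J$ reproduces the block matrix $\PH$ in the stated node ordering $(\iota,\rho,o)$. That is immediate from the displayed block form. The nonvanishing of $f_{\iota,\II}$ has to be read as a genericity hypothesis, not something we prove.
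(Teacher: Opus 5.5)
Your argument is correct and follows the route the paper indicates when it calls the lemma ``a straightforward application of Cramer's rule'': you solve the implicitly differentiated equilibrium equation $J X' = -F_{\II}$ for $x_o'$, and expanding along the replaced last column leaves only the minor of $J$ with the first row and last column deleted, namely $\PH$, with overall sign $(-1)^n$. One small precision: $\det(J)\neq 0$ comes from the linear stability of $X_0$ that the text assumes just before Definition~\ref{def:inf_homeo}, not from asymptotic stability alone (for example, $\dot x=-x^3$ is asymptotically stable but has zero Jacobian).
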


Homeostasis in a given network $\GG$ can be determined by analyzing a simpler network that is obtained by eliminating certain nodes and arrows from $\GG$.  We call the network formed by the remaining nodes and arrows the {\em core subnetwork}.

\begin{definition} \rm \label{D:updown}
A node $\tau$ in a network $\GG$ is \emph{downstream} from a node $\rho$ in $\GG$ if there exists a path in $\GG$ from $\rho$ to $\tau$.  Node $\rho$ is \emph{upstream} from node $\tau$ if $\tau$ is downstream from $\rho$.
\END
\end{definition}

\begin{definition} \rm \label{D:core}
The input-output network $\GG$ is a {\em core network} if every node in $\GG$ is both upstream from the output node $o$ and downstream from the input node $\iota$. 
$\GG_c$ is the {\em core subnetwork} whose nodes are the nodes in $\GG$ that are both upstream from the output and downstream from the input and whose arrows are the arrows in $\GG$ whose head and tail nodes are both nodes in $\GG_c$.
\END
\end{definition}

Core networks always have that $\det(H) \neq 0$, as multivariate polynomial in $f_{\ell,x_j}$ (see \cite[App. B]{madeira2022homeostasis}).
In other words, in non-core networks $\det(H)$ might vanish identically.

The computation of infinitesimal homeostasis reduces to solving $\det(H)=0$, where $H$ is the homeostasis matrix associated with the core subnetwork of $\GG$. 
Applying Frobenius-K\"{o}nig theory \cite{BR91,S77} to $H$, \cite{WHAG21} shows that there is a unique factorization 
$\det(H) = \det(B_1)\cdots \det(B_m)$.
Each block submatrix $B_\eta$ is an irreducible component of $H$, with $\det(B_\eta)=0$ and $\det(B_\chi)\neq 0$ for all $\chi\neq \eta$ being a defining condition for infinitesimal homeostasis.
This is generic when there is only one input parameter.
Furthermore, one can associate a homeostasis subnetwork $\KK_\eta$ of $\GG$ with each $B_\eta$.

\begin{definition} \rm \label{D:simple_complementary}
Let $\GG$ be a core input-output network.
\begin{enumerate}[label=(\alph*)]
\item A directed path connecting nodes $\rho$ and $\tau$ is called a \emph{simple path} if it visits each node on the path at most once.  
\item An {\em $\iota o$-simple path} is a simple path connecting the input node $\iota$ to the output node $o$.
\item A node in $\GG$ is {\em simple} if the node lies on an $\iota o$-simple path and {\em appendage} if the node is not simple.
\item A {\em super-simple} node is a simple node that lies on every $\iota o$-simple path.
\end{enumerate}
Nodes $\iota$ and $o$ are super-simple since by definition these nodes are on every $\iota o$-simple path.
\END
\end{definition}

Observe that super-simple nodes are well ordered (by downstream ordering) and hence adjacent super-simple pairs of nodes can be identified. 

\paragraph{Appendage Homeostasis}
\label{ss:prop-appendage}

If $B_\eta$ corresponds to an irreducible appendage block, then the associated homeostasis subnetwork $\KK_\eta$ consists of only appendage nodes and its Jacobian matrix is given by $B_\eta$. 
Combinatorial characterization of appendage homeostasis networks requires the following definitions.

\begin{definition} \rm \label{D:appendage_terms} 
Let $\GG$ be a core input-output network.
\begin{enumerate}[label=(\alph*)]
\item The {\em appendage subnetwork} $\AA_\GG$ of $\GG$ is the subnetwork consisting of all appendage nodes and all arrows in $\GG$ connecting appendage nodes. 
\item The {\em complementary subnetwork} of an $\iota o$-simple path $S$ is the subnetwork $\CC_S$ consisting of all nodes not on $S$ and all arrows in $\GG$ connecting those nodes.
\item Nodes $\rho_i,\rho_j$ in $\AA_\GG$ are {\em path equivalent} if there exists paths in $\AA_\GG$ from $\rho_i$ to $\rho_j$ and from $\rho_j$ to $\rho_i$.  An {\em appendage path component} is a path equivalence class in $\AA_\GG$.
\item A {\em cycle} is a path whose first and last nodes are identical.
\item Let $\KK\subset \AA_\GG$ be a subnetwork.  We say that $\KK$ satisfies the {\em no cycle condition} if for every $\iota o$-simple path $S$, nodes in $\KK$ do not form a cycle with nodes in $\CC_S\setminus\KK$. \END
\end{enumerate}
\end{definition}


\begin{remark} \rm \label{R:AiBi}
Nodes in the appendage subnetwork $\AA_\GG$ can be written uniquely as the disjoint union 
\begin{equation} \label{e:AiBi}
\AA_\GG = (\AA_1\dot{\cup}\cdots\dot{\cup}\AA_s)\; \dot{\cup}\;(\BB_1\dot{\cup}\cdots\dot{\cup}\BB_t)
\end{equation}
where each $\AA_i$ is an appendage path component that satisfies the no cycle condition and each $\BB_i$ is an appendage path component that violates the no cycle condition. Moreover, each $\AA_i$ (resp. $\BB_i$) can be viewed as a subnetwork of $\AA_\GG$ by including the arrows in $\AA_\GG$ that connect nodes in $\AA_i$ (resp. $\BB_i$).  We call $\AA_i$ a {\em no cycle appendage path component} and $\BB_i$ a {\em cycle appendage path component}.
\END
\end{remark}

\paragraph{Structural Homeostasis}
\label{ss:prop-structural}

If $B_\eta$ corresponds to an irreducible structural block, then $\KK_\eta$ has two adjacent super-simple nodes and these super-simple nodes are the input node $\ell$ and the output node $j$ in $\KK_\eta$.  
In addition, the network $\KK_\eta$ contains no backward arrows. 
That is, no arrows of $\KK_\eta$ go into the input node $\ell$ nor out of the output node $j$.

\begin{definition} \rm \label{D:structure_net}
The {\em structural subnetwork} $\sS_\GG$ of $\GG$ is the subnetwork whose nodes are either simple or in a cycle appendage path component $\BB_i$ (see Remark~\ref{R:AiBi}) and whose arrows are arrows in $\GG$ that connect nodes in $\sS_\GG$.
\END
\end{definition}

All structural homeostasis subnetworks are contained in $\sS_\GG$, which is an input-output network. That is,  $\GG$ and $\sS_\GG$ have the same simple, super-simple, input, and output  nodes.
Moreover, every non-super-simple simple node lies between two adjacent super-simple nodes.

\begin{definition}\rm \label{D:supsimpnet}
Let $\rho_1, \rho_2$ be adjacent super-simple nodes.
\begin{enumerate}[label=(\alph*)]
\item A simple node $\rho$ is {\em between} $\rho_1$ and $\rho_2$ if there exists an $\iota o$-simple path that includes $\rho_1$ to $\rho$  to $\rho_2$ in that order. 
\item The {\em super-simple subnetwork}, denoted $\LL(\rho_1,\rho_2)$, is the subnetwork whose nodes are simple nodes between $\rho_1$ and $\rho_2$ and whose arrows are arrows of $\GG$ connecting nodes in $\LL(\rho_1,\rho_2)$. \END
\end{enumerate}
\end{definition}

It follows that all $\LL(\rho_1,\rho_2)$ are contained in $\sS_\GG$ and each appendage node in $\sS_\GG$ connects to exactly one $\LL$. 
Hence, a super-simple subnetwork $\LL\subset \sS_\GG$ can be expanded to a super-simple structural subnetwork $\LL'\subset \sS_\GG$ as follows.

\begin{definition} \rm \label{def:L'}
Let $\rho_1$ and $\rho_2$ be adjacent super-simple nodes in $\GG$.  The \textit{super-simple structural subnetwork} $\LL'(\rho_1,\rho_2)$ is the input-output subnetwork consisting of nodes in $\LL(\rho_1,\rho_2)\cup \BB$ where $\BB$ consists of all appendage nodes that form cycles with nodes in $\LL(\rho_1,\rho_2)$; that is, all cycle appendage path components that connect to $\LL(\rho_1,\rho_2)$.   Arrows of $\LL'(\rho_1,\rho_2)$ are arrows of $\GG$ that connect nodes in $\LL'(\rho_1,\rho_2)$.  
Thus, $\rho_1$ is the input node and $\rho_2$ is the output node of $\LL'(\rho_1,\rho_2)$.
\END
\end{definition}

\subsubsection{Networks with Input = Output}
\label{ssec:input_equal_output}

Antoneli \etal~\cite{andrade2022,antoneli2025a,antoneli2026} provides a detailed discussion of the infinitesimal homeostasis formalism in the case where the input and the output nodes are the same. 
We briefly recall the main points in the following.

When the network $\mathcal{G}$ has the same node as the input and output nodes the corresponding variables coincide $x_\iota=x_o$ and we call the network an \emph{input=output network} and $\iota$ the input $=$ output node.

In this case the vector of state variable is $X=(x_{\iota},x_{\rho})\in\mathbb{R}\times\mathbb{R}^N$ and the system of ODE's \eqref{eq:ad io} becomes
\begin{equation}  \label{admissible_systems_ODE_IO}
\begin{aligned}
\dot{x}_{\iota} & = f_{\iota}(x_{\iota}, x_{\rho}, \mathcal{I}) \\
\dot{x}_{\rho} & = f_{\rho}(x_{\iota}, x_{\rho})\\
\end{aligned}
\end{equation}
Let $J$ be the $(N+1)\times (N+1)$ Jacobian matrix of an admissible vector field $F=(f_{\iota},f_{\sigma})$, that is,
\begin{equation} \label{jacobian_general}
J = \begin{pmatrix}
  f_{\iota, x_{\iota}}   &  f_{\iota, x_\rho} \\
  f_{\rho, x_{\iota}}   &  f_{\rho, x_\rho} 
\end{pmatrix}
\end{equation}
Now, the $N\times N$ \emph{homeostasis matrix} $H$ is obtained from $J$ by removing the \emph{first row} and the \emph{first column}:
\begin{equation}
\label{homeostasis_matrix_definition}
H = 
\begin{pmatrix}
f_{\rho, x_\rho}
\end{pmatrix}
\end{equation}
In both \eqref{jacobian_general} and \eqref{homeostasis_matrix_definition} partial derivatives $f_{\ell,x_j}$ are evaluated at the equilibrium $\big(X(\mathcal{I}),\mathcal{I}\big)$.

The main difference between the homeostasis matrix \eqref{homeostasis_matrix_definition} and the homeostasis matrix of network with distinct input and output nodes is that the former contains only the partial derivatives associated with the regulatory nodes, while the latter contains partial derivatives involving the input and the output nodes, as well.
In fact, the matrix $H$ in eq. \eqref{homeostasis_matrix_definition} is the Jacobian matrix of the subnetwork generated by the regulatory nodes.

As we have seen, the classification of homeostasis subnetworks stats with the reduction the the core subnetwork. 
The same definition of core network and core-equivalence apply to an input $=$ output network.
However, in the input $=$ output case, the condition that a node $\rho$ that is both upstream from the output node and downstream from the input node takes a special form.

\begin{lemma}
Let $\mathcal{G}$ be an input $=$ output network, with input $=$ output node $\iota$.
A regulatory node $\rho$ belongs to the core subnetwork $\mathcal{G}_c$ if and only if $\rho$ belongs to a cycle that contains the input $=$ output node $\iota$.
\end{lemma}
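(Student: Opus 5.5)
The proof is a direct unwinding of Definition~\ref{D:core} in the special case $o=\iota$, using Definition~\ref{D:updown} for ``upstream'' and ``downstream''. By definition, a regulatory node $\rho$ lies in $\GG_c$ exactly when it is downstream from the input node and upstream from the output node. Since both are the same node $\iota$, this means there is a directed path $P_1$ from $\iota$ to $\rho$ and a directed path $P_2$ from $\rho$ to $\iota$. I will prove the two implications separately.

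($\Leftarrow$) Suppose $\rho$ lies on a cycle $C$ through $\iota$. Traverse $C$ starting at $\iota$. The arc of $C$ from $\iota$ to its first visit of $\rho$ is a path from $\iota$ to $\rho$, so $\rho$ is downstream from $\iota$. The arc from that visit of $\rho$ back to the closing occurrence of $\iota$ is a path from $\rho$ to $\iota$, so $\rho$ is upstream from $\iota=o$. Hence $\rho\in\GG_c$.

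($\Rightarrow$) Suppose paths $P_1\colon \iota\to\rho$ and $P_2\colon \rho\to\iota$ exist. Their concatenation $P_1P_2$ starts and ends at $\iota$ and passes through $\rho$. By Definition~\ref{D:appendage_terms}(d), a cycle is just a path whose first and last nodes coincide, with no simplicity required, so $P_1P_2$ is already a cycle containing both $\iota$ and $\rho$.

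The only point needing care is whether ``cycle'' is meant to be simple, i.e.\ with no repeated nodes other than the endpoints. If so, I would first shorten $P_1$ and $P_2$ to simple paths by deleting any closed sub-walk between repeated visits of a node. The two simplified paths may still share internal nodes, and then $P_1P_2$ is not simple; a genuine simple cycle through both $\iota$ and $\rho$ need not exist in general. I would therefore keep the paper's definition of cycle as a closed path, possibly with repeated nodes. Under that reading the lemma is exactly the reformulation above, and no step presents a real obstacle.
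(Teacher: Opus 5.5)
Your proof is correct and is essentially the paper's own argument: the paper's proof is the single observation that a directed path starting and ending at the same node is a cycle in the sense of Definition~\ref{D:appendage_terms}(d), and you simply spell out both directions of the equivalence. Your caveat is also right. The lemma depends on cycles being allowed to repeat nodes: with arrows $\iota\to a$, $a\to\rho$, $\rho\to a$, $a\to\iota$, the node $\rho$ is in the core, but no simple cycle contains both $\iota$ and $\rho$.
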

\begin{proof}
A directed path starting and ending at the same node is a cycle.
\end{proof}

\begin{corollary}
An input $=$ output network $\mathcal{G}$ is a core network if and only if every regulatory node of $\mathcal{G}$ belongs to a cycle that contains the input $=$ output node.
\end{corollary}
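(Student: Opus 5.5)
The corollary follows by combining the definition of core network (Definition~\ref{D:core}) with the preceding lemma. Applied to an input $=$ output network with input $=$ output node $\iota$, Definition~\ref{D:core} says that $\GG$ is a core network if and only if every node of $\GG$ is both downstream from $\iota$ and upstream from $\iota$. I will therefore check this condition node by node, treating the input $=$ output node $\iota$ itself separately from the regulatory nodes $\rho$.

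For $\iota$ itself, the condition holds trivially if we adopt the convention, implicit in the statement that ``$\iota$ and $o$ are super-simple'' and in Definition~\ref{D:core}, that the trivial path of length zero counts as a path. Then $\iota$ is upstream and downstream of itself, and $\iota$ always belongs to $\GG_c$. So $\GG = \GG_c$ holds exactly when every regulatory node lies in $\GG_c$.

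For a regulatory node $\rho$, I invoke the preceding lemma: $\rho \in \GG_c$ if and only if $\rho$ belongs to a cycle through $\iota$. To make this explicit, suppose there are paths $\iota \to \rho$ and $\rho \to \iota$. Concatenating them gives a closed path starting and ending at $\iota$ and passing through $\rho$, which is a cycle in the sense of Definition~\ref{D:appendage_terms}(d). Conversely, a cycle containing both $\iota$ and $\rho$ splits at those two nodes into a path $\iota \to \rho$ and a path $\rho \to \iota$. Quantifying over all regulatory nodes then gives the equivalence claimed in the corollary.

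The only point needing care is the convention for the node $\iota$ itself, namely whether it must lie on a nontrivial cycle. The corollary's statement avoids this issue because it only quantifies over regulatory nodes. I will state explicitly that $\iota$ is counted in $\GG_c$ by convention, as in the single-input case, where $\iota$ and $o$ are always core nodes. Apart from this, the argument is a direct restatement of the preceding lemma.
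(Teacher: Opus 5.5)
Your proposal is correct and follows the paper's route. The paper gives no separate proof: the corollary is an immediate consequence of the preceding lemma (whose one-line proof is exactly your concatenation/splitting observation) together with Definition~\ref{D:core}, with $\iota$ itself lying in $\GG_c$ trivially. Your concatenation step is valid precisely because Definition~\ref{D:appendage_terms}(d) allows a cycle to be any closed path, not necessarily a simple one.
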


In a core network $\mathcal{G}$ where the input node is distinct from the output node, a simple node is always downstream from the input node and upstream from the output node, but not `the other way around'. 
That is, if a node is \emph{downstream} the \emph{output node} and/or \emph{upstream} the \emph{input node} then it must be an \emph{appendage node}. 
Indeed, if $\mathcal{G}$ is a core network then every node is downstream from the input node and upstream from the output node. 
Then a node that satisfies `the other way around' condition above must be on an $\iota o$-path that cycles around the input and/or the output node. 
We note the not every appendage node is of this type, that is, it may be on an $\iota o$-path that cycles around other regulatory nodes.
Hence, we have the following.

\begin{lemma} \label{THM:APPENDAGE}
Let $\mathcal{G}$ be a core input $=$ output network.
Then every node of $\mathcal{G}$ is appendage.
\end{lemma}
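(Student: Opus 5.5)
The plan is to argue directly from Definition~\ref{D:simple_complementary}, specialized to the case $\iota = o$. In an input $=$ output network an $\iota o$-simple path is a simple path from $\iota$ to $\iota$. I will read ``simple'' in the sense used throughout: a path that visits each node at most once. The endpoints of an $\iota o$-path are the input and output nodes, and here these coincide. So a path that visits each node at most once and starts and ends at $\iota$ can only be the trivial path consisting of the single node $\iota$, with no arrows.

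Any nontrivial path from $\iota$ back to $\iota$ is a cycle and visits $\iota$ twice. Hence it is not simple.

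\textbf{Step 1.} Identify the set of $\iota o$-simple paths as $\{\,(\iota)\,\}$, the trivial path. This is where I must settle how to interpret Definition~\ref{D:simple_complementary}(a) when the endpoints coincide. I expect this to be the main subtle point. The convention I take is the one consistent with the paper's remark that $\iota$ and $o$ are always super-simple: the unique $\iota o$-simple path is the length-zero path at $\iota$. I will add a sentence justifying this reading. If one instead allowed the endpoint to repeat (treating a simple cycle through $\iota$ as an $\iota o$-simple path), the statement would fail, since regulatory nodes on such a cycle would become simple. So the proof must explicitly exclude that convention, and I will point to the input $=$ output framework of \cite{antoneli2025a}, which the paper follows.

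\textbf{Step 2.} Conclude that no regulatory node $\rho \neq \iota$ lies on an $\iota o$-simple path. By Definition~\ref{D:simple_complementary}(c), every such $\rho$ is therefore appendage. This matches the informal discussion preceding the lemma. In a core input $=$ output network every regulatory node lies on a cycle through $\iota$, by the corollary above. Such a node sits on an $\iota o$-path that ``cycles around'' the input/output node, which is exactly the situation the paper flags as producing appendage nodes.

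\textbf{Step 3.} Interpret ``every node'' as every regulatory node. The input $=$ output node itself is super-simple by convention and plays the role of both $\iota$ and $o$. Consistently with the homeostasis matrix \eqref{homeostasis_matrix_definition}, which involves only $f_{\rho,x_\rho}$, the nodes at stake are the regulatory ones. Beyond fixing the convention in Step 1, I do not foresee any computational obstacle.
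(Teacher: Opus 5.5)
Your proof is correct, but it takes a different and more direct route than the paper's.

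The paper's proof relies on the core hypothesis. Every regulatory node lies on a cycle through the input $=$ output node, so it is downstream of $o$ and upstream of $\iota$. The paper then appeals to its preceding informal remark that such nodes must be appendage.

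You argue straight from Definition~\ref{D:simple_complementary}. Because the endpoints coincide, any $\iota o$-path through a regulatory node $\rho\neq\iota$ has positive length and visits $\iota$ twice. Such a path is not simple, so no regulatory node is simple.

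Your route gives a self-contained argument that does not actually use the core hypothesis; that hypothesis is needed only so the definitions apply. It also avoids the upstream/downstream remark, which is not valid as a general principle when $\iota\neq o$. In the 12-node example, $\rho_1$ is upstream of $\iota$ via $\rho_1\to\tau_1\to\iota$, and $\rho_6$ is downstream of $o$ via $o\to\tau_4\to\rho_6$, yet both are simple. In the input $=$ output case that remark is rescued exactly by your observation. What the paper's phrasing offers is the link to the intuition of $\iota o$-paths that ``cycle around'' the input/output node.

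Your point that the lemma would fail if simple cycles through $\iota$ were admitted as $\iota o$-simple paths is correct, and the paper leaves it implicit.

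The convention you worry about in Step~1 is not needed for the regulatory nodes. Your argument works whether or not the length-zero path $(\iota)$ counts; that choice only decides the status of $\iota$ itself. The paper's zinc example reports no simple or super-simple nodes at all, so the implemented convention differs from yours on $\iota$. This is immaterial, since Theorem~\ref{THM:APPENDAGE_NET} uses the lemma only to identify the appendage nodes with the regulatory nodes.
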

\begin{proof}
Since every node in a core input $=$ output network forms a cycle with the input $=$ output node it follows that every node is \emph{downstream} the \emph{output node} and \emph{upstream} the \emph{input node}.
Thus, every node is appendage.
\end{proof}

\begin{corollary}
Structural homeostasis does not exist in input $=$ output networks.
\end{corollary}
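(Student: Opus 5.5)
The corollary follows from Lemma~\ref{THM:APPENDAGE} combined with the structural characterization of homeostasis blocks. First I reduce to the core subnetwork $\GG_c$, which is again an input $=$ output network with the same input $=$ output node $\iota$. By the discussion preceding Definition~\ref{D:simple_complementary}, $\det(H)$ is determined by the homeostasis matrix of $\GG_c$, so all homeostasis subnetworks live in $\GG_c$. It is therefore enough to show that $\GG_c$ has no irreducible structural block.

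Next I apply Lemma~\ref{THM:APPENDAGE} to $\GG_c$: every regulatory node of the core input $=$ output network is appendage. The only simple node is then $\iota$ itself, since the unique $\iota o$-simple path is the trivial path consisting of $\iota$ alone. Consequently the set of super-simple nodes is the singleton $\{\iota\}$, and there is no pair of \emph{distinct} adjacent super-simple nodes $\rho_1,\rho_2$.

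Now I recall what a structural homeostasis subnetwork requires. By the paragraph on structural homeostasis, each irreducible structural block $B_\eta$ has an associated $\KK_\eta=\LL'(\rho_1,\rho_2)$ determined by two adjacent super-simple nodes serving as its input and output. Since no such pair exists, no structural block can arise. I would confirm this directly at the level of $H$, which is the cleaner route. Here $H=(f_{\rho,x_\rho})$ is exactly the Jacobian of the regulatory subnetwork, and it has no rows or columns indexed by input or output partials. Its Frobenius--K\"onig irreducible components are therefore the Jacobians of the strongly connected (path-equivalence) components of the regulatory nodes. Each such component has diagonal self-coupling entries $f_{\rho,x_\rho}$, which is precisely the appendage-type block form. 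Structural blocks in the $\iota\neq o$ theory are the ones built from off-diagonal entries $f_{\rho,x_\iota}$ and $f_{o,x_\rho}$ along simple paths, and these do not occur here.

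The main obstacle is the formal mismatch: the simple/super-simple definitions were set up for $\iota\neq o$, so it is not obvious what an ``$\iota o$-simple path'' means when $\iota=o$. I would handle this by taking the matrix-level argument as primary. Since $H$ is a block-triangularizable Jacobian of the regulatory subnetwork, every irreducible factor $\det(B_\eta)$ is the determinant of the Jacobian of a path component. By the paper's classification, this is the defining feature of appendage homeostasis. The combinatorial statement then serves only as a consistency check via Lemma~\ref{THM:APPENDAGE}.
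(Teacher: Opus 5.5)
Your proposal is correct. Your combinatorial ``consistency check'' is essentially the paper's own argument. The corollary is read off directly from Lemma~\ref{THM:APPENDAGE}: if every node is appendage, there is no pair of adjacent super-simple nodes, hence no super-simple structural subnetwork $\LL'(\rho_1,\rho_2)$ and no structural block. (There is a small convention difference. The paper, and its algorithm in the zinc example, counts no $\iota o$-simple paths at all, so even $\iota$ is treated as appendage; you instead admit the trivial path and make $\iota$ the lone super-simple node. Either convention leaves no pair of distinct adjacent super-simple nodes, so the conclusion is unaffected.)

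Your primary, matrix-level route is genuinely different. It is really the content of Theorem~\ref{THM:APPENDAGE_NET}, which the paper obtains only by citing \cite[Thm.~7.1, Thm.~5.4]{WHAG21}. The paper's route is a one-liner that matches what the software computes, but it leans on simple-path definitions that degenerate when $\iota=o$, which is exactly the mismatch you flag. Your route avoids that and the no-cycle bookkeeping, and it yields the factorization $\det(H)=\prod_k\det(J(\mathcal{A}_k))$ at the same time.

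To make your route airtight, state two facts explicitly. First, identifying the Frobenius--K\"onig components with Jacobians of strongly connected components uses that the self-couplings $f_{\rho_j,x_{\rho_j}}$ are generically nonzero. This matters because, for a matrix with nonzero diagonal, irreducibility and full indecomposability coincide. Second, replace your informal description of structural blocks as ``built from off-diagonal entries along simple paths'' with the criterion of \cite{WHAG21}: an irreducible $k\times k$ block contains either $k$ self-couplings (appendage) or $k-1$ (structural). Each block here is the Jacobian of a path component, so it contains $k$ self-couplings and is therefore appendage, which is the desired conclusion.
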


The next theorem summarizes the classification of homeostasis subnetworks of an input $=$ output network.

\begin{theorem} \label{THM:APPENDAGE_NET}
If $\mathcal{G}$ is an input $=$ output network then the appendage subnetwork of $\mathcal{G}$ is exactly the subnetwork of $\mathcal{G}$ generated by all regulatory nodes.
Moreover, the irreducible factors of $\det(H)$ correspond to the appendage path components $\mathcal{A}_k$ and are given by $\det(J({\mathcal{A}_k}))$. 
Here $J({\mathcal{A}_k})$ is the jacobian matrix associated with the network $\mathcal{A}_k$.
\end{theorem}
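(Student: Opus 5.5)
The plan has two parts: first identify the appendage subnetwork, then factor $\det(H)$ by a block-triangular decomposition of $H$ over strongly connected components.

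\emph{Step 1 (appendage subnetwork).} In an input $=$ output network the only $\iota o$-simple path is the trivial path consisting of the single node $\iota$, since a simple path visits $\iota$ at most once. Hence the only simple (and super-simple) node is $\iota$, and every regulatory node is appendage. This agrees with Lemma~\ref{THM:APPENDAGE} in the core case, but the argument does not need coreness. By Definition~\ref{D:appendage_terms}(a), $\AA_\GG$ is then the subnetwork spanned by the regulatory nodes $\rho$, with all arrows of $\GG$ among them. For the same reason the complementary subnetwork $\CC_S$ of the unique path $S=\{\iota\}$ is also this regulatory subnetwork. Consequently every appendage path component, being a path equivalence class, satisfies the no cycle condition: any cycle inside $\CC_S\setminus\KK$ that passes through $\KK$ would merge nodes outside $\KK$ into the same equivalence class. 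So every component is of type $\AA_k$ in Remark~\ref{R:AiBi}.

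\emph{Step 2 (factorization).} By \eqref{homeostasis_matrix_definition}, $H=f_{\rho,x_\rho}$ is exactly the Jacobian matrix of the regulatory subnetwork $\AA_\GG$. The appendage path components are precisely the strongly connected components of the digraph $\AA_\GG$. The condensation of $\AA_\GG$ (contract each component to a point) is acyclic, so I will order the components $\AA_1,\dots,\AA_s$ topologically and permute rows and columns of $H$ simultaneously in that order. After this permutation $H$ is block upper (or lower) triangular, with diagonal blocks $J(\AA_k)$ and zero blocks wherever the condensation has no arrow. Admissibility condition (a) guarantees that $f_{j,x_\ell}\equiv 0$ when there is no arrow $\ell\to j$. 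A simultaneous permutation does not change the determinant, so $\det(H)=\prod_k \det(J(\AA_k))$.

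\emph{Step 3 (irreducibility).} It remains to show that each $\det(J(\AA_k))$ is irreducible as a polynomial in the independent indeterminates $f_{\ell,x_j}$. This is the step I expect to be the main obstacle. I would invoke the Frobenius--K\"onig theory used in \cite{WHAG21}: the determinant of a matrix whose pattern is fully indecomposable is an irreducible polynomial in its nonzero entries. The diagonal of $J(\AA_k)$ consists of the self-couplings $f_{\rho,x_\rho}$, which are generically nonzero, and the off-diagonal pattern is irreducible because $\AA_k$ is strongly connected. A square pattern that has a nonzero diagonal and is irreducible is fully indecomposable. Therefore $\det(J(\AA_k))$ is irreducible and is not identically zero, since it contains the diagonal monomial. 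Uniqueness of the factorization then identifies the irreducible factors of $\det(H)$ with these blocks. One caveat: if the network is not core, one should first pass to $\GG_c$, where $\det(H)\not\equiv 0$ by \cite[App.~B]{madeira2022homeostasis}. The argument above, however, already shows directly that $\det(H)\not\equiv 0$ whenever all self-couplings are present.
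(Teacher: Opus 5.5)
Your proof is correct, but it takes a different route from the paper.

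The paper's proof is a pure reduction to earlier results. It gets $\AA_\GG=$ regulatory subnetwork from Lemma~\ref{THM:APPENDAGE}, which assumes $\GG$ is core. It declares the no cycle condition vacuous because there are no simple nodes. It then cites \cite[Thm.~7.1]{WHAG21} for the fact that the irreducible blocks are the appendage path components, and \cite[Thm.~5.4]{WHAG21} for their form $\det(J(\AA_k))$. This relies on Remark~\ref{rmk:input=output}, which says that these theorems, proved for input $\neq$ output, still apply.

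You instead use a feature peculiar to this case: $H=f_{\rho,x_\rho}$ is a principal submatrix of $J$. Ordering the strongly connected components of $\AA_\GG$ topologically therefore makes $H$ block triangular. Frobenius's theorem then finishes the job: a nonzero diagonal plus an irreducible pattern gives full indecomposability, hence an irreducible determinant in independent indeterminates.

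What your version buys:
\begin{itemize}
\item It is self-contained and avoids transferring the input $\neq$ output classification, whose structural part plays no role here.
\item It identifies the appendage nodes without needing coreness.
\item It gives an actual argument for the no cycle condition. This uses your convention that the trivial path $\{\iota\}$ is the unique $\iota o$-simple path. That convention also cleanly keeps $\iota$ out of $\AA_\GG$, as the theorem's first sentence requires; Lemma~\ref{THM:APPENDAGE}, read literally, would make $\iota$ appendage too.
\item It shows $\det(H)\not\equiv 0$ directly.
\end{itemize}
The paper's version is shorter and presents the result as the degenerate case of the general classification, with no structural blocks.

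One detail you should state explicitly: the factors $\det(J(\AA_k))$ for different $k$ involve disjoint sets of indeterminates, so they are pairwise non-associate. That is what makes the correspondence between irreducible factors of $\det(H)$ and appendage path components one-to-one.
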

\begin{proof}
By lemma \ref{THM:APPENDAGE} it follows that the set of nodes of $\mathcal{A}_\mathcal{G}$ is exactly the set of regulatory nodes. Hence they generate the same network.
The, it follows from \cite[Thm. 7.1]{WHAG21}, that all irreducible blocks of the homeostasis matrix correspond to the appendage path components. Moreover, since there are no simple nodes the ``no cycle condition'' is trivially satisfied by all appendage path components.
Finally, it follows from \cite[Thm. 5.4]{WHAG21} that the corresponding irreducible factor of $\det(H)$ is of the form $\det(J({\mathcal{A}_k}))$.
\end{proof}

\subsubsection{Algorithm for the Classification of Homeostasis Subnetworks} 
\label{sec:algorithm}

Here is the algorithm for enumerating subnetworks corresponding to the $m$ homeostasis blocks (see \cite{WHAG21}).

\paragraph{Step 1:} Determining the appendage homeostasis subnetworks from $\AA_\GG$.  
Let 
\begin{equation} \label{e:AHB}
\AA_1\;,\ldots,\;\AA_s
\end{equation}
be the no cycle appendage path components of $\AA_\GG$ (see Remark~\ref{R:AiBi}). 
These appendage path components are the subnetworks $\KK_\eta$ that correspond to appendage homeostasis blocks. 
In addition, there are $s$ independent defining conditions for appendage homeostasis given by the determinants of the Jacobian matrices $\det(J_{\AA_i}) = 0$ for $i = 1,\ldots, s$.

\paragraph{Step 2:} Determining the structural homeostasis subnetworks from $\sS_\GG$. 
Let $\iota = \rho_1 > \rho_2 > \ldots > \rho_{q+1} = o$ be the super-simple nodes in $\sS_\GG$ in downstream order.
Then, up to core equivalence, the $q$ super-simple structural subnetworks 
\begin{equation} \label{e:SHB}
\LL'(\iota,\rho_2),\; \LL'(\rho_2,\rho_3) \;,\ldots,\; \LL'(\rho_{q-1},\rho_q),\; \LL'(\rho_q,o)
\end{equation}
are the subnetworks $\KK_\eta$ that correspond to structural homeostasis blocks. 
In addition, there are  $q$ defining conditions for structural homeostasis blocks given by the determinants of the homeostasis matrices of the input-output networks: $\det\big(H(\LL'(\rho_i,\rho_{i+1}))\big) = 0$ for $i = 1,\ldots, q$.

Therefore, the $m = s + q$ subnetworks listed in \eqref{e:AHB} and \eqref{e:SHB} enumerate the appendage and structural homeostasis subnetworks in $\GG$.

\subsection{Generalization to Multiple-Input Single-Output Networks}
\label{sec:multipleinput}

Here we explain how to extend the theory developed for single-input framework (algorithm \ref{sec:algorithm}) to more general input-output networks, by reducing to the single-input single-output case.

The first step is to consider 
single-input single-output input-output network with multiple input nodes, but a single input parameter.
Then we can deal with the general case of multiple-input single-output input-output networks, that is, input-output network with multiple input nodes and multiple input parameters.

\subsubsection{Single-Input Single-Output Networks with Multiple Input Nodes}
\label{sec:siso_mi}

In \cite{madeira2022homeostasis}, the authors extend the approach of \cite{WHAG21} to deal with the multiple input node case.
They consider the influence of the input parameter on each input node separately by constructing a distinguished set of subnetworks with a single input node.
This strategy effectively reduces the problem to the single input node case.
Then they show how to put together the classification of homeostasis subnteworks of the the networks in the distinguished set of subnetworks with a single input node.

Here, we will present an alternative approach to the classification of the homeostasis subnetworks of networks with multiple input nodes that allows us to directly apply algorithm \ref{sec:algorithm}.
Let us briefly recall some definitions and results from \cite{madeira2022homeostasis} before introducing the alternative approach.

A \emph{multiple input-node input-output network} is a network $\mathcal{G}$ with $n$ distinguished \emph{input nodes} $\iota=\{\iota_{1}, \ldots, \iota_{n}\}$, all of them associated to the same input parameter $\mathcal{I}$, one distinguished \emph{output node} $o$, and $N$ \emph{regulatory nodes} $\rho=\{\rho_1,\ldots,\rho_N\}$.
The associated network systems of differential equations have the form
\begin{equation} \label{admissible_systems_ODE_multiple_input_nodes}
\begin{aligned}
\dot{x}_{\iota} & = f_{\iota}(x_{\iota}, x_{\rho}, x_{o}, \mathcal{I}) \\
\dot{x}_{\rho} & = f_{\rho}(x_{\iota}, x_{\rho}, x_{o})\\
\dot{x}_{o} & = f_{o}(x_{\iota}, x_{\rho}, x_{o})
\end{aligned}
\end{equation}
where $\mathcal{I}\in\mathbb{R}$ is an \emph{external input parameter} and $X=(x_{\iota},x_{\rho},x_o)\in\mathbb{R}^n\times\mathbb{R}^N\times\mathbb{R}$ is the vector of state variables associated to the network nodes.
Finally, assume that $\GG$ satisfies the following condition: the output node is downstream from all input nodes.

We write a vector field associated with the system \eqref{admissible_systems_ODE_multiple_input_nodes} as
\[
F(X,\mathcal{I})=(f_{\iota}(X,\mathcal{I}),f_\rho(X),f_o(X))
\]
and call it an \emph{admissible vector field} for the network $\mathcal{G}$.

Let $f_{j,x_\ell}$ denote the partial derivative of the $j^{th}$ node function $f_j$ with respect to the $\ell^{th}$ node variable $x_\ell$. We make the following assumptions about the vector field $F$ throughout:
\begin{enumerate}[label=(\alph*)]
\item The vector field $F$ is smooth and has an asymptotically stable equilibrium at $(X^*,\mathcal{I}^*)$. Therefore, by the implicit function theorem, there is a function $\tilde{X}(\mathcal{I})$ defined in a neighborhood of $\mathcal{I}^*$ such that $\tilde{X}(\mathcal{I}^*) = X^*$ and $F(\tilde{X}(\mathcal{I}), \mathcal{I}) \equiv 0$. 
\item The partial derivative $f_{j,x_\ell}$ can be non-zero only if the network $\mathcal{G}$ has an arrow $\ell\to j$, otherwise $f_{j,x_\ell} \equiv 0$.
\item Only the input node coordinate functions $f_{\iota_m}$ depend on the external input parameter $\mathcal{I}$ and the partial derivative of $f_{\iota_m,\mathcal{I}}$ generically satisfies
\begin{equation} \label{e:f_iota_I}
 f_{\iota_m,\mathcal{I}} \neq 0.
\end{equation}
\end{enumerate}

Let $J$ be the $(n+N+1)\times (n+N+1)$ Jacobian matrix of an admissible vector field $F=(f_{\iota},f_{\sigma},f_{o})$, that is,
\begin{equation} \label{jacobian}
J = \begin{bmatrix}
  f_{\iota, x_\iota}   &  f_{\iota, x_\rho} & f_{\iota, x_o} \\
  f_{\rho, x_\iota}   &  f_{\rho, x_\rho} & f_{\rho, x_o} \\
  f_{o, x_\iota} &  f_{o, x_\rho} & f_{o, x_o} 
\end{bmatrix}
\end{equation}
The $(n+N+1)\times (n+N+1)$ matrix $\langle H \rangle$ obtained from $J$ by replacing the last column by $(-f_{\iota,\mathcal{I}},0,0)^t$, is called \emph{generalized homeostasis matrix}:
\begin{equation} \label{weighted_homeostasis_matrix_definition}
\langle H \rangle = 
\begin{bmatrix}
f_{\iota, x_\iota} &  f_{\iota, x_\rho} & -f_{\iota, \mathcal{I}} \\
f_{\rho, x_\iota}&  f_{\rho, x_\rho} & 0 \\
f_{o, x_\iota} &  f_{o, x_\rho} & 0
\end{bmatrix} = \begin{bmatrix} f_{\iota_{1}, x_{\iota_{1}}} & \cdots & f_{\iota_{1}, x_{\iota_{n}}} & f_{\iota_{1}, x_{\rho}} & - f_{\iota_{1}, \mathcal{I}} \\
    \vdots & \ddots & \vdots & \vdots & \vdots \\
    f_{\iota_{n}, x_{\iota_{1}}} & \cdots & f_{\iota_{n}, x_{\iota_{n}}} & f_{\iota_{n}, x_{\rho}} & - f_{\iota_{n}, \mathcal{I}} \\
    f_{\rho, x_{\iota_{1}}} & \cdots & f_{\rho, x_{\iota_{n}}} & f_{\rho, x_{\rho}} & 0 \\
    f_{o, x_{\iota_{1}}} & \cdots & f_{o, x_{\iota_{n}}} & f_{o, x_{\rho}} & 0
    \end{bmatrix}
\end{equation}
Here all partial derivatives $f_{\ell,x_j}$ are evaluated at $\big(\tilde{X}(\mathcal{I}),\mathcal{I}\big)$.

\begin{lemma} \label{cramer_rule}
The input-output function $x_o(\mathcal{I})$ satisfies
\begin{equation} \label{xo'_MI}
x_o'(\mathcal{I}) = \frac{\det\!\big(\langle H \rangle\big) }{\det(J)}
\end{equation}
Here $\det(J)$ and $\det\!\big(\langle H \rangle\big)$ are evaluated at $\big(\tilde{X}(\mathcal{I}),\mathcal{I}\big)$. Hence, $\mathcal{I}_0$ is a point of infinitesimal homeostasis if and only if
\begin{equation} \label{xo'_reduced_MI}
\det\!\big(\langle H \rangle\big) = 0
\end{equation}
at the equilibrium $\big(\tilde{X}(\mathcal{I}_0),\mathcal{I}_0\big)$.
\end{lemma}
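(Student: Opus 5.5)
Differentiate the equilibrium identity and apply Cramer's rule; note that the sign convention in $\langle H\rangle$ is what makes the formula come out with no $\pm$.

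\textbf{Step 1: the linear system.} By assumption (a), $\tilde X(\mathcal{I})$ is a smooth branch of equilibria with
\[
F(\tilde X(\mathcal{I}),\mathcal{I})\equiv 0 .
\]
Differentiating with respect to $\mathcal{I}$ by the chain rule gives
\[
J\,\tilde X'(\mathcal{I}) + F_{\mathcal{I}} = 0 .
\]
By assumption (c), $F_{\mathcal{I}} = (f_{\iota,\mathcal{I}},0,0)^t$, since only the input-node components depend on $\mathcal{I}$. So $\tilde X'$ solves $J\,\tilde X' = (-f_{\iota,\mathcal{I}},0,0)^t$.

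\textbf{Step 2: $J$ is invertible.} Asymptotic stability of the equilibrium means all eigenvalues of $J$ have negative real part. In particular $0$ is not an eigenvalue, so $\det(J)\neq 0$. This also justifies the implicit function theorem used in (a).

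\textbf{Step 3: Cramer's rule.} The output coordinate $x_o$ is the last entry of $\tilde X$. Cramer's rule gives $x_o'$ as $\det(J)^{-1}$ times the determinant of $J$ with its last column replaced by the right-hand side $(-f_{\iota,\mathcal{I}},0,0)^t$. By definition \eqref{weighted_homeostasis_matrix_definition}, that modified matrix is exactly $\langle H\rangle$. This yields \eqref{xo'_MI}.

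\textbf{Step 4: the equivalence.} Since $\det(J)\neq 0$ at $(\tilde X(\mathcal{I}_0),\mathcal{I}_0)$, we have $x_o'(\mathcal{I}_0)=0$ if and only if $\det(\langle H\rangle)=0$. This is \eqref{xo'_reduced_MI}.

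The only point needing care is the bookkeeping in Step 3. The right-hand side must go into the column indexed by the output variable $x_o$, which is the last column in the chosen ordering $(x_\iota,x_\rho,x_o)$. The minus sign is already built into the definition of $\langle H\rangle$, which is why no $\pm$ appears, unlike in Lemma~\ref{L:det}. Otherwise the argument is routine.
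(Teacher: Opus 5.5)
Your proof is correct and is the intended argument: the paper states this lemma without a separate proof, relying (as with Lemma~\ref{L:det}) on implicit differentiation of $F(\tilde X(\mathcal{I}),\mathcal{I})\equiv 0$ followed by Cramer's rule for the $x_o$-coordinate, with the minus sign built into the last column of $\langle H\rangle$. One small precision: asymptotic stability alone does not force $\det(J)\neq 0$ (e.g.\ $\dot x=-x^3$), so Step~2 should appeal to linear stability, i.e.\ the invertibility of $J$ that is already presupposed when assumption (a) invokes the implicit function theorem.
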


By expanding $\det(\langle H \rangle)$ with respect to the last column and each $\iota_k$ (input) row one obtains 
\begin{equation} \label{xo'_reduced_expand}
\det\!\big(\langle H \rangle\big) = \sum_{m=1}^n \pm f_{\iota_m,\mathcal{I}} \det(H_{\iota_m})
\end{equation}
The \emph{partial homeostasis matrix} $H_{\iota_m}$ is obtained from the Jacobian matrix $J$ of $F$ by dropping the last column and the $\iota_m$ row; see \eqref{definition_parcels_homeostasis_matrix} below.
\begin{equation} \label{definition_parcels_homeostasis_matrix}
    H_{\iota_{m}} = \begin{bmatrix} f_{\iota_{1}, x_{\iota_{1}}} & \cdots & f_{\iota_{1}, x_{\iota_{n}}} & f_{\iota_{1}, x_{\rho}}\\
    \vdots & \ddots & \vdots & \vdots \\ f_{\iota_{m-1}, x_{\iota_{1}}} & \cdots & f_{\iota_{m-1}, x_{\iota_{n}}} & f_{\iota_{m-1}, x_{\rho}} \\ f_{\iota_{m+1}, x_{\iota_{1}}} & \cdots & f_{\iota_{m+1}, x_{\iota_{n}}} & f_{\iota_{m+1}, x_{\rho}} \\  \vdots & \ddots & \vdots & \vdots \\
    f_{\iota_{n}, x_{\iota_{1}}} & \cdots & f_{\iota_{n}, x_{\iota_{n}}} & f_{\iota_{n}, x_{\rho}} \\
    f_{\rho, x_{\iota_{1}}} & \cdots & f_{\rho, x_{\iota_{n}}} & f_{\rho, x_{\rho}} \\
    f_{o, x_{\iota_{1}}} & \cdots & f_{o, x_{\iota_{n}}} & f_{o, x_{\rho}} \end{bmatrix}
\end{equation}

Note that when there is a single input node, i.e. $n=1$, Lemma \ref{cramer_rule} gives the corresponding result obtained in~\cite{WHAG21}. 
In this case, there is only one matrix $H_{\iota_m}=H$, which reduces to the \emph{homeostasis matrix} from single input node case.

The generalization of the notion of core network to the multiple input nodes is straightforward.

\begin{definition} \label{defining_core_networks} \rm
Let $\mathcal{G}$ be a network with input nodes $\iota_{1}, \ldots, \iota_{n}$ and output node $o$. 
We call $\mathcal{G}$ a \emph{core network} if every node in $\mathcal{G}$ is upstream from $o$ and downstream from at least one input node.
$\GG_c$ is the \emph{core subnetwork} whose nodes are the nodes in $\GG$ that are both upstream from $o$ and downstream from at least one input node and whose arrows are the arrows in $\GG$ whose head and tail nodes are both nodes in $\GG_c$.
\END
\end{definition}

In \cite{madeira2022homeostasis} the authors proceed to extend all the combinatorial properties of the single input node case to the multiple input node case, culminating with the definitions of homeostasis subnetworks.
Then the classification proceeds by carefully analyzing how these notions relate with the corresponding notions on each single input node subnetwork of the distinguished set.

Now, we are in position to describe our new approach to the classification of the homeostasis subnetwork of networks with multiple input nodes (see Theorem \ref{thm:GenH} below). 

Specifically, we augment the network $\GG$ by introducing a new input node $x_{I}$ that connects to all original input nodes and whose dynamics enforces $x_{I}(t) = \mathcal{I}$ asymptotically. 
This construction effectively converts an $n+N$-node multiple input node network to an $n+N+1$-node single input node network, allowing the theory and classification results we developed in \cite{WHAG21} for single input node networks to be applied directly. \\

\begin{theorem} \label{thm:GenH}
Let $\mathcal{G}$ be a single-input input-output network with $n$ distinguished input nodes 
$\iota = (\iota_1, \dots, \iota_n)$ receiving the same input from $\mathcal{I}\in \mathbb{R}$, $N$ regulatory nodes $\rho$, and one distinguished output node $o$.
Construct the \emph{augmented network} $\mathcal{G}^\diamond$ by introducing a new input node $I$ and converting the original input nodes $\iota = (\iota_1, \dots, \iota_n)$ into regulatory nodes, by adding directed edges 
$I \to \iota_j$ for $j = 1,\dots,n$. 
Extend the vector field $F=(f_{\iota},f_{\rho},f_o)$ to $F^\diamond=(f_{I},f_{\iota},f_{\rho},f_o)$ in such a way that the equation for the new input node variable  $\dot{x}_I(t)=f_I(x_{I},\mathcal{I})$ enforces $x_I(t)\equiv \II$ asymptotically. 
Then the infinitesimal homeostasis matrix $H^\diamond$ of $\mathcal{G}^\diamond$ is identical to the generalized homeostasis matrix $\langle H \rangle$ of $\mathcal{G}$ given by \eqref{weighted_homeostasis_matrix_definition}, up to a sign difference.
\end{theorem}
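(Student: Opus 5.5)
The argument is a direct computation: write out the Jacobian $J^\diamond$ of the augmented system, delete the row and column prescribed by the single-input construction, and compare the result with \eqref{weighted_homeostasis_matrix_definition}. First we make the extension concrete. Take $f_I(x_I,\II)=\II-x_I$, or any $f_I$ with $f_{I,x_I}\neq 0$ that has $x_I=\II$ as a stable rest point. Then replace each input component by $f^\diamond_{\iota_m}(x_I,x_\iota,x_\rho,x_o)=f_{\iota_m}(x_\iota,x_\rho,x_o,x_I)$, so $x_I$ enters exactly where $\II$ used to. The regulatory and output components are unchanged. Three properties follow at once:
\begin{itemize}
\item only $f_I$ depends on $\II$, so condition (b) for single-input networks holds;
\item the arrows $I\to\iota_j$ are the only new couplings;
\item the equilibrium of $F^\diamond$ is $(\II,\tilde X(\II))$, so the input-output functions of $\GG$ and $\GG^\diamond$ agree.
\end{itemize}
Moreover, $\partial f^\diamond_{\iota_m}/\partial x_I = f_{\iota_m,\II}$ evaluated at the equilibrium.

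Next, order the variables as $(x_I,x_\iota,x_\rho,x_o)$ and write the $(n+N+2)\times(n+N+2)$ Jacobian
\[
J^\diamond=\begin{bmatrix} f_{I,x_I} & 0 & 0 & 0\\ f_{\iota,\II} & f_{\iota,x_\iota} & f_{\iota,x_\rho} & f_{\iota,x_o}\\ 0 & f_{\rho,x_\iota} & f_{\rho,x_\rho} & f_{\rho,x_o}\\ 0 & f_{o,x_\iota} & f_{o,x_\rho} & f_{o,x_o}\end{bmatrix}.
\]
By \eqref{xo'_reduced2}, $H^\diamond$ is obtained by deleting the first row (the $I$ row) and the last column (the $o$ column):
\[
H^\diamond=\begin{bmatrix} f_{\iota,\II} & f_{\iota,x_\iota} & f_{\iota,x_\rho}\\ 0 & f_{\rho,x_\iota} & f_{\rho,x_\rho}\\ 0 & f_{o,x_\iota} & f_{o,x_\rho}\end{bmatrix}.
\]
This has the same entries as $\langle H\rangle$, except that the column $(f_{\iota,\II},0,0)^t$ sits first instead of last and is not negated. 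Moving it to the last position is a cyclic permutation of $n+N+1$ columns, which multiplies the determinant by $(-1)^{n+N}$. Negating it contributes a further factor $-1$. Hence $\det H^\diamond=(-1)^{n+N+1}\det\langle H\rangle$, and the matrices coincide up to these elementary column operations, which is the meaning of ``up to a sign difference.''

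The main point to verify carefully is that the comparison is legitimate at the level of input-output behaviour. By Lemma~\ref{L:det} applied to $\GG^\diamond$, $x_o'=\pm f_{I,\II}\det H^\diamond/\det J^\diamond$. Since $J^\diamond$ is block lower triangular, $\det J^\diamond=f_{I,x_I}\det J$. Asymptotic stability of $X^*$ together with $f_{I,x_I}<0$ gives stability of the augmented equilibrium, and $f_{I,\II}/f_{I,x_I}=-1$ for the choice above. So the formula agrees with Lemma~\ref{cramer_rule}, and the zero sets of $\det H^\diamond$ and $\det\langle H\rangle$ coincide. Finally, the combinatorial classification on $\GG^\diamond$ transfers back to $\GG$ because $H^\diamond$ and $\langle H\rangle$ have the same entries, so the determinants are the same polynomial up to sign. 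In particular they have the same irreducible factors.
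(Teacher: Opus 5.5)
Your proposal is correct and follows the paper's proof essentially step for step. Both use the choice $f_I=\II-x_I$ and the block-lower-triangular $J^\diamond$ (whose skew-product form gives stability). Both delete the $I$ row and the $o$ column, identify $f_{\iota,x_I}=f_{\iota,\II}$ at the equilibrium $x_I=\II$, and appeal to Lemma~\ref{L:det} to get $x_o'=\pm\det H^\diamond/\det J$. Your explicit sign $\det H^\diamond=(-1)^{n+N+1}\det\langle H\rangle$, from the cyclic column shift of length $n+N+1$ plus the negated column, is more precise than the paper's ``two column exchanges'' remark. It is fully consistent with the theorem's ``up to a sign''.
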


\begin{proof} 
Let $H^\diamond$ be the homeostasis matrix for the augmented network $\mathcal{G}^\diamond$. 
We prove below that $\det(H^\diamond) = \det(\langle H \rangle)$, as polynomials in $(f_{\ell,x_j})$, up to a sign.
Consequently, both formulations yield identical infinitesimal homeostasis conditions.
We define the dynamics of the new input node as 
\[
\dot{x}_{I}=f_I(x_{I},\mathcal{I}) := \mathcal{I}-x_I,
\]
which ensures $x_I(t) = \mathcal{I}$ asymptotically.
Moreover, because of the skew product from of $F^\diamond$, if the original system $F$ is stable at an equilibrium $(X_0,\II_0)$ then the extended system $F^\diamond$ is stable at the corresponding equilibrium $(\II_0,X_0,\II_0)$. 
Other choices of dynamics are possible. 
This is one working example. 
We then rewrite dynamics for the original set of input nodes as 
\[
\dot{x}_{\iota} = f_{\iota}(x_\iota,x_\rho,x_o, x_I),
\]
which now become regulatory nodes that do not directly depend on the input parameter $\mathcal{I}$. 
The Jacobian matrix $J^\diamond$ of $\mathcal{G}^\diamond$ becomes:
\[
J^\diamond = \begin{bmatrix}
f_{I, x_{I}} & f_{I, x_{\iota}} & f_{I, x_{\rho}} & f_{I, x_o} \\
f_{\iota, x_{I}} & f_{\iota, x_{\iota}} & f_{\iota, x_{\rho}} & f_{\iota, x_o} \\
f_{\rho, x_{I}} & f_{\rho, x_{\iota}} & f_{\rho, x_{\rho}} & f_{\rho, x_o} \\
f_{o, x_{I}} & f_{o, x_{\iota}} & f_{o, x_{\rho}} & f_{o, x_o}
\end{bmatrix}
\quad=\quad
\begin{bmatrix}
f_{I, x_{I}} & 0 & 0 & 0 \\
f_{\iota, x_{I}} & f_{\iota, x_{\iota}} & f_{\iota, x_{\rho}} & f_{\iota, x_o} \\
0 & f_{\rho, x_{\iota}} & f_{\rho, x_{\rho}} & f_{\rho, x_o} \\
0 & f_{o, x_{\iota}} & f_{o, x_{\rho}} & f_{o, x_o}
\end{bmatrix}
\]
By Lemma \ref{L:det}, the input-output function $x_o(\mathcal{I})$ of this augmented network satisfies 
\[
x_o'=\pm \frac{f_{I,\mathcal{I}}}{\det(J^\diamond)}\det(H^\diamond) = \pm \frac{\det(H^\diamond)}{\det(J)}
\]
where $H^\diamond$ is obtained from removing the first row and the last column of $J^\diamond$:
\begin{equation}
\begin{aligned}
H^\diamond &= \begin{bmatrix}
f_{\iota, x_{I}} & f_{\iota, x_{\iota}} & f_{\iota, x_{\rho}}\\
0 & f_{\rho, x_{\iota}} & f_{\rho, x_{\rho}}\\
0 & f_{o, x_{\iota}} & f_{o, x_{\rho}}
\end{bmatrix} 
\quad = \quad
\begin{bmatrix}
f_{\iota, \mathcal{I}} & f_{\iota, x_{\iota}} & f_{\iota, x_{\rho}}\\
0 & f_{\rho, x_{\iota}} & f_{\rho, x_{\rho}}\\
0 & f_{o, x_{\iota}} & f_{o, x_{\rho}}
\end{bmatrix}
\end{aligned}
\end{equation}
The second equality follows that, at the stable equilibrium, $x_I=\mathcal{I}$. 
Through two column exchanges, we obtain that $\det(H^\diamond) = \det(\langle H \rangle)$. 
\end{proof}

Since the multivariate polynomial 
$\det(H^\diamond)$ have the same irreducible factors of $\det(\langle H \rangle)$, we can apply algorithm \ref{sec:algorithm} to the augmented network $\GG^\diamond$.
Moreover, it follows from \cite[Thm. 3.26]{madeira2022homeostasis} that there exist exactly one irreducible factor of $\det(\langle H \rangle)$ that contains all terms of the form $f_{\iota_i,\II}$.
This irreducible factor is associated with the \emph{input counterweight} homeostasis type.
In the irreducible decomposition of $\det(H^\diamond)$ the input counterweight homeostasis type corresponds to one of the structural homeostasis types, the one associated with the unique irreducible factor that contains all terms of the form $f_{\iota_i,x_I}$ (since these terms correspond to $f_{\iota_i,\II}$).

It may happen that the output node coincides with one of the input nodes $\iota_j$.
Then, in principle, one would have to take into account the observations from sub-section \ref{ssec:input_equal_output}.
However, it is easy to see that the augmented network always has the new input node distinct from the output node and hence one is back to the usual case where the input and the output nodes are distinct \cite{WHAG21}. 

\subsubsection{Multiple-Input Single-Output Networks}\label{sec:miso}

In \cite{madeira2024}, the authors extend the approach of \cite{WHAG21, madeira2022homeostasis} to deal with the general multiple-input single output case.
Here, we will show that algorithm \ref{sec:algorithm} can be applied multiple times to certain single-input single-output networks and hence obtain a classification of homeostasis types in this case, as well.

Let us briefly recall some
definitions and results from  \cite{madeira2024} before explaining the strategy to apply algorithm \ref{sec:algorithm} to obtain a classification of homeostasis types.

A \emph{multiple-input single-output} network, or simply a \emph{multiple inputs network}, is a network $\mathcal{G}$ with $n$ distinguished \emph{input nodes} $\iota=\{\iota_{1}, \iota_{2}, \ldots, \iota_{n}\}$, all of them associated to at least one input parameter $\mathcal{I}_{M}$, $M = 1, \ldots, N$, one distinguished \emph{output node} $o$, and $r$ \emph{regulatory nodes} $\rho=\{\rho_1,\ldots,\rho_r\}$.
The associated system of differential equations has the form
\begin{equation} \label{admissible_systems_ODE_multiple_inputs}
\begin{aligned}
\dot{x}_{\iota} & = f_{\iota}(x_{\iota}, x_{\rho}, x_{o}, \mathcal{I}) \\
\dot{x}_{\rho} & = f_{\rho}(x_{\iota}, x_{\rho}, x_{o})\\
\dot{x}_{o} & = f_{o}(x_{\iota}, x_{\rho}, x_{o})
\end{aligned}
\end{equation}
where $\mathcal{I} = (\mathcal{I}_{1}, \cdots, \mathcal{I}_{N}) \in\mathbb{R}^{N}$ is the vector of \emph{input parameters}, or simple the vector of \emph{inputs}, and $X=(x_{\iota},x_{\rho},x_o)\in\mathbb{R}^n\times\mathbb{R}^r \times\mathbb{R}$ is the vector of state variables associated to the network nodes.

We write a vector field associated with the system \eqref{admissible_systems_ODE_multiple_inputs} as
\[
F(X,\mathcal{I})=(f_{\iota}(X,\mathcal{I}),f_\rho(X),f_o(X))
\]
and call it an \emph{admissible vector field} for the network $\mathcal{G}$.

Let $f_{j,x_i}$ denote the partial derivative of the $j^{th}$ node function $f_j$ with respect to the $i^{th}$ node variable $x_i$.  We make the following assumptions about the vector field $F$ throughout:
\begin{enumerate}[label=(\alph*)]
\item The vector field $F$ is smooth and has a linearly stable equilibrium at $(X^*,\mathcal{I}^*)$.
Therefore, by the implicit function theorem, there is a function $\tilde{X}(\mathcal{I})$ defined in a neighborhood of $\mathcal{I}^*$ such that $\tilde{X}(\mathcal{I}^*) = X^*$ and $F(\tilde{X}(\mathcal{I}), \mathcal{I}) \equiv 0$. 
\item The partial derivative $f_{j,x_i}$ can be non-zero only if the network $\mathcal{G}$ has an arrow $i\to j$, otherwise $f_{j,x_i} \equiv 0$.
\item Only the input node coordinate functions $f_{\iota_k}$ depend on at least one of the components of the vector of input parameters $\mathcal{I}$ and the partial derivative of $f_{\iota_k,\mathcal{I}_{M}}$ generically satisfies
\begin{equation} \label{e:f_iota_II}
 \frac{\partial f_{\iota_{k}}}{\partial \mathcal{I}_{M}} = f_{\iota_k,\mathcal{I}_{M}} \neq 0.
\end{equation}
for some $M = 1, \ldots, N$.
\end{enumerate}

Now the input-output function $x_o$ is multivariate, that is $x_o:\mathbb{R}^N\to\mathbb{R}$.
Infinitesimal homeostasis in a multiple inputs network is given by the critical points of $x_o(\mathcal{I})$, namely, the zeros of the gradient vector
\begin{equation} \label{gradient_vector}
    \nabla x_{o} = \left(\frac{\partial x_{o}}{\partial \mathcal{I}_{1}}, \frac{\partial x_{o}}{\partial \mathcal{I}_{2}}, \cdots, \frac{\partial x_{o}}{\partial \mathcal{I}_{N}} \right)
\end{equation}

Let $J$ be the $(n+r+1)\times (n+r+1)$ Jacobian matrix of an admissible vector field $F=(f_{\iota},f_{\sigma},f_{o})$, that is,
\begin{equation} \label{jacobian2}
J = \begin{pmatrix}
  f_{\iota, x_\iota}   &  f_{\iota, x_\rho} & f_{\iota, x_o} \\
  f_{\rho, x_\iota}   &  f_{\rho, x_\rho} & f_{\rho, x_o} \\
  f_{o, x_\iota} &  f_{o, x_\rho} & f_{o, x_o} 
\end{pmatrix}
\end{equation}
For each $1 \leq M \leq N$, consider the $(n+r+1)\times (n+r+1)$ matrix $\langle H_M \rangle$ obtained from $J$ by replacing the last column by $(-f_{\iota,\mathcal{I}_M},0,0)^t$, which is called \emph{$\mathcal{I}_M$-generalized homeostasis matrix}:
\begin{equation} \label{weighted_homeostasis_matrix_definition_mi}
\langle H_M \rangle = 
\begin{pmatrix}
f_{\iota, x_\iota} &  f_{\iota, x_\rho} & -f_{\iota, \mathcal{I}_M} \\
f_{\rho, x_\iota}&  f_{\rho, x_\rho} & 0 \\
f_{o, x_\iota} &  f_{o, x_\rho} & 0
\end{pmatrix}
\end{equation}
Here all partial derivatives $f_{\ell,x_j}$ are evaluated at $\big(\tilde{X}(\mathcal{I}),\mathcal{I}\big)$.

\begin{lemma} \label{cramer_rule2}
Let $x_o(\mathcal{I})$ be input-output function of a multiple inputs network.
The partial derivative of $x_o(\mathcal{I})$ with respect to the $M$-th parameter $\mathcal{I}_M$ satisfies
\begin{equation} \label{xo'}
\frac{\partial x_o\;}{\partial \mathcal{I}_M} = \frac{\det\langle H_M \rangle}{\det(J)}
\end{equation}
Here $\det(J)$ and $\det\langle H_M \rangle$ are evaluated at the equilibrium point $\tilde{X}(\mathcal{I})$. Hence, 
\begin{equation} \label{gradient_simplified}
    \nabla x_{o} = \frac{1}{\det (J)}\left(\det \langle H_{1} \rangle, \det \langle H_{2} \rangle, \ldots, \det \langle H_{N} \rangle \right)
\end{equation}
Moreover, $\mathcal{I}^0$ is a point of infinitesimal homeostasis if and only if
\begin{equation} \label{xo'_reduced3}
\det\langle H_M \rangle = 0
\qquad \text{for all} \quad 1 \leq M \leq N
\end{equation}
as a function of $\mathcal{I}$ evaluated at $\mathcal{I}^0$.
\end{lemma}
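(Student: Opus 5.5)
The proof is an implicit differentiation followed by Cramer's rule, applied to one input parameter at a time. By assumption (a), $F(\tilde X(\mathcal{I}),\mathcal{I})\equiv 0$ near $\mathcal{I}^*$. Linear stability of the equilibrium gives $\det(J)\neq 0$, so $\tilde X$ is smooth there.

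\textbf{Step 1 (linear system).} Fix $M$ and differentiate the identity with respect to $\mathcal{I}_M$:
\[
J\,\frac{\partial \tilde X}{\partial \mathcal{I}_M} = -\frac{\partial F}{\partial \mathcal{I}_M}.
\]
By assumption (c), only the input components $f_{\iota}$ depend on $\mathcal{I}$. Hence the right-hand side is the column $(-f_{\iota,\mathcal{I}_M},0,0)^t$, whose regulatory and output entries vanish.

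\textbf{Step 2 (Cramer's rule).} The unknown $\partial x_o/\partial \mathcal{I}_M$ is the last coordinate of $\partial \tilde X/\partial \mathcal{I}_M$. Cramer's rule gives it as the ratio of two determinants. The numerator is the determinant of $J$ with its last column replaced by the right-hand side, which is exactly $\langle H_M\rangle$ as defined in \eqref{weighted_homeostasis_matrix_definition_mi}. This proves \eqref{xo'}, with no sign ambiguity, since the replaced column is the last one. Collecting these formulas over $M=1,\dots,N$ gives \eqref{gradient_simplified}.

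\textbf{Step 3 (characterization).} Infinitesimal homeostasis at $\mathcal{I}^0$ is the vanishing of $\nabla x_o$ in \eqref{gradient_vector}. Since $\det(J)\neq 0$ at $\tilde X(\mathcal{I}^0)$, the gradient vanishes if and only if every numerator $\det\langle H_M\rangle$ vanishes there. This is \eqref{xo'_reduced3}.

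\textbf{Main obstacle.} The only delicate point is justifying $\det(J)\neq 0$ along the whole family, which is what licenses both the implicit function theorem and the division in \eqref{xo'}. Linear stability means every eigenvalue of $J$ has negative real part, so $J$ is nonsingular at $\mathcal{I}^*$. By continuity $J$ remains nonsingular, and the equilibrium stable, on a neighborhood of $\mathcal{I}^*$. We therefore restrict $\mathcal{I}$, and in particular $\mathcal{I}^0$, to that neighborhood. With this in place, the argument is the same as Lemma \ref{cramer_rule} applied to each parameter $\mathcal{I}_M$ separately.
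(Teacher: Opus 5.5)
Your proof is correct and is the standard argument behind this lemma. The paper quotes the lemma from \cite{madeira2024} without proof, and it describes such formulas as a straightforward application of Cramer's rule. That is exactly your route: implicitly differentiate to get $J\,\partial \tilde X/\partial \mathcal{I}_M = -\partial F/\partial \mathcal{I}_M$, and observe that $\langle H_M\rangle$ is literally the Cramer numerator for the last coordinate $x_o$. Your use of linear stability to guarantee $\det(J)\neq 0$ near $\mathcal{I}^*$ correctly justifies both the implicit function theorem and the division.
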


\begin{remark} \normalfont \label{rmk:case_N_1}
An explicit expression for $\det\langle H_M \rangle$ can be obtained by expanding it with respect to the last column and the $\iota_m$-th row:
\begin{equation} \label{xo'_reduced_expand_mi}
\det\langle H_M \rangle = \sum_{m=1}^n \pm f_{\iota_m,\mathcal{I}_M} \det(H_{\iota_m})
\end{equation}
Here $H_{\iota_m}$ is obtained from $H$ by
removing the last column and the $\iota_m$-th row.
When there is a single input, i.e. $N=1$, the gradient $\nabla x_{o}$ reduces to ordinary derivative $x_o'$ and \eqref{gradient_simplified} gives the formula for $x_o'$ obtained in~\cite{madeira2022homeostasis}.
When there is a single input and a single input node, $N=n=1$, there is only one matrix $H_{\iota_m}=H$, called the \emph{homeostasis matrix} and
\eqref{gradient_simplified} gives the corresponding formula for $x_o'$ obtained in~\cite{WHAG21}.
\END
\end{remark}

\begin{definition} \normalfont
Let $\mathcal{G}$ be a multiple inputs network.
The \emph{core subnetwork} $\mathcal{G}_{c}$ of $\mathcal{G}$ is the subnetwork whose nodes are: (i) the input nodes $\iota_{1}, \ldots, \iota_{n}$, (ii) the regulatory nodes $\rho$ that are upstream from the output node and downstream of at least one input node, and (iii) the output node $o$. 
The arrows of $\mathcal{G}_{c}$ are the arrows of $\mathcal{G}$ connecting the nodes of $\mathcal{G}_{c}$.
\END
\end{definition}

From this point we will depart from \cite{madeira2024} and incorporate the alternative approach from the previous section \ref{sec:siso_mi}.
Let $\mathcal{G}$ be a core multiple inputs network with inputs $\mathcal{I}_{1}, \ldots, \mathcal{I}_{N}$.
The \emph{$\mathcal{I}_{M}$-specialized network} $\mathcal{G}_{\mathcal{I}_{M}}$ is defined as the (single input prameter) input-output network consisting of the network $\mathcal{G}$ with the input nodes being exactly the nodes affected by the input parameter $\II_M$.
In principle, the specialized network $\mathcal{G}_{\mathcal{I}_{M}}$ is a multiple input node network with single input parameter $\mathcal{I}_M$, as studied in \cite{madeira2022homeostasis} and considered in the previous section \ref{sec:siso_mi}.
Now, when the specialized network $\mathcal{G}_{\mathcal{I}_{M}}$  has more than one input node we can apply the construction of Theorem \ref{thm:GenH} and replace $\mathcal{G}_{\mathcal{I}_{M}}$ by its augmented version $\mathcal{G}^\diamond_{\mathcal{I}_{M}}$.
In this way all specialized networks are single-input single output networks.

However, it may happen that, although the multiple inputs network $\GG$ is core, when one drops all the other input parameters, the network $\mathcal{G}_{\mathcal{I}_{M}}$ (or $\mathcal{G}^\diamond_{\mathcal{I}_{M}}$) become non-core.
Therefore, one needs to consider the corresponding core networks 
$\mathcal{G}^{c}_{\mathcal{I}_{M}}$ (or $\mathcal{G}^{\diamond\, c}_{\mathcal{I}_{M}}$).
Moreover, some of these single-input single-output networks might have the input the same as the output node.
In any case, the algorithm can deal with all these networks.

Now, in order to put together the results of computation of the homeostasis subnetworks for each network $\mathcal{G}_{\mathcal{I}_{M}}$ (or $\mathcal{G}^\diamond_{\mathcal{I}_{M}}$) we need one more definition.

\begin{definition} \normalfont
The \emph{vector determinant} associated to an input-output network is the vector-valued function defined by
\begin{equation}
\widehat{h}=
    \big(\det \langle H_{\mathcal{I}_1} \rangle, \det \langle H_{\mathcal{I}_2} \rangle, \ldots, \det \langle H_{\mathcal{I}_N} \rangle \big),
\end{equation}
where $\det \langle H_{\mathcal{I}_{M}} \rangle$, $M=1,\ldots,N$, is the determinant of generalized homeostasis matrix of the network $\mathcal{G}_{\mathcal{I}_{M}}^c$ or its augmented version $\left(\mathcal{G}^\diamond_{\mathcal{I}_{M}}\right)^c$, according with the situation.
The vector-valued function $\widehat{h}$ can be considered as a (formal) polynomial mapping on the `variables' $f_{j,x_i}$ and $f_{j,I_M}$.
\END
\end{definition}

The following theorem ensures that $\widehat{h}$ contains all the information to classify the homeostasis subnetworks (see \cite{madeira2024}).

\vspace{-5mm}

\begin{theorem} \label{zero_set}
The irreducible factors of $\widehat{h}$ are exactly the irreducible factors of $\nabla x_o$ that can induce homeostasis.
\end{theorem}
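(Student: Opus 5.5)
The plan is to compare $\widehat{h}$ with $\nabla x_o$ one component at a time, and then interpret ``irreducible factors of a vector-valued polynomial map'' as the irreducible polynomials dividing at least one component.

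By Lemma~\ref{cramer_rule2}, $\nabla x_o = \det(J)^{-1}\big(\det\langle H_1\rangle,\ldots,\det\langle H_N\rangle\big)$. Since $(X^*,\II^*)$ is linearly stable, $\det(J)\neq 0$ near the equilibrium, so the common factor $\det(J)^{-1}$ never vanishes. Hence the factors of $\nabla x_o$ that can induce homeostasis are exactly the irreducible factors of the polynomials $\det\langle H_M\rangle$ of the full network $\GG$, regarded as polynomials in $f_{j,x_i}$ and $f_{\iota,\II_M}$. It therefore suffices to show, for each $M$, that $\det\langle H_M\rangle$ computed on $\GG$ and $\det\langle H_{\II_M}\rangle$ computed on $\GG^c_{\II_M}$ (or on $(\GG^\diamond_{\II_M})^c$) have the same irreducible factors, up to factors that cannot vanish generically.

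\textbf{Reduction to the specialized network.} In $\langle H_M\rangle$, the last column contains only the entries $-f_{\iota_k,\II_M}$, and these vanish identically for input nodes not affected by $\II_M$. So $\langle H_M\rangle$ coincides with the generalized homeostasis matrix \eqref{weighted_homeostasis_matrix_definition} of the single-parameter network $\GG_{\II_M}$, whose input nodes are exactly those affected by $\II_M$; the nodes of $\GG$ simply become regulatory nodes in $\GG_{\II_M}$. When $\GG_{\II_M}$ has several input nodes, Theorem~\ref{thm:GenH} gives $\det\langle H_M\rangle=\pm\det(H^\diamond)$ for the augmented network, with $f_{\iota,x_I}$ identified with $f_{\iota,\II_M}$. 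A sign does not change the irreducible factors.

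\textbf{Core reduction (the main obstacle).} I must show that passing from $\GG_{\II_M}$ (or $\GG^\diamond_{\II_M}$) to its core subnetwork changes the determinant only by factors that cannot vanish at a stable equilibrium, namely Jacobian determinants of the removed parts. I would order the nodes as follows: nodes not downstream from any $\II_M$-input node, then core nodes, then nodes not upstream from $o$. In this ordering the matrix is block triangular. Nodes not downstream of the inputs receive no arrows from downstream nodes, and nodes not upstream of $o$ send no arrows to upstream nodes. This gives
\[
\det\langle H_M\rangle=\pm\det\langle H_{\II_M}\rangle\cdot\det J_{\mathrm{up}}\cdot\det J_{\mathrm{down}},
\]
where $J_{\mathrm{up}}$ and $J_{\mathrm{down}}$ are the Jacobians of the two removed node sets. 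This is the single-input core-equivalence argument of \cite{WHAG21}, extended to multiple input nodes in \cite{madeira2022homeostasis,madeira2024}, and I would cite it rather than redo it. The delicate point is that $\det J_{\mathrm{up}}$ and $\det J_{\mathrm{down}}$ are not automatically nonzero. Linear stability of the full $J$ does not directly give invertibility of principal submatrices. I would handle this with the genericity convention that underlies Theorem~\ref{zero_set}: these factors involve only self-coupling terms of nodes outside the core and do not vanish generically at a stable equilibrium, so they are excluded as factors that ``can induce homeostasis''. Conversely, every irreducible factor of $\det\langle H_{\II_M}\rangle$ divides $\det\langle H_M\rangle$, since $\det\langle H_{\II_M}\rangle$ is itself a factor of it.

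\textbf{Conclusion.} Combining the three steps over $M=1,\ldots,N$, the irreducible factors of the components of $\widehat{h}$ coincide with those of the components of $\nabla x_o$ that can induce homeostasis. The input$=$output special case, in which some $\GG^c_{\II_M}$ has input equal to output, is covered in the same way using Theorem~\ref{THM:APPENDAGE_NET} in place of Lemma~\ref{L:det}.
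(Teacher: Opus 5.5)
The paper gives no proof of this statement; it defers to \cite{madeira2024}. Your outline is the natural route: componentwise Cramer's rule, identifying $\langle H_M\rangle$ with the generalized homeostasis matrix of $\GG_{\II_M}$, Theorem~\ref{thm:GenH}, then core reduction. However, the step you flag as delicate is resolved incorrectly, and that step carries the content of ``can induce homeostasis''. Two things go wrong there.
\begin{itemize}
\item Your claim that $\det J_{\mathrm{up}}$ and $\det J_{\mathrm{down}}$ involve only self-coupling terms is false in general. They are Jacobian determinants of arbitrary sets of non-core nodes, so they contain all couplings among those nodes (e.g.\ a cycle of nodes not downstream of the $\II_M$-inputs).
\item ``Does not vanish generically'' cannot be the criterion. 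Every genuine homeostasis factor, such as $\det(J_{\AA_i})$, is also generically nonzero at a given point. What singles out the excluded factors is that they are \emph{forced} to be nonzero at every linearly stable equilibrium.
\end{itemize}
For the same reason, your first paragraph's claim that the homeostasis-inducing factors of $\nabla x_o$ are all the irreducible factors of $\det\langle H_M\rangle$ is wrong. The numerator and $\det(J)$ can share factors.

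The missing observation is that your ordering makes the full Jacobian block triangular as well. The zero pattern comes from the graph, not from which column is replaced. Rows of nodes not downstream of the $\II_M$-inputs are supported only on those nodes' columns, and columns of nodes not upstream of $o$ are supported only on those nodes' rows. Hence $\det(J)=\det J_{\mathrm{up}}\,\det J_c\,\det J_{\mathrm{down}}$, and $\det J^\diamond=f_{I,x_I}\det(J)$ after augmentation. Therefore
\[
\frac{\partial x_o}{\partial \II_M}=\frac{\det\langle H_M\rangle}{\det(J)}=\pm\frac{\det\langle H_{\II_M}\rangle}{\det J_c},
\]
so the removed factors cancel outright. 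Equivalently, $\det(J)\neq 0$ forces $\det J_{\mathrm{up}}\det J_{\mathrm{down}}\neq 0$; in fact these diagonal blocks are themselves stable, since the spectrum of $J$ is the union of theirs. Your remark that stability does not make principal submatrices invertible is true in general but irrelevant for these blocks.

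The same formula shows what the converse direction actually requires. An irreducible factor of $\det\langle H_{\II_M}\rangle$ can induce homeostasis only if it does not also divide $\det J_c$. You must take this from the single-input theory of the core; the divisibility you state does not give it.

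Finally, make the two removed node sets disjoint, so the ordering is a partition. Nodes that are neither downstream of the inputs nor upstream of $o$ should go into one of them; either choice works.
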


The K\"onig-Frobenius theorem~\cite{S77,BC09} (see also~\cite{WHAG21,madeira2022homeostasis}) imply that the components of the polynomial mapping $\widehat{h}$ can be factorized as the product of the determinants of the irreducible diagonal blocks of each $\langle H_{\mathcal{I}_M} \rangle$ (defined up to row and column permutations).
An irreducible block $B$ of some $\langle H_{\mathcal{I}_M} \rangle$ is called a \emph{homeostasis block}.
We can further collect the factors that are common irreducible diagonal blocks of all matrices $\langle H_{\mathcal{I}_M} \rangle$ and bring them to the front as scalar factors.
Then we can write 
\begin{equation} \label{eq:vec_det_factor}
\widehat{h} = 
\det(B_1) \cdots \det(B_k) \,
\left(\prod_{j_1}\det(B_{\mathcal{I}_1}^{j_1}), \ldots,
\prod_{j_N}\det(B_{\mathcal{I}_N}^{j_N})
\right)
\end{equation}
Therefore, we can split the problem of classifying homeostasis types supported by $\mathcal{G}$ into two cases according to whether the components of $\widehat{h}$ have a common scalar factor or not.

\begin{definition} \label{definition_classification_homeostasis_pleio_coinc} \normalfont
Let $\mathcal{G}$ be a core multiple inputs network and consider its vector determinant $\widehat{h}$ \eqref{eq:vec_det_factor}.
A homeostasis block corresponding to scalar factor $\det(B_i)$ of $\widehat{h}$ is called a \emph{pleiotropic homeostasis block}.
The other homeostasis blocks of are called \emph{coincidental}.
\END
\end{definition}

\begin{remark}\normalfont 
In genetics, pleiotropy refers to the phenomenon when a single locus affects multiple traits. Here, we employed the term \textit{pleiotropic homeostasis} referring to the fact that the vanishing of one single homeostasis block leads to the vanishing of the whole homeostasis vector $\widehat{h}$.
\END
\end{remark}

\begin{definition} \normalfont
Let $\mathcal{G}$ be a core multiple inputs network.
\begin{enumerate}[label=(\alph*)]
\item We say that \emph{pleiotropic homeostasis} occurs when at least one pleiotropic block has vanishing determinant at some fixed input value.
The pleiotropic blocks determine the \emph{pleiotropic homeostasis types} of $\mathcal{G}$, which can be of appendage, structural or counterweight types.
\item We say that \emph{coincidental homeostasis} occurs when a $N$-tuple of coincidental blocks $\big(B_{\mathcal{I}_1}^{j_1},\ldots, B_{\mathcal{I}_N}^{j_N}\big)$ has simultaneously vanishing determinants at some fixed input value.
The $N$-tuples of coincidental blocks determine the \emph{coincidental homeostasis types} of $\mathcal{G}$. 
Coincidental blocks can be of appendage, structural or counterweight types. 
Hence, the coincidental homeostasis type associated to a $N$-tuple of coincidental blocks is a combination of single homeostasis types.
\END
\end{enumerate}
\end{definition}

\subsection{Reduction to the Single Input Node Case}

As discussed above, although the algorithm for the  classification of homeostasis subnetworks is formulated for single input node networks, the other cases can be reduced to this case, as far as the classification of homeostasis subnetworks is concerned.
That is, in applying the algorithm, one does not need to specify the detailed dynamics (knowing at least one such dynamics exists is enough); only the network structure (i.e., the adjacency matrix or edge list) is needed for the algorithm to determine the classification. 

Networks with multiple input nodes, $\GG$, can be readily adapted to the single input node case by augmenting the original network to $\GG^\diamond$ through the introduction of a new input node that connects to all original input nodes (Theorem \ref{thm:GenH}).
For multiple inputs networks, one can apply the algorithm repeatedly by considering one input parameter at a time, augmenting each specialized network with a new input node if necessary, and then combine the results together as explained above.
The reduction of the specialized networks to their core network is done automatically by the algorithm (see Section \ref{S:algorithm-overview}).
As shown before, this step is necessary to obtain only the subnetworks corresponding to the irreducible factors of the vector determinant.

\section{Algorithm Overview}
\label{S:algorithm-overview}

We develop an automated procedure for identifying and classifying all homeostasis subnetworks in input-output networks, based on the theoretical framework introduced in Section \ref{sec:homeostasis-theory}. 
This automated classification algorithm operates on a single user-defined \emph{input data file}, with a special structure, that encodes the network topology (represented as a directed graph) together with the designation of input and output nodes. 
To specify the network topology, users may provide either an admissible system associated with the network structure, the corresponding adjacency matrix, or an edge list specifying network connectivity. 

\begin{remark} \label{rem:multi-input}\normalfont
The algorithm is designed for single-input-single-output networks with single input node. For networks with multiple input nodes, users must first reformulate the network into an equivalent single-input-node representation, as described in Subsection \ref{sec:siso_mi} (Theorem \ref{thm:GenH}). 
For networks with multiple inputs, users must provide separate input data files for each designated input parameter and run the algorithm independently for each case, then combine the resulting outputs, as described in Subsection \ref{sec:miso}. 
We illustrate this procedure using concrete examples in the following section.
\END
\end{remark}

Given an input data file defining an input-output network, the algorithm identifies all homeostasis subnetworks $\KK_\eta$, including appendage homeostasis subnetworks and structural homeostasis subnetworks, as outlined in algorithm \ref{sec:algorithm}. 
For each $\KK_\eta$, the algorithm derives the corresponding homeostasis condition $\det(B_\eta)=0$, where $B_\eta$ is an irreducible diagonal block $B_\eta$ of the homeostasis matrix $H$ for the full input-output network. 

Below, we illustrate the workflow of the algorithm and its typical outputs using a 12-node example network (Fig.~\ref{fig:12-network_diagram}) whose homeostatic structure has been fully analyzed in \cite{WHAG21}. All tables and figures in this section, except Fig. \ref{fig:12-network_diagram}, are generated directly from the algorithm.

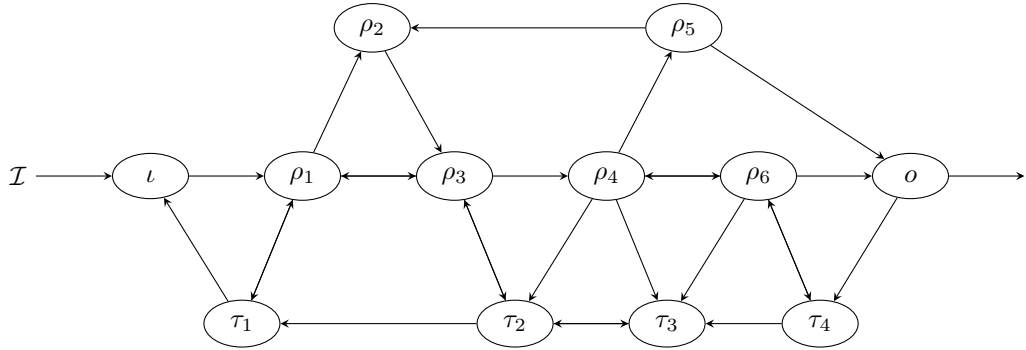
\begin{figure}[!ht]
    \centering
    \begin{tikzpicture}[node distance=2cm and 1cm]
        \node (iota) [process] {$\iota$};
        \node (I) [left=1cm of iota] {$\II$};
        \node (rho1) [process, right of=iota] {$\rho_1$};
        \node (tau1) [process, below left=1.5cm and 0.08cm of rho1] {$\tau_1$};
        \node (rho2) [process, above right=1.5cm and 0.2cm of rho1] {$\rho_2$};
        \node (rho3) [process, right of=rho1] {$\rho_3$};
        \node (rho4) [process, right of=rho3] {$\rho_4$};
        \node (tau2) [process, below right = 1.5cm and 0.08cm of rho3] {$\tau_2$};
        \node (rho5) [process, above right=1.5cm and 0.3cm of rho4] {$\rho_5$};
        \node (rho6) [process, right of=rho4] {$\rho_6$};
        \node (o) [process, right of=rho6] {$o$};
        \node (tau3) [process, right of=tau2] {$\tau_3$};
        \node (tau4) [process, right = 1.0cm of tau3] {$\tau_4$};

        \draw [arrow] (I) -- (iota);
        \draw [arrow] (iota) -- (rho1);
        \draw [arrow] (rho1) -- (rho2);
        \draw [arrow] (rho1) -- (rho3);
        \draw [arrow] (rho1) -- (tau1);
        \draw [arrow] (rho2) -- (rho3);
        \draw [arrow] (rho3) -- (rho1);
        \draw [arrow] (rho3) -- (rho4);
        \draw [arrow] (rho3) -- (tau2);
        \draw [arrow] (rho4) -- (tau2);
        \draw [arrow] (rho4) -- (tau3);
        \draw [arrow] (rho4) -- (rho5);
        \draw [arrow] (rho4) -- (rho6);
        \draw [arrow] (rho5) -- (rho2);
        \draw [arrow] (rho5) -- (o);
        \draw [arrow] (rho6) -- (tau4);
        \draw [arrow] (rho6) -- (o);
        \draw [arrow] (rho6) -- (tau3);
        \draw [arrow] (rho6) -- (rho4);
        \draw [arrow] (tau1) -- (iota);
        \draw [arrow] (tau1) -- (rho1);
        \draw [arrow] (tau2) -- (tau1);
        \draw [arrow] (tau2) -- (rho3);
        \draw [arrow] (tau2) -- (tau3);
        \draw [arrow] (tau3) -- (tau2);
        \draw [arrow] (tau4) -- (tau3);
        \draw [arrow] (tau4) -- (rho6);
        \draw [arrow] (o) -- (tau4);
        \draw [arrow] (o) -- ++(1.5cm, 0);

    \end{tikzpicture}
    \caption{Network Diagram of the 12-node example, adapted from \cite{WHAG21}.}
    \label{fig:12-network_diagram}
\end{figure}

\subsection{12-node Network Example}

In the 12-node network example (Fig.~\ref{fig:12-network_diagram}), input node $\iota$ receives a single input $\mathcal{I}$ and node $o$ is the output node. 
For computational purposes, the algorithm represents network nodes with integer IDs. In this example, nodes are assigned IDs $0$ through $11$, as summarized in Table~\ref{tab:variable_map_12node}. 

\begin{table}[h!]
\centering
\begin{tabular}{clcl}
\toprule
Node ID & Node Label & Node ID & Node Label \\
\midrule
0  & $\iota$ (input node)  & 6  & $\rho_6$  \\
1  & $\rho_1$     & 7  & $\tau_1$    \\
2  & $\rho_2$     & 8  & $\tau_2$   \\
3  & $\rho_3$       & 9  & $\tau_3$     \\
4  & $\rho_4$     & 10  & $\tau_4$     \\
5  & $\rho_5$  & 11 & $o$ (output node)   \\ 
\bottomrule
\end{tabular}
\caption{Mapping between internal node IDs and original node labels for the 12-node example.}
\label{tab:variable_map_12node}
\end{table}

An input data file for this network is shown in Fig.~\ref{fig:12_node_input}. The network structure is specified through an edge list representation, where a pair $(i,j)$ represents an arrow from node $i$ to $j$. The first line lists the node IDs used in the network, while the next two lines designate the input and output nodes. 

\begin{figure}[!htp]
    \centering
\includegraphics[width=\linewidth]{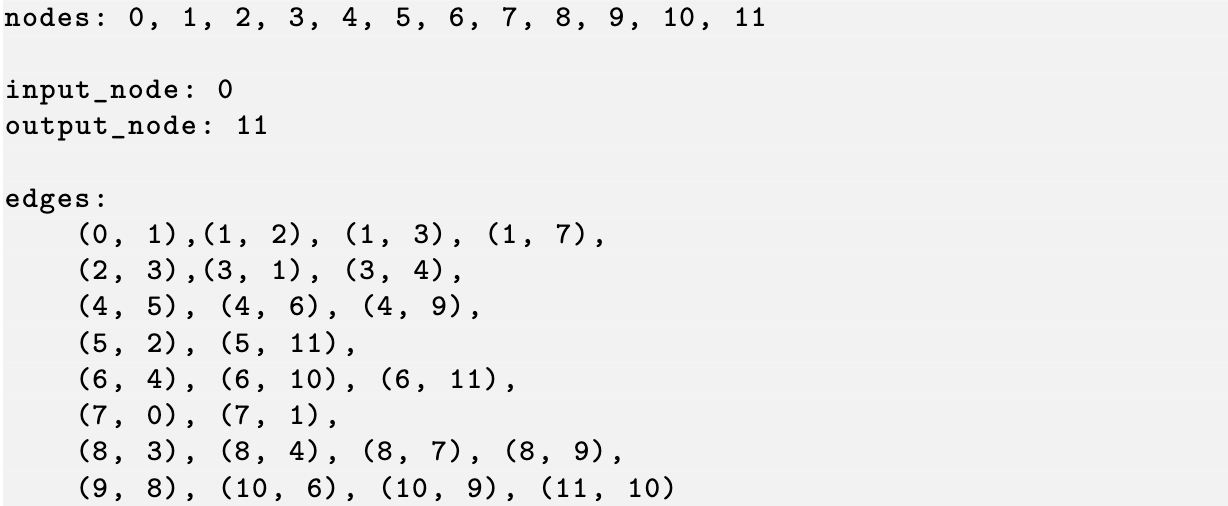}
    \caption{Input data file for the 12-node network example}
    \label{fig:12_node_input}
\end{figure}



Given this input file, the algorithm constructs the corresponding input-output network and extracts the \emph{core network}, as shown in Fig.~\ref{fig: 12-node-core}. 

\begin{figure}[!ht]
    \centering
\includegraphics[width=0.6\textwidth]{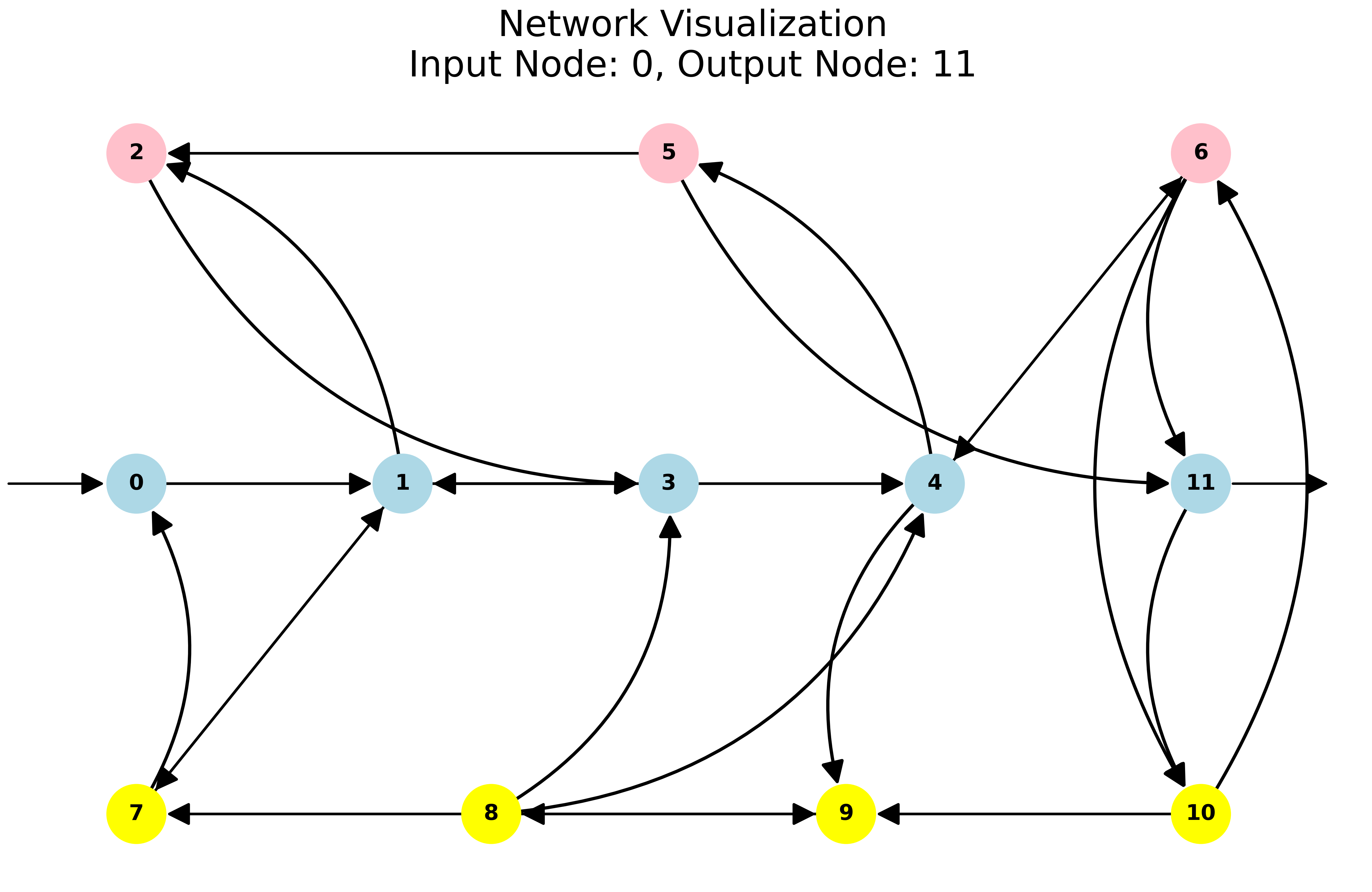}
\caption{Algorithm-generated core network for the 12-node example from Fig.~\ref{fig:12-network_diagram}. Blue: super-simple nodes; pink: simple but non-super-simple nodes; yellow: appendage nodes. }
    \label{fig: 12-node-core}
\end{figure}

Users may optionally provide a label mapping in the input data file to associate each node ID with a user-defined node label (e.g., see Fig.~\ref{fig:Cholesterol_input_output_network} for an input data file of the Cholesterol example in Subsection \ref{sec:chol}). When such a mapping is supplied, all exported tables and network visualizations are generated using these labels (see Fig.~\ref{fig:Cholesterol_output12}).

\subsection{Simple paths and node categories}

Within the core input-output network (Fig.~\ref{fig: 12-node-core}), the algorithm enumerates all \emph{$\iota o$-simple paths} from the designated input node (node $0$) to the designated output node (node $11$), together with their associated \emph{complementary subnetworks}. Based on these outputs, nodes are classified into \emph{super-simple nodes}, \emph{simple nodes}, and \emph{appendage nodes}. The \emph{appendage subnetwork} $\AA_\GG$ consisting of all appendage nodes and their induced edges is also constructed. These outputs are returned by the algorithm in tabular form; some are shown in Fig.~\ref{fig: Simple_Paths_and_Complementary_Subnetworks}.

\begin{figure}[htbp!]
    \centering  \includegraphics[width=\textwidth]{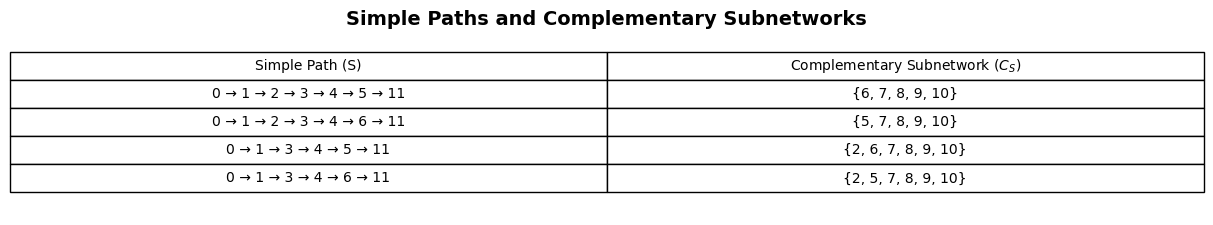}\\    \includegraphics[width=\textwidth]{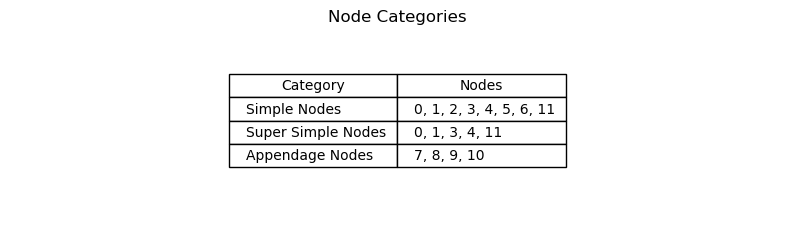}
    \caption{Algorithm-generated tables of simple paths, complementary subnetworks, and node categories for the 12-node example. In the node categories table, super-simple nodes are arranged in downstream order: $0>1>3>4>11$.}
    \label{fig: Simple_Paths_and_Complementary_Subnetworks}
\end{figure}

\begin{remark} \normalfont
In all algorithm-generated network visualizations (e.g., Fig.~\ref{fig: 12-node-core}), blue nodes denote super-simple nodes, pink nodes denote simple but not super-simple nodes, and yellow nodes denote appendage node.
\END
\end{remark}

In addition to network visualizations and tables, the algorithm writes its textual output to an automatically generated file, \texttt{run\_output.txt}. In what follows, boxed blue text reproduces selected excerpts from the output file for the 12-node example.

\subsection{Appendage homeostasis subnetworks}

As a first step in classifying homeostasis subnetworks, the algorithm determines the \emph{appendage path components} from the appendage subnetwork $\AA_\GG$:
\begin{tcolorbox}[colframe=blue!80!black, colback=white, fontupper=\color{blue}\ttfamily]
Appendage Path Components:\\
\{7\}, \{8, 9\}, \{10\}
\end{tcolorbox}

Among these components, those satisfying the no-cycle condition are classified as the appendage homeostasis subnetworks $\mathcal{A}_j$.
\begin{tcolorbox}[colframe=blue!80!black, colback=white, fontupper=\color{blue}\ttfamily]
Appendage Homeostasis Subnetworks (Appendage Components that Satisfy the No-Cycle Condition):\\
Component \{8, 9\}\\
Component \{7\}
\end{tcolorbox}

These homeostasis subnetworks are also returned as network visualizations, shown in Fig.~\ref{fig: 12-node_appendage_homeostasis_subnetworks}.

\begin{figure}[htbp!]
    \centering
    \includegraphics[width=0.7\textwidth]{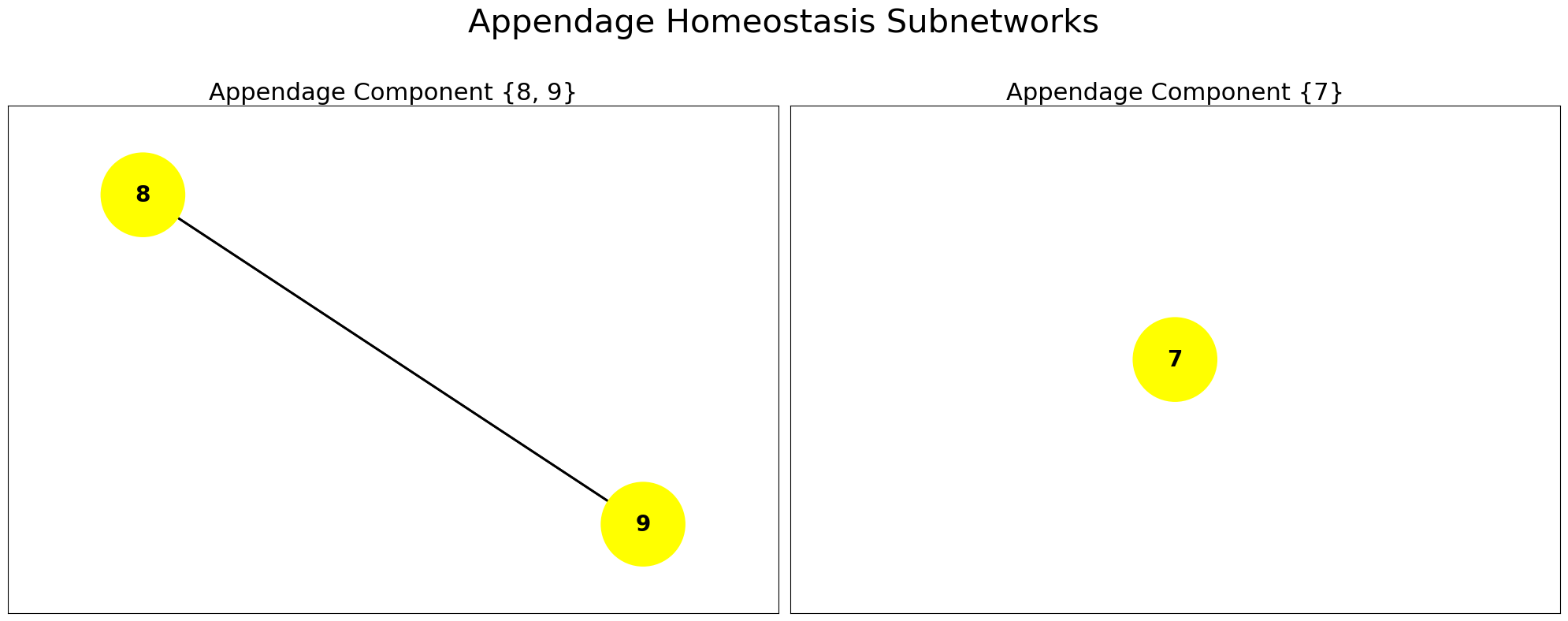}
    \captionsetup{justification=centering}
    \caption{Algorithm-generated appendage homeostasis subnetworks for the 12-node example.}
    \label{fig: 12-node_appendage_homeostasis_subnetworks}
\end{figure}

\subsection{Structural homeostasis subnetworks}

The algorithm next determines the structural homeostasis subnetworks by enumerating all
\emph{super-simple structural subnetworks} constructed from adjacent super-simple nodes. Since there are five super-simple nodes (Fig.~\ref{fig: Simple_Paths_and_Complementary_Subnetworks}), this yields a total of four structural homeostasis subnetworks:
\begin{tcolorbox}[colframe=blue!80!black, colback=white, fontupper=\color{blue}\ttfamily]
Super-Simple Subnetwork: \(\mathcal{L}'(0,\, 1) = \{0,\, 1\}\)\\
Super-Simple Subnetwork: \(\mathcal{L}'(1,\, 3) = \{1,\, 2,\, 3\}\)\\
Super-Simple Subnetwork: \(\mathcal{L}'(3,\, 4) = \{3,\, 4\}\)\\
Super-Simple Subnetwork: \(\mathcal{L}'(4,\, 11) = \{4,\, 5,\, 6,\, 10,\, 11\}\)
\end{tcolorbox}

\begin{figure}[htbp!]
    \centering
    \includegraphics[width=0.6\textwidth]{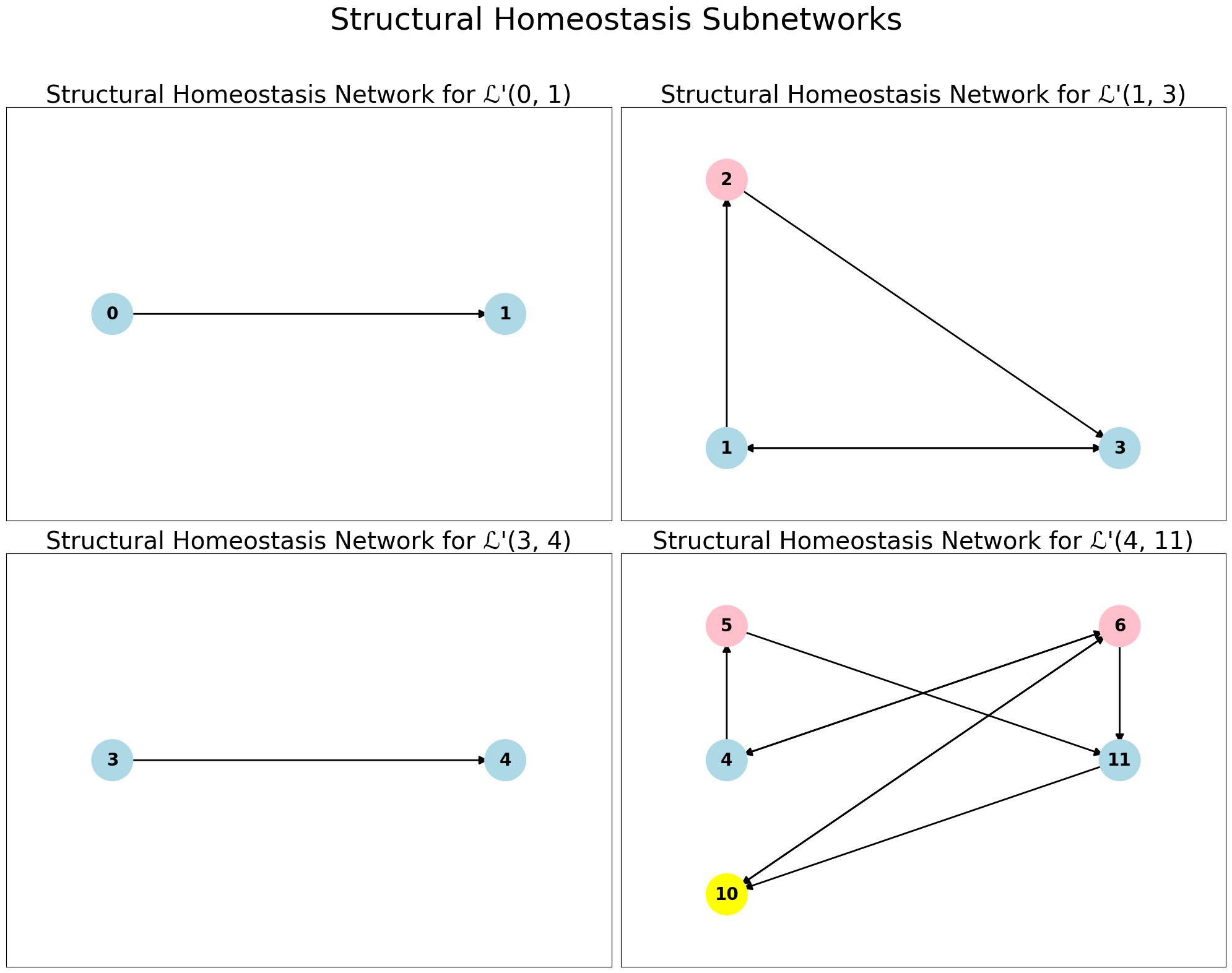}
    \captionsetup{justification=centering}
    \caption{Algorithm-generated structural homeostasis subnetworks for the 12-node example. }
    \label{fig:12-node-structural-homeostasis-subnetworks}
\end{figure}

Visualizations of these structural homeostasis subnetworks are shown in Fig.~\ref{fig:12-node-structural-homeostasis-subnetworks}.

Finally, the algorithm returns a table summarizing all identified homeostasis subnetworks, shown in Fig.~\ref{fig:12-node-homeostasis-subnetworks-table}. 
\begin{figure}[htbp!]
    \centering
    \includegraphics[width=\textwidth]{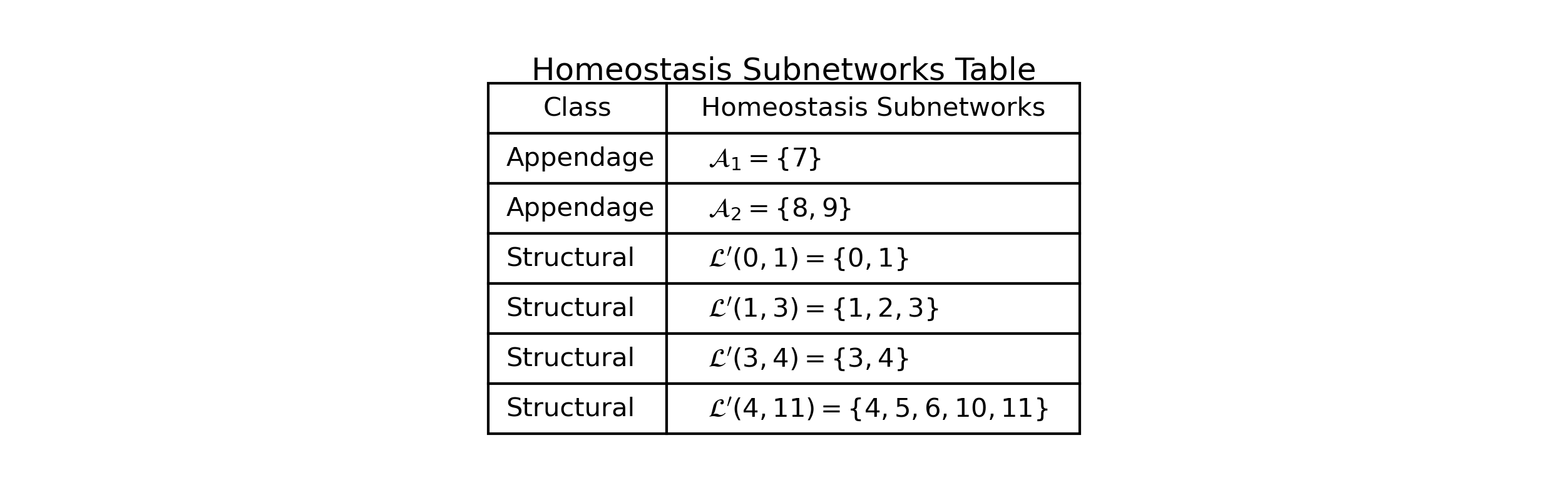}
    \caption{Algorithm-generated table of homeostasis subnetworks for the 
    12-node example.}
    \label{fig:12-node-homeostasis-subnetworks-table}
\end{figure}

For each identified homeostasis subnetwork ($\KK_\eta$), the algorithm also outputs the associated irreducible homeostasis block ($B_\eta$) and saves them to \texttt{run\_output.txt}, as listed below. The corresponding homeostasis condition is $\det(B_\eta)=0$ and $\det(B_\xi)\neq 0$ for all $\xi\neq \eta$. For example, to verify whether homeostasis in the 12-node example is induced by $\AA_2$ (Fig.~\ref{fig:12-node-homeostasis-subnetworks-table}), one evaluates the corresponding matrix block at the equilibrium and checks whether its determinant $f_{8, x_8} f_{9, x_9} - f_{8, x_9} f_{9, x_8}$ vanishes at the homeostasis point. 

\begin{tcolorbox}[colframe=blue!80!black, colback=white, fontupper=\color{blue}\ttfamily]
Homeostasis block for $\AA_1$:
\[
\begin{bmatrix}
f_{7,x_7}
\end{bmatrix}
\]

Homeostasis block for $\AA_2$:
\[
\begin{bmatrix}
f_{8,x_8} & f_{8,x_9} \\
f_{9,x_8} & f_{9,x_9}
\end{bmatrix}
\]

Homeostasis block for $\mathcal{L}'(0,1)$:
\[
\begin{bmatrix}
f_{1,x_0}
\end{bmatrix}
\]

Homeostasis block for $\mathcal{L}'(1,3)$:
\[
\begin{bmatrix}
f_{2,x_1} & f_{2,x_2} \\
f_{3,x_1} & f_{3,x_2}
\end{bmatrix}
\]

Homeostasis block for $\mathcal{L}'(3,4)$:
\[
\begin{bmatrix}
f_{4,x_3}
\end{bmatrix}
\]

Homeostasis block for $\mathcal{L}'(4,11)$:
\[
\begin{bmatrix}
f_{5,x_4}  & f_{5,x_5}  & 0         & 0 \\
f_{6,x_4}  & 0          & f_{6,x_6}  & f_{6,x_{10}} \\
0          & 0          & f_{10,x_6} & f_{10,x_{10}} \\
0          & f_{11,x_5} & f_{11,x_6} & 0
\end{bmatrix}
\]

\end{tcolorbox}

The results obtained by the algorithm for this example are fully consistent with the analytic classification reported in \cite{WHAG21}, validating the accuracy of our algorithmic framework.

\section{Applications}\label{sec:application}

In this section, we apply our algorithm to a series of biologically motivated network models, ranging from small to relatively large networks and covering both single-input and multiple-input network architectures. We consider both networks in which the input and output nodes are distinct and networks in which the input coincides the output. Our goal is to automatically identify all homeostasis motifs that would be difficult or impractical to detect manually, particularly in large networks. For networks of small to intermediate size, the algorithm results are validated against analytic classification results available from previous studies or carried out in the present work.

\subsection{Cholesterol network with single input node}\label{sec:chol}

Our first example is a model of intracellular cholesterol regulation proposed by \cite{pool2018integrated}. This is a single-input-single-output network consisting of 12 nodes. In the corresponding \emph{input data file} (Fig.~\ref{fig:Cholesterol_input_output_network}), we assign IDs $1$ through $12$ to the network nodes (i.e., model variables), as summarized in Table \ref{tab:variable_map_cholesterol}. 

\begin{table}[h!]
\centering
\resizebox{\textwidth}{!}{%
\begin{tabular}{clp{4.5cm}clp{4.5cm}}
\toprule
Nodes & Variables & Biological name & Nodes & Variables & Biological name \\
\midrule
1  & $m_h$ & HMGCR mRNA    & 7  & $v_E$ (input node)  & free VLDL \\
2  & $m_r$  & LDLR mRNA  & 8  & $v_{RB}$ &receptor bound VLDL\\
3  & $h$   &  HMGCR  & 9  & $v_I$ &  internalised
VLDL  \\
4  & $l_E$  & free LDL   & 10 & $r_f$ &  free unbound receptors  \\
5  & $l_{RB}$& receptor bound LDL & 11 & $r_I$  &  internalised receptors \\
6  & $l_I$  & internalised LDL  & 12 & $c$ (output node)   & intracellular cholesterol \\
\bottomrule
\end{tabular}}
\caption{Node IDs, variable names and biological meanings for the cholesterol model. LDL: low density lipoprotein; VLDL: very low density lipoprotein.}
\label{tab:variable_map_cholesterol}
\end{table}

Based on the model formulation in \cite{pool2018integrated}, we first write down the generic admissible system (see \eqref{eq:cholesterol} in Appendix \ref{app:chol}). The corresponding mathematical network structure can be encoded in the input data file using the admissible system, its adjacency matrix or edge list. Here we use an edge list representation (see Fig.~\ref{fig:Cholesterol_input_output_network}). Below this list, a mapping from node IDs to their corresponding variable names is also included. The algorithm results are therefore reported using these labels rather than numeric node IDs (see Fig.~\ref{fig:Cholesterol_output12}). 

\begin{figure}[htbp!]
    \centering
    \includegraphics[width=0.5\textwidth]{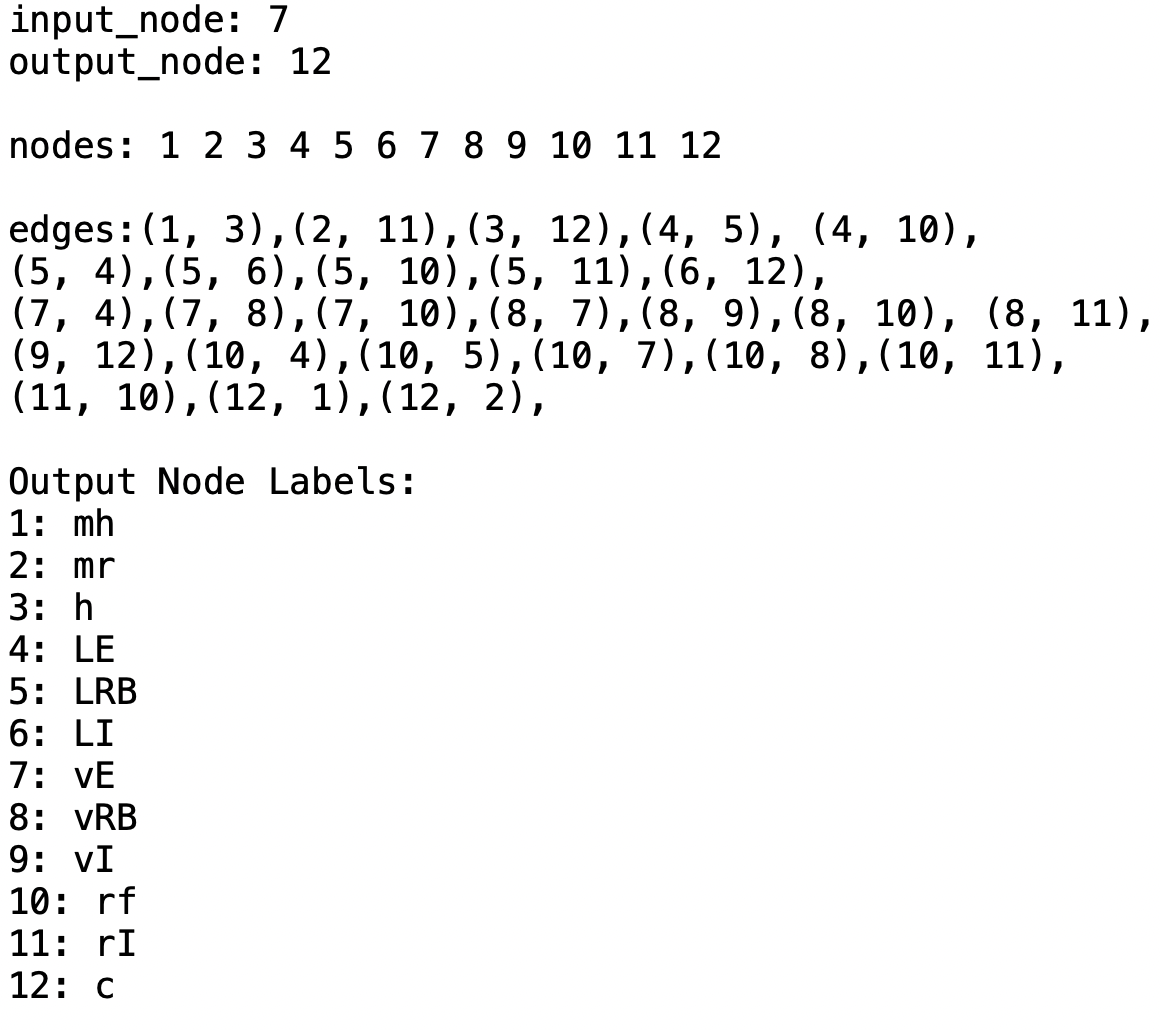}
    \caption{Input data file for the cholesterol network.}
\label{fig:Cholesterol_input_output_network}
\end{figure}

Fig.~\ref{fig:Cholesterol_output12} includes a visualization of the cholesterol network with node $v_E$ designated as the input node and node $c$ as the output node, a table of $\iota o$-simple paths, a table of node categories, and the final classification table listing all identified homeostasis subnetworks. Consistent with the analytic classification results in \cite{duncan2024homeostasis}, our algorithm identifies the same homeostasis subnetworks: one structural homeostasis motif $\LL'\{v_E, c\}$ (see its network visualization in Fig.~\ref{fig:Cholesterol_output12}, lower left), and three appendage subnetworks, each consisting of a single appendage node. 

To determine which of these mechanisms underlies homeostasis of the output node $c$ in the full biochemical model, the corresponding homeostasis conditions ($\det(B_\eta)=0$) must be checked. The algorithm lists all the irreducible homeostasis blocks $B_\eta$ in the output file. We also provide them in Appendix \ref{app:chol}, Fig.~\ref{fig:cholesterole-homeostasis-conditions}. 

\begin{figure}[t!]
\centering
\includegraphics[width=\textwidth]{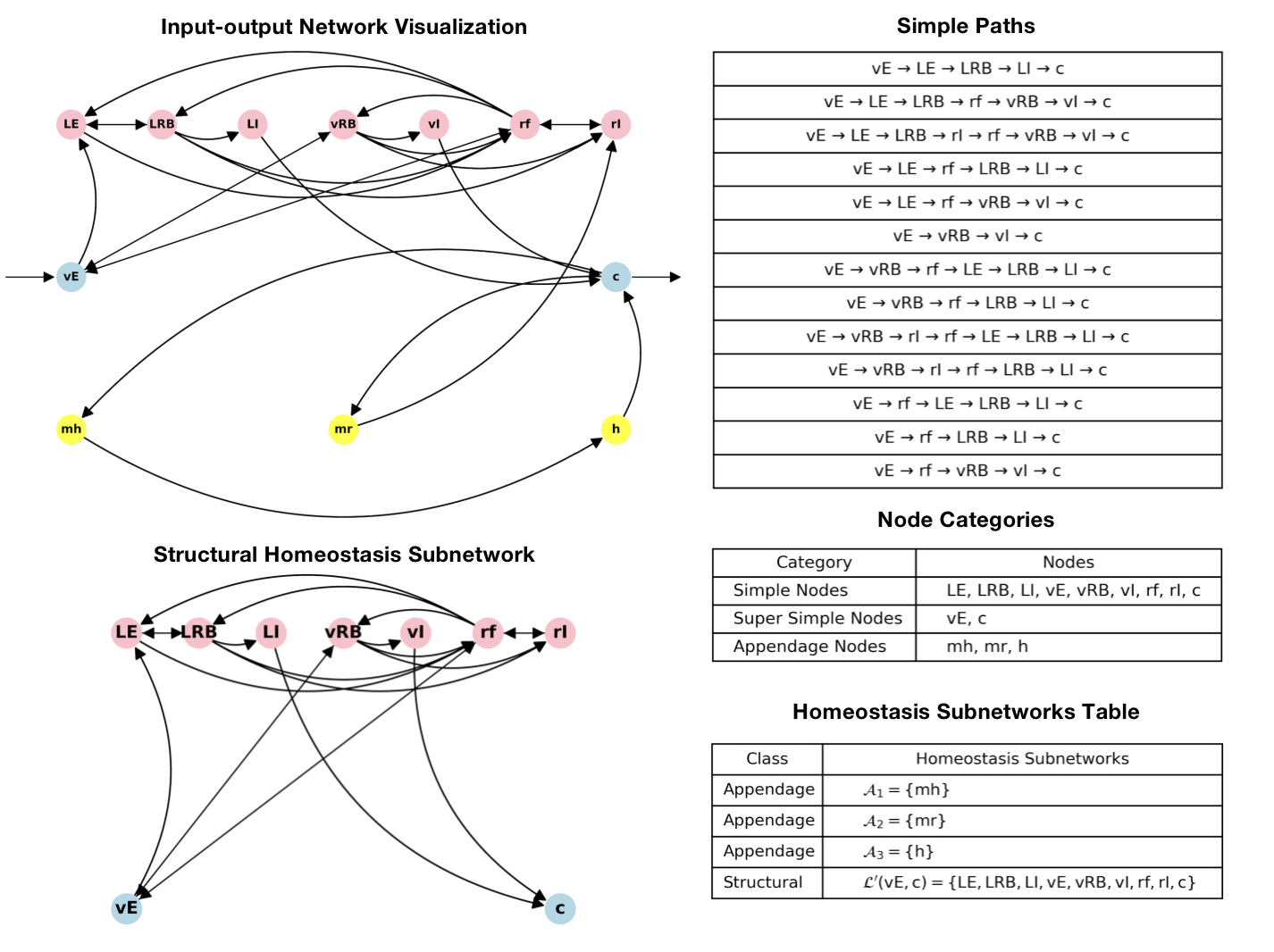}
\caption{Algorithm-generated outputs for the cholesterol model. Blue: super-simple nodes; pink: simple but non-super-simple nodes; yellow: appendage nodes.}
\label{fig:Cholesterol_output12}
\end{figure}

\begin{remark} \normalfont
\label{rm:conservation}
For demonstration purposes, we consider admissible systems under the assumption that all variables in the original ODE system (in this example and all subsequent examples) are independent. 
Biochemical network models, however, frequently exhibit conservation laws or other algebraic constraints. 
When such constraints are present, the system should first be reduced to an equivalent ODE formulation involving only independent variables (see \cite{oellerich2021,jin2026}). 
For example, if a conservation relation of the form $x_1+x_2=C$ holds, where $C$ is a constant, one eliminates a dependent variable (e.g., by substituting $x_1=C-x_2$) and writes the admissible system in terms of the remaining independent variables. 
The algorithm then applies to the reduced system without modification.
\END
\end{remark}

\subsection{Chemotaxis network with multiple input nodes}\label{sec:chemotaxis}

The mathematical modeling of chemotaxis can be roughly divided into two types: single cell models and bacterial population models~\cite{tindall08a,tindall08b}.
Single cell models consider the activation of the flagellar motor by detection of attractants and repellents in the extracellular medium. 
The flagellar motor activity of bacteria is regulated by a signal transduction pathway, which integrates changes of environmental chemical concentrations into a behavioral response. 
Assuming mass-action kinetics, the reactions in the signal transduction pathway can be modeled mathematically by ODEs. 
The population models describe evolution of bacterial density by parabolic PDEs allowing movement up-the-gradient, the most prominent feature of chemotaxis. 

Understanding the response of bacteria such as \textit{E. coli} to external attractants has been the subject of experimental work and mathematical models for nearly 40 years. 
Many models of the chemotaxis have been formulated and developed to provide a comprehensive description of the cellular processes and include details of receptor methylation, ligand-receptor binding and its subsequent effect on the biochemical signaling cascade, along with a description of motor driving CheY/CheY-P levels, the main output of the chemotaxis system  (see~\cite{tindall08a} for a survey).

Here, we consider a minimal model for the \textit{E. coli} response proposed by \cite{clausznitzer10,edgington15,edgington18}, which nevertheless is in good agreement with experimental findings. 
It has four variables for the concentrations of CheA/CheA-P ($a_p$), CheY/CheY-P ($y_p$), CheB/CheB-P ($b_p$) and the receptor methylation ($m$) and is given by the following system of ODEs (in non-dimensional form):
\begin{equation} \label{original_e_coli}
\begin{aligned}
& \frac{d m}{d t} = \gamma_{R}\, (1 - \phi(m,L)) - \gamma_{B} \, \phi(m,L) \, b_{p}^{2} \\
& \frac{d a_{p}}{d t} = \phi(m,L) \, k_{1} \, (1 - a_{p}) - k_{2} \, (1 - y_{p})\, a_{p} - k_{3} \, (1 - b_{p}) \, a_{p} \\
& \frac{d y_{p}}{d t} = \alpha_{1} \, k_{2} \, (1 - y_{p})a_{p} - k_{4} \, y_{p} \\
& \frac{d b_{p}}{d t} = \alpha_{2} \, k_{3} \, (1 - b_{p}) \, a_{p} - k_{5} \, b_{p}
\end{aligned}
\end{equation}
where $\gamma_B$, $\gamma_R$, $k_1,\ldots,k_5$ are non-dimensional parameters, the extracellular ligand concentration $L$ is the \emph{external parameter} and the function $\phi$ is determined by a  Monod–Wyman–Changeux (MWC) description of receptor clustering
\begin{equation} \label{definition_phi}
    \phi (m,L) = \frac{1}{1 + e^{F(m,L)}} 
    \qquad\text{with}\qquad 
    F(m,L) = N\left[ 1 - \frac{m}{2} + \log \left( \frac{1 + \frac{L}{K_{a}^{\textrm{off}}}}{1 + \frac{L}{K_{a}^{\textrm{on}}}}\right) \right]
\end{equation}

The key observation of~\cite{edgington18} is that system \eqref{original_e_coli} has a unique asymptotically stable equilibrium $X^*=(m^*,a_p^*,y_p^*,b_p^*)$, with $a_p^*$, $y_p^*$ and $b_p^*$ positive and $m^{*}$ is a real number, for the non-dimensional parameters obtained from the parameter values originally used in~\cite{clausznitzer10}. 
Furthermore, \cite{edgington18} were able to show that some pairs of parameters might yield oscillatory behavior, but in regions of parameter space that are outside the experimentally relevant.

\begin{figure}[!ht]
\centering
\includegraphics[width=0.4\linewidth]{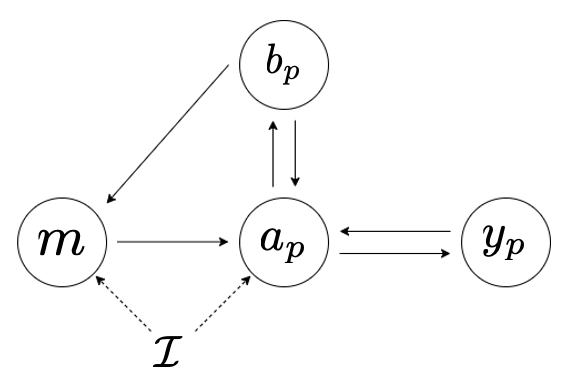} \qquad
\includegraphics[width=0.4\linewidth]{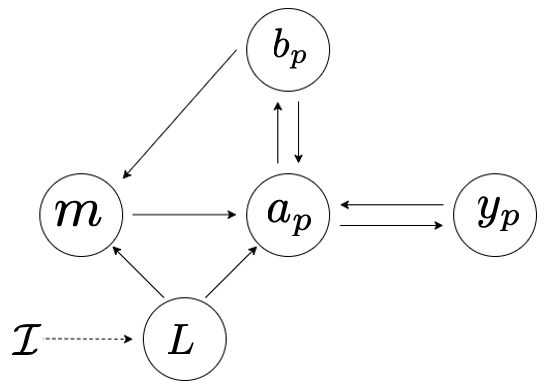}
\caption{\label{f:e_coli_network} (Left ) Network $\GG$ corresponding to the model equations \eqref{original_e_coli} for \emph{E. coli} chemotaxis.
(Right) Augmented network $\GG^\diamond$ obtained from $\GG$.}
\end{figure}

The network corresponding to the model equations \eqref{original_e_coli} has two input nodes $m$ and $a_p$ (Fig.~\ref{f:e_coli_network}, left).
In Table \ref{tab:variable_map_chemotaxis} we list the variables
of model \eqref{original_e_coli} and the corresponding nodes in the network.
As described in Remark \ref{rem:multi-input}, in order to apply the algorithm, one must first convert this multiple-input-node network into a single-input-node representation. To do so, we introduce a new input node $L$ and connect it to both $m$ and $a_p$. This converts the original multiple-input-node network $G$ to the augmented single-input-node network $G^\diamond$ (Fig.~\ref{f:e_coli_network}, right). By Theorem \ref{thm:GenH}, this reformulation preserves both the homeostasis matrix and the homeostasis subnetworks of the original network system. 

\begin{table}[!ht]
\centering
\begin{tabular}{cl l}
\toprule
Nodes & Variables & Biological name \\
\midrule
1 & $L$ (input node) & Extracellular ligand\\
2 & $m$ & Receptor methylation \\
3 & $a_p$ & CheA/CheA-P \\
4 & $b_p$ & CheB/CheB-P\\
5 & $y_p$ (output node) & CheY/CheY-P\\
\bottomrule
\end{tabular}
\caption{Node IDs, variable names and biological meanings for the chemotaxis model.}
\label{tab:variable_map_chemotaxis}
\end{table}

Representative outputs obtained by the classification algorithm applied to the augmented network $G^\diamond$ are shown in Fig.~\ref{fig:Chemotaxis_example}, with a layout similar to that in Fig.~\ref{fig:Cholesterol_output12}. The algorithm identifies three homeostasis mechanisms, summarized in the lower-right table. Among them is a structural homeostasis motif, $\LL'\{L, a_p\}=\{L,m,a_p\}$, arising from the balance between the two paths $L\to m\to a_p$ and $L\to a_p$. A visualization of this motif is shown on the right. The other two mechanisms correspond to substrate inhibition along the edge $a_p\to y_p$ and null-degradation on the appendage node $b_p$. 
Going back to the model equations \eqref{original_e_coli} and using the homeostasis condition returned by the algorithm (not shown), it can be checked that the structural homeostasis motif, $\LL'\{L, a_p\}$, is the one responsible for the homeostasis observed in this model.

In the original approach of \cite{madeira2022homeostasis} the homeostasis subnetworks are computed by hand, using the method described in the paper.
The result of the calculation is that there are three types of homeostasis: (1) appendage (null-degradation) homeostasis associated with $b_p$, (2) structural (Haldane) homeostasis associated with $a_p\to y_p$ and (3) input counterweight homeostasis associated with $m \to a_p$.
This is consistent with the result obtained here.
The input counterweight homeostasis subnetwork $m \to a_p$ corresponds to the structural block $\LL'\{L, a_p\}=\{L,m,a_p\}$, when the node $L$ is removed from the network.

\begin{figure}[htbp!]
    \centering
    \includegraphics[width=\textwidth]{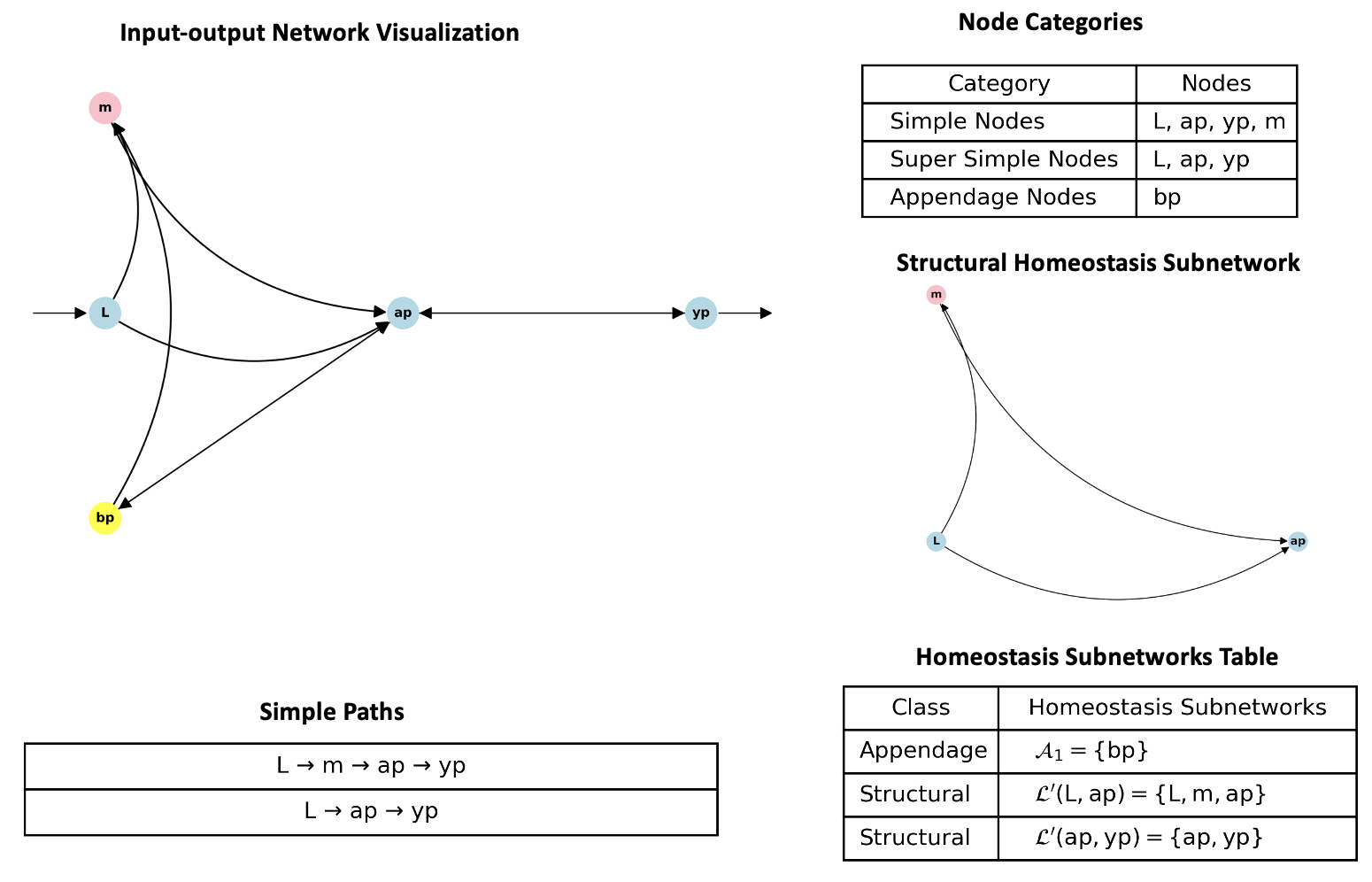}
    \caption{Algorithm-generated outputs for the chemotaxis model. Blue: super-simple nodes; pink: simple but non-super-simple nodes; yellow: appendage nodes.}
    \label{fig:Chemotaxis_example}
\end{figure}

\subsection{Dopamine network with multiple input nodes}\label{subsec:dopamine}

Best et al. \cite{best2009homeostatic,NBR14} proposed a mathematical model to study homeostasis of extracellular dopamine (eDA) in response to variations in the activities of the enzyme tyrosine hydroxylase (TH) and the dopamine transporters (DATs). This model is given by a system of differential equations describing the underlying biochemical network (see, \cite{best2009homeostatic}, Figure 1).
Their modeling and simulation results show that eDA remains approximately constant over a range of TH and DAT activities, indicating the capability of the network to maintain eDA homeostasis. 

\begin{table}[!htp]
\centering
\resizebox{\textwidth}{!}{%
\begin{tabular}{clp{4.5cm}clp{4.5cm}}
\toprule
Nodes & Variables & Biological name & Nodes & Variables & Biological name \\
\midrule
1  & TH (input node)  & Tyrosine hydroxylase    & 6  & cDa & Cytosolic dopamine \\
2  & bh2  &  Dihydrobiopterin & 7  & vDa& Vesicular dopamine  \\
3  & bh4  &  Tetrahydrobiopterin & 8 & DAT & Dopamine transporter\\
4  & tyr & Tyrosine & 9  & tyrpool & The tyrosine pool \\
5  & L-dopa & 3,4-dihyroxyphenylalanine &  10 & eDa (output node) & Extracellular dopamine \\
\bottomrule
\end{tabular}}
\caption{Mapping of node IDs to variable names and biological meanings for the dopamine model.}
\label{tab:variable_map_dopamine}
\end{table}

\begin{figure}[htbp!]
\centering
\includegraphics[width=\textwidth]{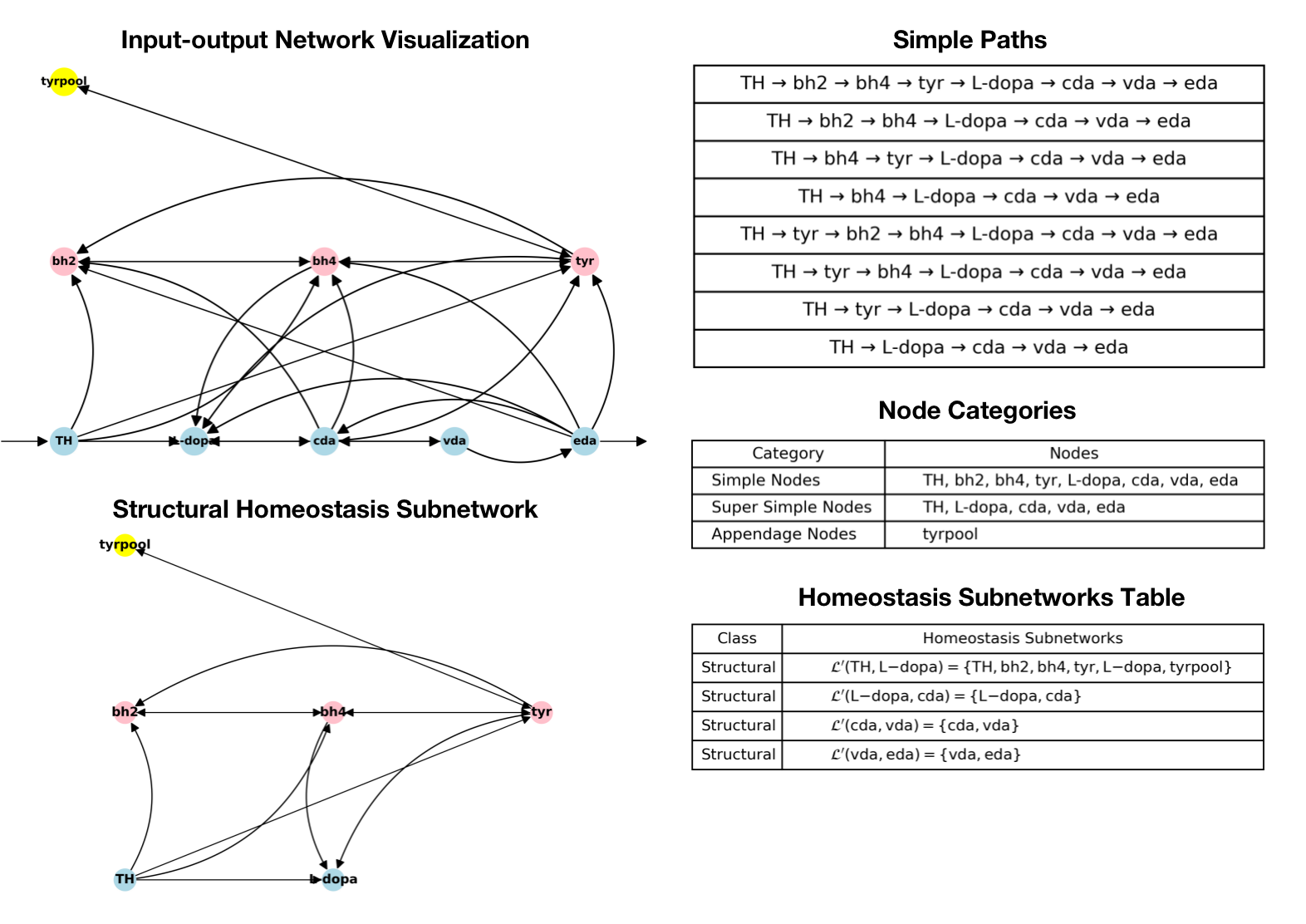}
\caption{Algorithm-generated outputs for the dopamine model. Blue: super-simple nodes; pink: simple but non-super-simple nodes; yellow: appendage nodes.}
\label{fig: Dopamine_Example_Output}
\end{figure}

Here, we focus on the case in which TH activity is treated as the input parameter and apply our algorithm to identify homeostatic mechanisms embedded in the full biological network structure. Since variations in TH activity influence four variables, bh2, bh4, tyr and L-dopa, the corresponding network is a single-input-single-output network $G$ with multiple input nodes. As discussed in Subsection \ref{sec:chemotaxis}, applying the algorithm requires converting $G$ into an augmented network $G^\diamond$ by introducing a new input node TH, and connecting it to nodes bh2, bh4, tyr and L-dopa. The core network of $G^\diamond$ is shown in Fig.~\ref{fig: Dopamine_Example_Output} (upper left). Note that the node DAT is excluded from the core network. Table \ref{tab:variable_map_dopamine} lists the model variables, their biological names, and the corresponding nodes in $G^\diamond$.

Representative outputs obtained from applying the algorithm to $G^\diamond$ are shown in Fig. \ref{fig: Dopamine_Example_Output}, with a similar layout as in Fig.~\ref{fig:Cholesterol_output12}.
For a general admissible system respecting the structure of the dopamine biological network, infinitesimal homeostasis of eDA can arise through four distinct mechanisms, summarized in the lower-right table. Three of these correspond to substrate inhibition along edge L-dopa $\to$ cDa, cDa $\to$ vDa, or vDA $\to$ eDa. The fourth homeostasis mechanism arises through a larger structural subnetwork $\mathcal{L}'\{\text{TH},\text{L-dopa}\}$ consisting of nodes TH, bh2, bh4, tyr, L-dopa and tyrpool. A visualization of this subnetwork is displayed in the lower-left panel of Fig. \ref{fig: Dopamine_Example_Output}. The homeostasis blocks corresponding to these four mechanisms are provided in Appendix \ref{app:dopamine-block}, Fig.~\ref{fig:dopamine-homeostasis-conditions}.

To validate the outputs of our algorithm, we also compute the Jacobian matrix and the homeostasis matrix $H$ of the dopamine admissible system \eqref{eq:DA-admissible}, and directly determine the irreducible factors of $\det(H)$ whose vanishing corresponds to homeostasis mechanisms (see Appendix \ref{app:dopa-compute}). These results are consistent with the algorithm outputs.

\subsection{Hepatic one-carbon metabolism coupled with methionine, choline and betaine synthesis: multiple inputs and multiple choices of output nodes}
\label{sec:sexdiff}

In this subsection, we apply the algorithm to a 17-node input-output biochemical network underlying the model of hepatic one-carbon metabolism proposed in \cite{sadre2018sex}. 
We examine homeostasis mechanisms in choline (cho) and homocysteine (hcy) under variations in input parameters methionine and methylenetetrahydrofolate reductase (MTHFR) activity. 
This network contains two distinct input parameters and two possible output nodes. 
The remaining state variable (network node) names and their biological meanings are listed in Table \ref{tab:variable_map_sex_difference}. 

\begin{table}[!ht]
\centering
\resizebox{\textwidth}{!}{%
\begin{tabular}{clp{4.5cm}clp{4.5cm}}
\toprule
Nodes & Variables & Biological name & Nodes & Variables & Biological name\\
\midrule
1 & met & Methionine  & 10 & mthf &5-methyltetrahydrofolate   \\
2 & sam & S-adenosylmethionine    & 11 & gnmt & Glycine N-methyltransferase    \\
3 & sah & S-adenosylhomocysteine      & 12 & gnmtf & GNMT-5mTHF   \\
4 & hcy & Homocysteine      & 13 & fgnmtf &  5mTHF-GNMT-5mTHF  \\
5 & dhf & Dihydrofolate      & 14 & bet& Betaine  \\
6 & thf & Tetrahydrofolate      & 15 & bet-ald & Betaine aldehyde \\
7 & fthf & 10-formyltetrahydrofolate     & 16 & cho & Choline     \\
8 & ch &  5,10-methenyltetrahydrofolate     & 17 & pc & PtCho      \\
9 & ch2 & 5,10-methylenetrahydrofolate    &    &         \\
\bottomrule
\end{tabular}}
\caption{Mapping of node IDs to variable names and biological meanings for the hepatic one-carbon metabolism model.}
\label{tab:variable_map_sex_difference}
\end{table}

Variations in the first input methionine only affects a single input node \texttt{met}. 
We can therefore directly apply our algorithm and classify all homeostasis mechanisms with either \texttt{cho} or \texttt{hcy} as the output node. Representative outputs for these cases are shown in Figs.~\ref{fig:sex_difference_example_1_16} and \ref{fig:sex_difference_example_1_4}. 
In contrast, the other input parameter, MTHFR activity, influences two state variables \texttt{ch2} and \texttt{mthf}.
This leads to a multiple-input node network $G$ and, as discussed previously, can be handled by introducing a new input node denoted as \texttt{MTHFR} and adding edges from this node to \texttt{ch2} and \texttt{mthf} (see the augmented input-output networks $G^\diamond$ in Fig.~\ref{fig:sex_difference_example_MTHFR}). 

\begin{figure}[!htbp]
\centering
\includegraphics[width=\textwidth]{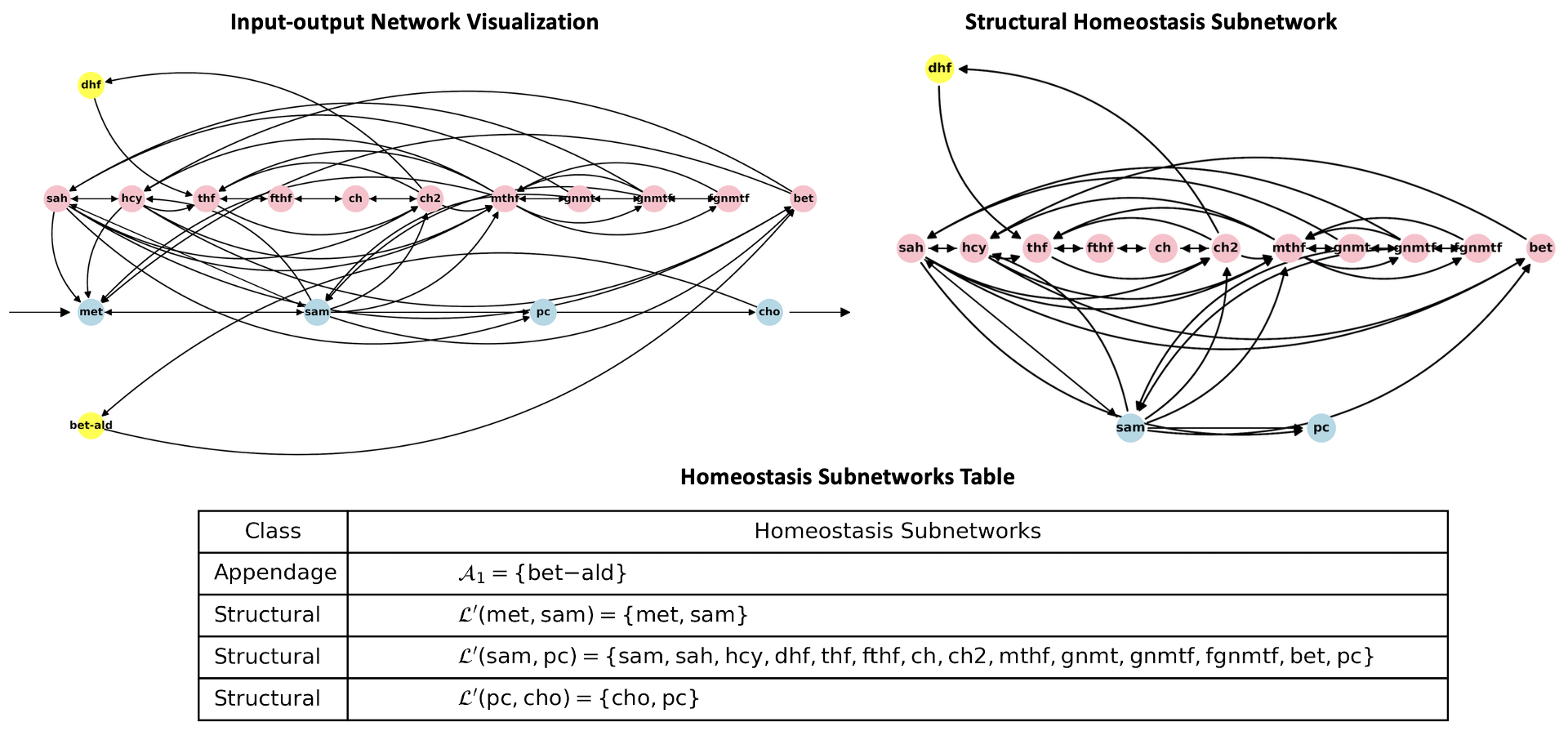}
\caption{Algorithm-generated outputs for the one-carbon metabolism model with input node \texttt{met} and output node \texttt{cho}. Blue: super-simple nodes; pink: simple but non-super-simple nodes; yellow: appendage nodes.
There are 54 $\iota o$-simple paths in total.}
\label{fig:sex_difference_example_1_16}
\end{figure}

When \texttt{met} is the input node and \texttt{cho} is the output (Fig.~\ref{fig:sex_difference_example_1_16}), there are 54 $\iota o$-simple paths. The algorithm identifies four homeostasis mechanisms: one appendage subnetwork consisting of a single homeostasis node \texttt{bet-ald}, and three structural homeostasis subnetworks. Two of the structural mechanisms arise from substrate inhibition along either the edge \texttt{met}$\to$\texttt{sam} or the edge \texttt{pc}$\to$\texttt{cho}. The third is a 14-node structural homeostasis subnetwork $\mathcal{L}'\{\texttt{sam},\texttt{pc}\}$, whose structure is displayed in Fig.~\ref{fig:sex_difference_example_1_16} (upper right).   

\begin{figure}[!htbp]
\centering
\includegraphics[width=\textwidth]{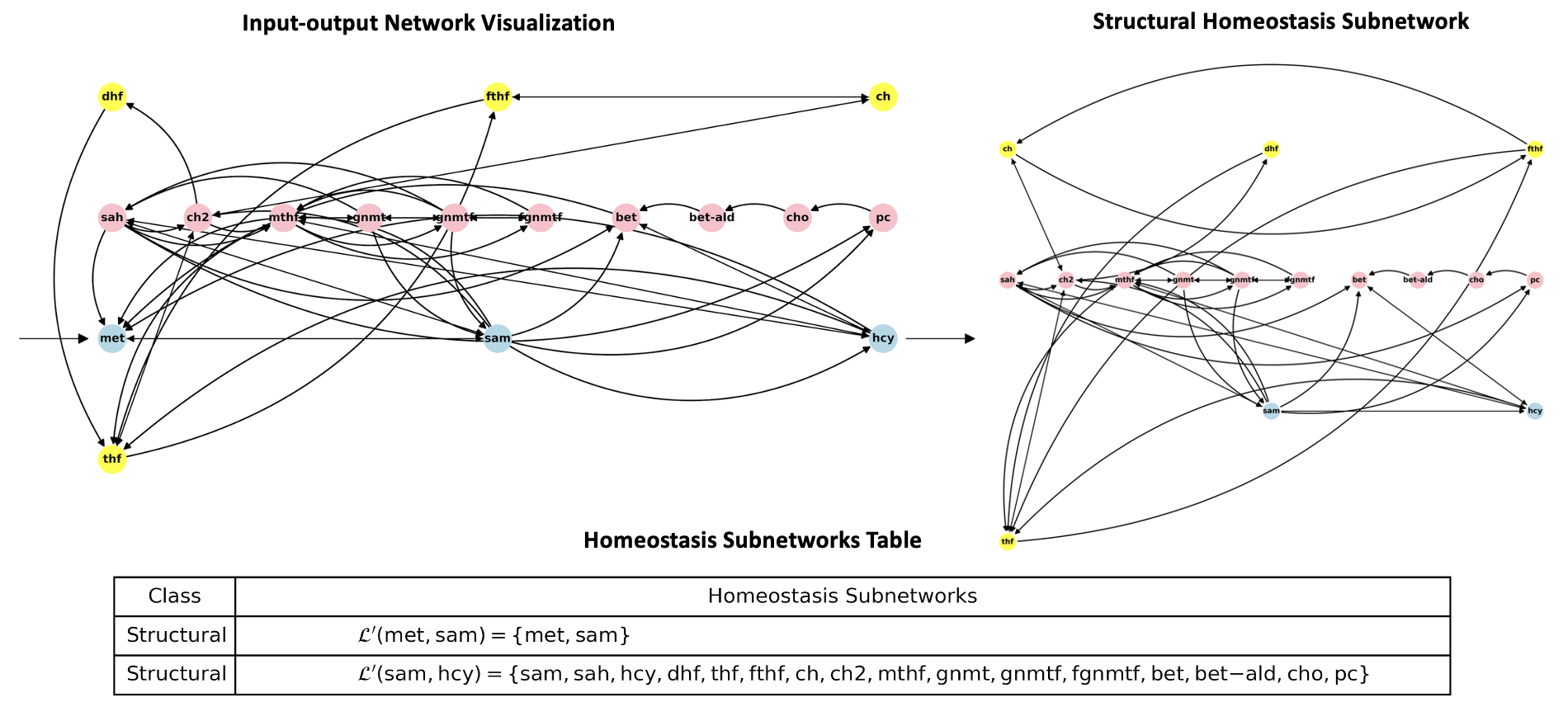}
\caption{Algorithm-generated outputs for the one-carbon metabolism model with input node \texttt{met} and output node \texttt{hcy}, with a total of $45$ $\iota o$-simple paths. Blue: super-simple nodes; pink: simple but non-super-simple nodes; yellow: appendage nodes.}
\label{fig:sex_difference_example_1_4}
\end{figure}

When \texttt{met} is the input node and \texttt{hcy} the output (Fig.~\ref{fig:sex_difference_example_1_4}), there are 45 $\iota o$-simple paths. In this case, the algorithm detects two structural homeostasis mechanisms. Infinitesimal homeostasis in \texttt{hcy} can arise either via substrate inhibition along \texttt{met}$\to$\texttt{sam}, or through a 16-node structural subnetwork $\mathcal{L}'\{\texttt{sam},\texttt{hcy}\}$, whose network structure is displayed in Fig.~\ref{fig:sex_difference_example_1_4} (upper right). 


When the input node is \texttt{MTHFR} and output node is \texttt{cho}, the algorithm detects four homeostasis mechanisms (see Fig.~\ref{fig:sex_difference_example_MTHFR}A): one appendage and three structural. 
In particular, two blocks ($\AA_1=\{\texttt{bet-ald}\}$ and $\LL'\{\texttt{pc, cho}\}$) are identical to those identified when \texttt{met} is the input node. These correspond to pleiotropic homeostasis types, in which the vanishing of either block leads to homeostasis in \texttt{cho} with respect to both methionine input and MTHFR activity
(see Definition \ref{definition_classification_homeostasis_pleio_coinc}). 
By contrast, when \texttt{MTHFR} is the input node and \texttt{hcy} is the output node, the algorithm identifies two structural homeostasis mechanisms (see Fig.~\ref{fig:sex_difference_example_MTHFR}B). Neither coincides with those obtained when \texttt{met} is the input node.
Therefore, in this case there is only coincidental homeostasis.

\begin{figure}[!htbp]
\centering
\includegraphics[width=\textwidth]{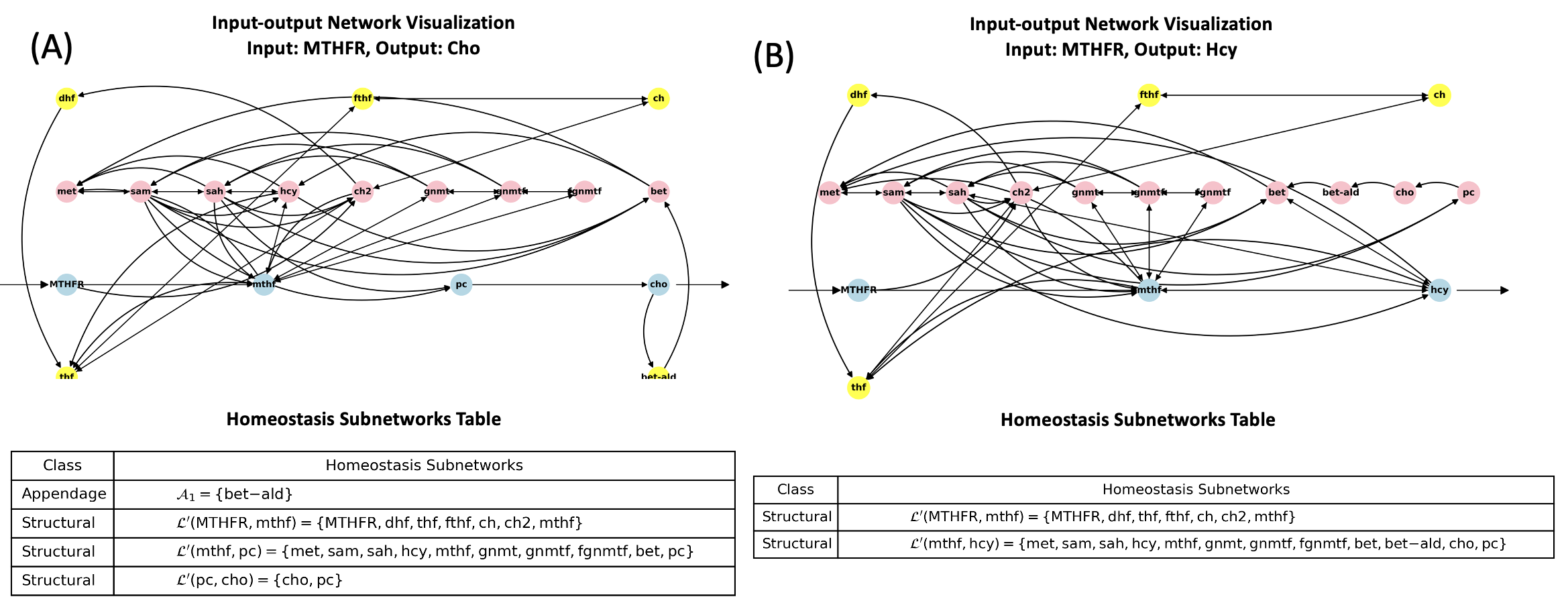}
\caption{Algorithm-generated outputs for the one-carbon metabolism model with the input node \texttt{MTHFR} and output node (A) \texttt{cho} and (B) \texttt{hcy}. Blue: super-simple nodes; pink: simple but non-super-simple nodes; yellow: appendage nodes.}
\label{fig:sex_difference_example_MTHFR}
\end{figure}


\subsection{Hepatic one carbon metabolism coupled with glutathione synthesis pathway: moderate size network with multiple input node choices}

In this subsection, we apply the algorithm to a 34-node input-output biochemical network underlying the model of hepatic one-carbon metabolism coupled with the glutathione synthesis pathway proposed in 
\cite{reed2008mathematical}.
This is an extension of the model for hepatic mitochondrial folate metabolism from \cite{nijhout2006}, designed to include cysteine and glutathione metabolism. 
This mathematical model is quite
complicated, since it includes all of one carbon metabolism and not just the
transsulfuration pathway in order to couple with the glutathione synthesis pathway.

The 34 variables (nodes) are listed in Table~\ref{tab:variable_map_glutathione}.
The output node is node 14 (\texttt{bglu}). 
We consider two choices of input node and input parameter:
\begin{itemize}
\item Node 13 (\texttt{5mf}). 
Here the input parameter $\II$ is total concentration of cellular folate (one carbon metabolism), see Figure \ref{fig:Glutathione_input13} for the corresponding algorithm outputs. 
\item Node 17 (\texttt{cglu}).
Here, the input parameter $\II$ is the total concentration of cellular glutamate (glutamate synthesis pathway), see Figure \ref{fig:Glutathione_input41} for the corresponding altorighm outputs. 
\end{itemize}

In both cases, there is only one single homeostatic mechanism, given by a large structural block that involves all nodes in the core network. 

\begin{remark} \normalfont
In Figs.~\ref{fig:Glutathione_input13} and \ref{fig:Glutathione_input41}, due to the large size and density of the network, many arrows overlap and are therefore not clearly visible. 
In both cases, node 19 (\texttt{bcys}), node 31 (\texttt{src}), and node 32 (\texttt{dmg}) are excluded from the core network.
\END
\end{remark}

\begin{table}[!ht]
\centering
\resizebox{\textwidth}{!}{%
\begin{tabular}{clp{4.5cm}clp{4.5cm}}
\toprule
Nodes & Variables & Biological name & Nodes & Variables & Biological name \\
\midrule
1  & mthf & tetrahydrofolate & 18 & c10f & 10-formyltetrahydrofolate\\
2  & m10f & 10-formyltetrahydrofolate & 19 & bcys & blood cysteine\\
3  & mser & mitochondrial serine & 20 & cCOO & cytosolic formate\\
4  & mgly & mitochondrial glycine & 21 & cser & cytosolic serine\\
5  & m2cf & 5-10-methylenetetrahydrofolate & 22 & cgly & cytosolic glycine\\
6  & mCOO & mitochondrial formate & 23 & c2cf & 5-10-methylenetetrahydrofolate\\
7  & bgsg & blood glutathione disulfide & 24 & ccys & cytosolic cysteine\\
8  & bgsh & blood glutathione & 25 & sam & S-adenosylmethionine \\
9  & cgsg & cytosolic glutathione disulfide & 26 & sah &  S-adenosylhomocysteine \\
10 & m1cf & 5-10-methenyltetrahydrofolate & 27 & c1cf & 5-10-methenyltetrahydrofolate\\
11 & cthf & tetrahydrofolate & 28 & aic & P-ribosyl-5-amino-4-imidazole carboxamide\\
12 & hcy & homocysteine & 29 & met & methionine \\
13 & 5mf (input node 1) & 5-methyltetrahydrofolate & 30 & cyt & cystathionine \\
14 & bglu (output node) & blood glutamate & 31 & src & sarcosine\\
15 & cgsh & cytosolic glutathione & 32 & dmg & dimethylglycine\\
16 & dhf & dihydrofolate & 33 & bgly & blood glycine\\
17 & cglu (input node 2) & cytosolic glutamate & 34 & glc & glutamyl-cysteine \\
\bottomrule
\end{tabular}}
\caption{Mapping of node IDs to variable names and biological meanings for the glutathione model.}
\label{tab:variable_map_glutathione}
\end{table}

\begin{figure}[!ht]
\centering
\includegraphics[width=0.85\textwidth]{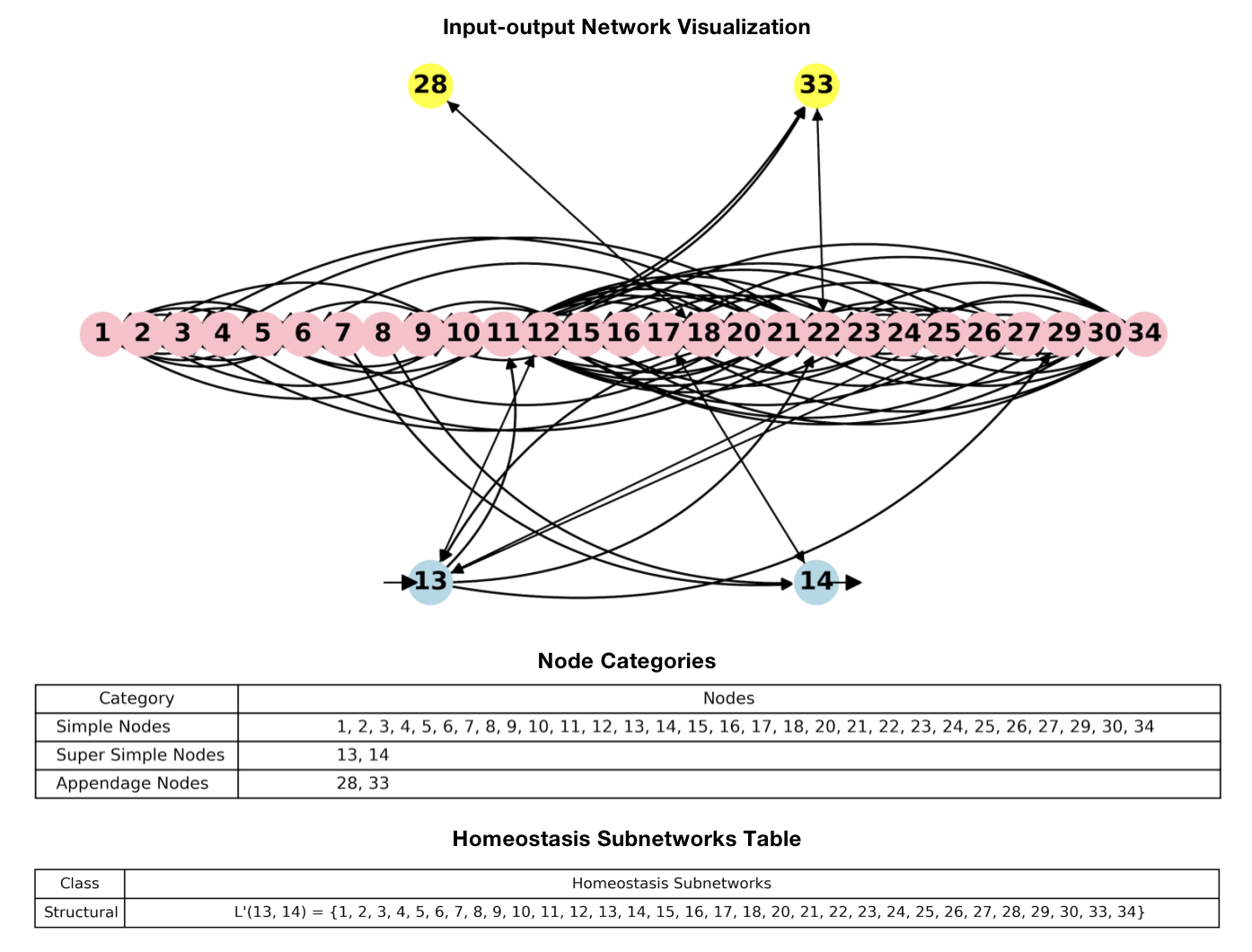}
\caption{Algorithm-generated outputs for the glutathione model with Input Node 13. Blue: super-simple nodes; pink: simple but non-super-simple nodes; yellow: appendage nodes.}
\label{fig:Glutathione_input13}
\end{figure}

\begin{figure}[!ht]
\centering
\includegraphics[width=0.85\textwidth]{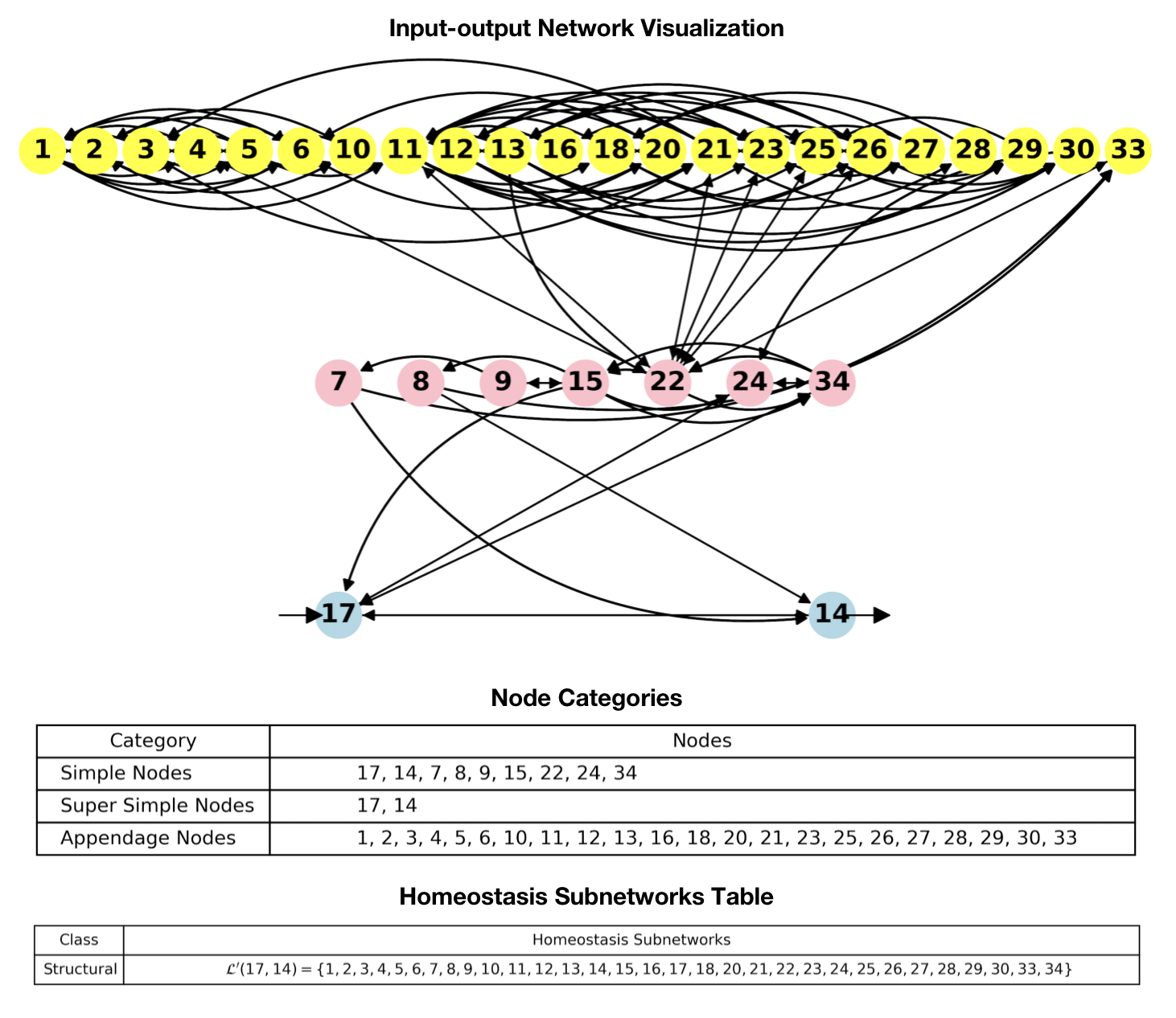}
\caption{Algorithm-generated outputs for the glutathione model with Input Node 17. Blue: super-simple nodes; pink: simple but non-super-simple nodes; yellow: appendage nodes.}
\label{fig:Glutathione_input41}
\end{figure}





\subsection{Zinc homeostasis: network with input=output}
\label{ss:zinc_example}

Zinc is an essential micro-nutrient for plants, because it plays an important role in many enzymes catalyzing vital cellular reactions. 
In higher doses, however, zinc is toxic. 
Therefore, plants have to strictly control and adjust the uptake of zinc through their roots depending on its concentration in the surrounding soil.
This is achieved by a complicated control system consisting of sensors, transmitters and zinc transporter proteins.

In \cite{claus2015} the authors propose a mathematical model for regulation of zinc uptake in roots of \emph{Arabidopsis thaliana} based on the uptake of zinc, expression of a transporter protein and the interaction between an activator and inhibitor.
The equations in \cite{claus2015} are obtained by an \emph{ad hoc} dimensional reduction of a model proposed in \cite{claus2012}.

The model of \cite{claus2015} is given by a system of four nonlinear ordinary differential equations.
For convenience we represent the concentration of each component by: $x_1=[\mathrm{Zn^{2+}}]$, $x_2=[\mathrm{mRNA}]$, $x_3=[\mathrm{ZIP}]$ and $x_4=[\mathrm{Dimer}]$.
That is, $x_1,\ldots,x_4$ are state variables and the equations are
\begin{equation} \label{EQ:ZINC}
\begin{aligned}
\dot{x}_1 & = \mathcal{I}x_4  - c_1 x_1 \\
\dot{x}_2 & = 1 - a_2 x_2 x_3  - c_2 x_2\\
\dot{x}_3 & = a_3 x_1 (v_1 - x_3) -  a_5 x_2 x_3  
- c_3 x_3 \\
\dot{x}_4 & = a_4 x_2^3 (v_2 - x_4) - c_4 x_4 
\end{aligned}
\end{equation}
Here, $a_i$, $c_j$  and $v_l$ are positive parameters.
The quantity of interest to be controlled is the \emph{concentration of intracellular zinc} ($x_1=[\mathrm{Zn}^{2+}]$).
There is one external control parameter, $\mathcal{I}$, representing the \emph{(normalized) concentration of extracelullar zinc}.

\vspace{5mm}

Thus, this network is an example of input=output (Fig.~\ref{fig:zinc}, upper left). Consistent with the theory in Subsection \ref{ssec:input_equal_output}, the algorithm detects no simple paths, simple nodes or super-simple nodes; all nodes are appendage nodes. The algorithm identifies two appendage homeostasis mechanisms, summarized in the lower-right table in Fig.~\ref{fig:zinc}: the 2-node appendage subnetwork $2 \biarrow 3$ ($\AA_1$) and null-degradation on node $4$ ($\AA_2$). These results agree with the analytic classification in \cite{antoneli2025a}.

\begin{figure}[!ht]
\centering
\includegraphics[width=\textwidth]{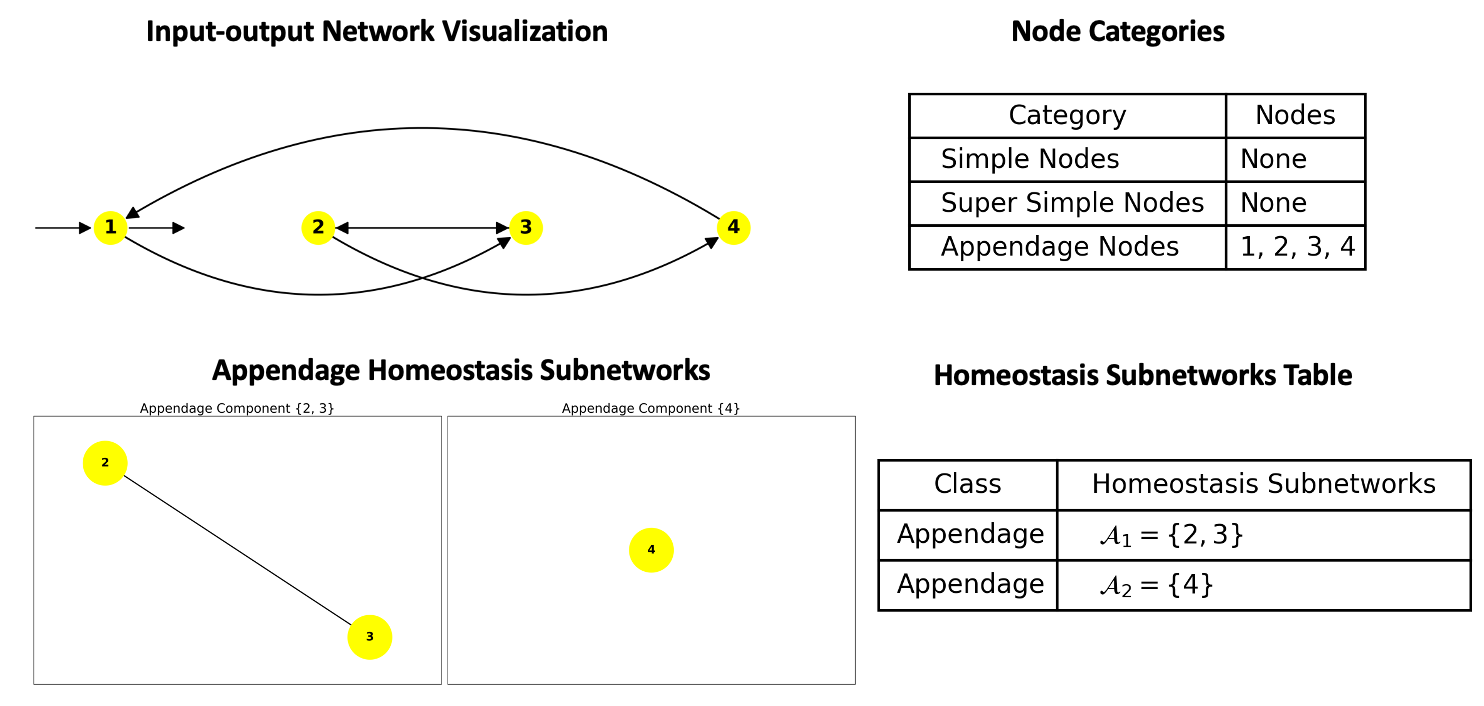}
\caption{Algorithm-generated outputs for the zinc model. This input-output network contains only appendage nodes (yellow nodes). }
\label{fig:zinc}
\end{figure}

\section{Conclusion and Outlook}
\label{sec:discussion}

Despite intensive studies on the structural properties of networks that achieve homeostasis or perfect adaptation \cite{MTELT09, AM13, TM16, F16, QD18, AL18, DQMS18, ALGBSK19,hong2025topological,khammash2021perfect,yi2000robust,xiao2018robust,cappelletti2020hidden,briat2016antithetic,araujo2023universal,qiao2019network}, the mathematical complexity of these results can make them difficult to apply in practice, thereby limiting their broader accessibility and usability.
In this paper we present a Python-based algorithm that automates the identification
of homeostasis subnetworks of an input-output network.
More specifically, given an input–output network specified solely by its connectivity structure (e.g., an edge list, adjacency matrix, or admissible ODE system) and the designation of input and output nodes, the algorithm automatically computes the relevant graph-theoretical structures and enumerates all homeostasis subnetworks.
The algorithm is able to handle several classes of input-output networks, including those with a single input parameter (with single or multiple input nodes), multiple input parameters, and cases when the input and output nodes coincide. 
For systems subject to conservation laws or other algebraic constraints \cite{jin2026,oellerich2021}, the algorithm remains applicable after reformulating the admissible system in terms of independent variables when such constraints can be identified.

We illustrate the algorithm by applying it to several biological examples, covering all the cases mentioned above.
We also included some examples of moderately large networks to showcase the algorithm's capability to handle network sizes that are beyond what can be done by hand.

The wide applicability of the algorithm, is due to a new method, developed in this paper, to reduce the analysis of networks with multiple input nodes to the case of a single input node,
by introducing an augmented single-input representation (see Theorem \ref{thm:GenH}).
This new result, together with the observation from \cite{madeira2024} that the analysis of a multiple inputs network amounts to deal with one parameter at a time, allowed us to extend the scope of an algorithm designed to treat the single input node/parameter case to other classes of networks.

The new theoretical approach introduced in this paper may be helpful in the development of another algorithm to enumerate the homeostasis patterns of an input-output network.
A \emph{homeostasis pattern} is defined as a set of nodes, in addition to the output node, that are simultaneously infinitesimally homeostatic.
It has been shown in \cite{duncan2024homeostasis,antoneli2025b} that to each homeostasis subnetwork there corresponds a unique (generic) homeostasis pattern.
Moreover, it is shown that the classification of hemostasis patterns is reduced to a combinatorial problem.
Hence, it is expected that there is a corresponding algorithm for the computation of homeostasis patterns.

\section*{Acknowledgment}

The research of FA was supported by Funda\c{c}\~ao de Amparo \`a Pes\-qui\-sa do Estado de S\~ao Paulo (FAPESP) grant 2023/04839-7.  YW was partly supported by NIH/NIDA R01DA057767. This work was also supported by the Gordon Science Research Fellows Endowment Fund, established by the Cele H. and William B. Rubin Family Fund, Inc., with additional support from the Brandeis Division of Science. 

\section*{Code Availability}

The implementation of the algorithm, together with input data files, algorithm outputs for all network examples used in this work, and documentation, is publicly available at GitHub: \url{https://github.com/Homeostasis-Classification/Homeostasis-Classification-Algorithm}. 

\section*{Competing Interest Statement}

The authors have declared no competing interest. 

\section*{Appendices}

In all examples presented in Section \ref{sec:application}, we reported the homeostasis subnetworks detected by the algorithm, denoted as $\KK_\eta$, but did not list the associated homeostasis blocks $B_\eta$. In the appendix sections below, we include the homeostasis blocks, as recorded in the output file \texttt{run\_output.txt}, for the cholesterol network in Appendix \ref{app:chol} and for the dopamine network in Appendix \ref{app:dopamine}. 

In addition, we perform in Appendix \ref{app:dopamine} an analytic classification of the homeostasis mechanisms for the dopamine network. This is obtained by computing the Jacobian matrix $J$, the homeostasis matrix $H$, and factorizing $\det(H)$ to identify all irreducible homeostasis blocks, thereby verifying the accuracy of the algorithm outputs. 

For the remaining examples, we omit listing the homeostasis blocks here; they can be found in the corresponding \texttt{run\_output.txt} files provided in the GitHub repository (see Code Availability). 


\appendix




\section{Cholesterol network}\label{app:chol}

\begin{figure}[htbp!]
    \centering
\includegraphics[width=0.7\textwidth]{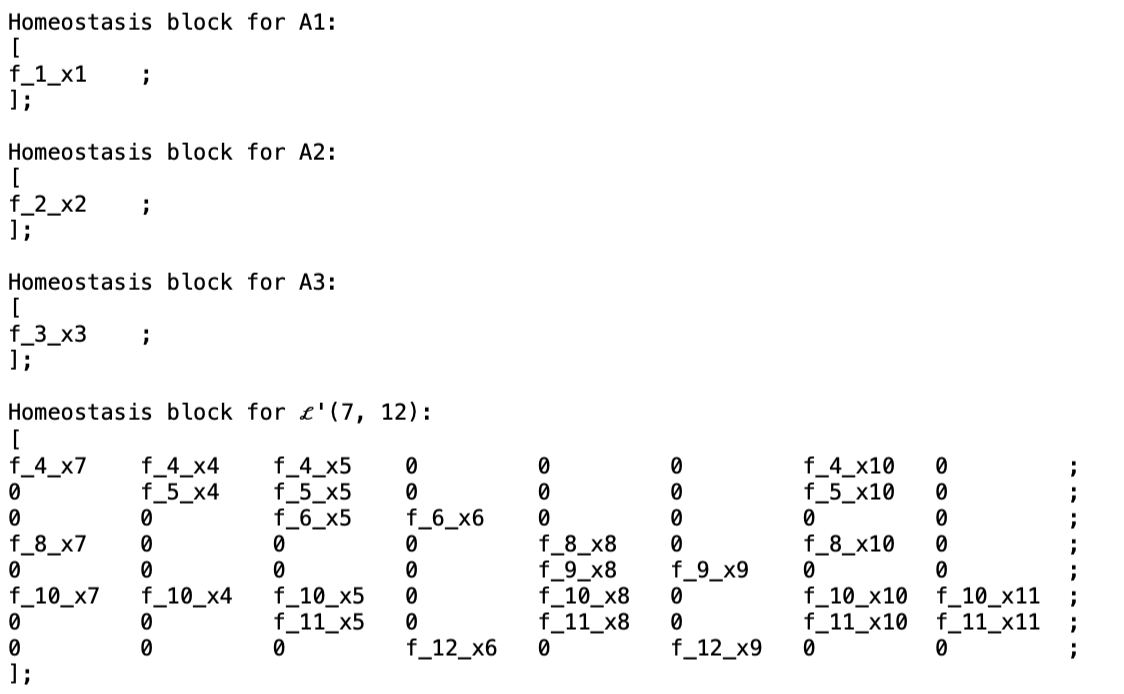}
    \caption{Algorithm output showing homeostasis blocks for the cholesterol example (Fig.~\ref{fig:Cholesterol_output12}), with the node ID-variable name mapping provided in Table \ref{tab:variable_map_cholesterol}.}
    \label{fig:cholesterole-homeostasis-conditions}
\end{figure}

The admissible system for the Cholesterol network equations proposed in \cite{pool2018integrated} is given by the following equations:

\begin{equation}\label{eq:cholesterol}
    \begin{aligned}
\frac{dx_1}{dt}  &= f_1\!\left(x_{1}, x_{12}\right)\\
\frac{dx_2}{dt}  &= f_2\!\left(x_{2}, x_{12}\right)\\
\frac{dx_3}{dt}  &= f_3\!\left(x_{1}, x_{3}\right)\\
\frac{dx_4}{dt}  &= f_4\!\left(x_{4}, x_{5}, x_{7}, x_{10}\right)\\
\frac{dx_5}{dt}  &= f_5\!\left(x_{4}, x_{5}, x_{10}\right)\\
\frac{dx_6}{dt}  &= f_6\!\left(x_{5}, x_{6}\right)\\
\frac{dx_7}{dt}  &= f_7\!\left(x_{7}, x_{8}, x_{10}\right)\\
\frac{dx_8}{dt}  &= f_8\!\left(x_{7}, x_{8}, x_{10}\right)\\
\frac{dx_9}{dt}  &= f_9\!\left(x_{8}, x_{9}\right)\\
\frac{dx_{10}}{dt} &= f_{10}\!\left(x_{4}, x_{5}, x_{7}, x_{8}, x_{10}, x_{11}\right)\\
\frac{dx_{11}}{dt} &= f_{11}\!\left(x_{2}, x_{5}, x_{8}, x_{10}, x_{11}\right)\\
\frac{dx_{12}}{dt} &= f_{12}\!\left(x_{3}, x_{6}, x_{9}, x_{12}\right)
\end{aligned}
\end{equation}
where $x_i$ denotes the state variable associated with nodes $i$, with the mapping between nodes, variables, and their biological names given in Table~\ref{tab:variable_map_cholesterol}. 

As discussed in Subsection \ref{sec:chol}, the algorithm detects three appendage and one structural homeostasis subnetworks (Fig.~\ref{fig:Cholesterol_output12}, lower-right table). Their corresponding homeostasis blocks, which are irreducible components of the homeostasis matrix $H$, are listed in Fig.~\ref{fig:cholesterole-homeostasis-conditions}.

\section{Dopamine network}\label{app:dopamine}

In this appendix section, we provide additional details for the dopamine network example discussed in Subsection~\ref{subsec:dopamine}.

\subsection{Algorithm-detected homeostasis blocks}\label{app:dopamine-block}

We consider a model of dopamine synthesis and release proposed by Best et al \cite{best2009homeostatic}, where the activity of the enzyme tyrosine hydroxylase (TH) is the \emph{external input parameter}.  lists the model variables and the corresponding nodes in the network. Variation of TH activity affects variables bh2, bh4, tyr and L-dopa (see Table \ref{tab:variable_map_dopamine}), making this an input-output network with multiple input nodes. In order to apply the algorithm, one must first convert this network $G$ into a augmented single-input-node network $G^\diamond$ by introducing a new input node TH and connecting it to bh2, bh4, tyr and L-dopa (Remark \ref{rem:multi-input} and Theorem \ref{thm:GenH}). 

As discussed in Subsection~\ref{subsec:dopamine}, the algorithm detects four structural homeostasis subnetworks (Fig.~\ref{fig: Dopamine_Example_Output}, lower-right table). Their corresponding homeostasis blocks are listed in Fig.~\ref{fig:dopamine-homeostasis-conditions}. 

\begin{figure}[htp!]
    \centering
\includegraphics[width=0.6\textwidth]{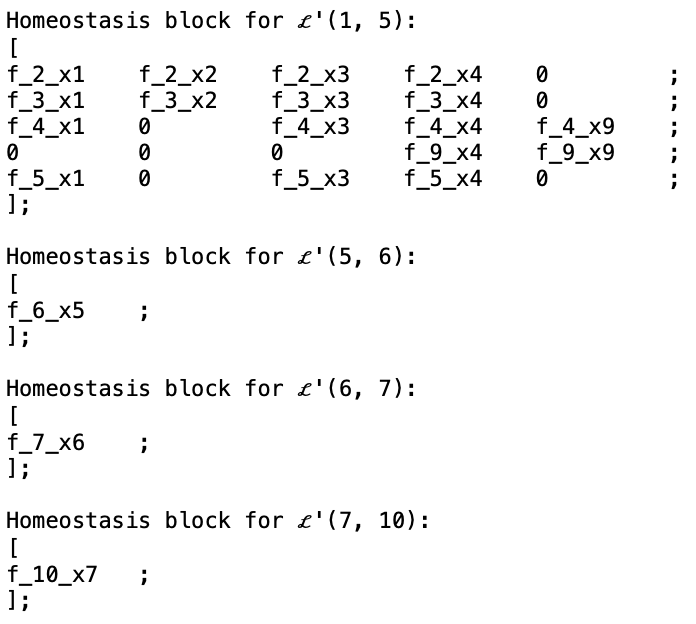}
    \caption{Algorithm output showing homeostasis blocks for the dopamine example in Fig.~\ref{fig: Dopamine_Example_Output}, with the node ID-variable name mapping provided in Table \ref{tab:variable_map_dopamine}.}
    \label{fig:dopamine-homeostasis-conditions}
\end{figure}

\subsection{Verification of algorithm outputs via direct computation}\label{app:dopa-compute}

Below, we compute the Jacobian and homeostasis matrices for the admissible system of the augmented network $G^\diamond$, determine the irreducible factors of the homeostasis matrix $H$, and compare them with the blocks detected by the algorithm. 

The admissible system of $G^\diamond$ is given by:
\begin{equation}\label{eq:DA-admissible}
\begin{aligned}
\frac{dx_1}{dt} &= f_1(x_1, \II)\\
\frac{dx_2}{dt} &= f_2(x_1, x_2, x_3, x_4, x_6, x_{10})\\
\frac{dx_3}{dt} &= f_3(x_1, x_2, x_3, x_4, x_6, x_{10})\\
\frac{dx_4}{dt} &= f_4(x_1, x_3, x_4, x_6, x_9, x_{10})\\
\frac{dx_5}{dt} &= f_5(x_1, x_3, x_4, x_5, x_6, x_{10})\\
\frac{dx_6}{dt} &= f_6(x_5, x_6, x_7, x_8, x_{10})\\
\frac{dx_7}{dt} &= f_7(x_6, x_7)\\
\frac{dx_8}{dt} &= f_8(x_8)\\
\frac{dx_9}{dt} &= f_9(x_4, x_9)\\
\frac{dx_{10}}{dt} &= f_{10}(x_7, x_8, x_{10})\\
\end{aligned}
\end{equation}
where $x_i$ denotes the state variable associated with node $i$ (see Table \ref{tab:variable_map_dopamine} for the corresponding model variables). In particular, $x_1$ denotes TH and represents the newly introduced input node influencing $x_2,\cdots, x_5$ (i.e., bh2, bh4, tyr, and L-dopa). Since node 8 is not downstream of the input node 1 and hence not part of the core network, it can be excluded when computing the homeostasis blocks.  

The Jacobian matrix of \eqref{eq:DA-admissible}, excluding $x_8$, is given by 

$$\begin{gathered}
J=\left[
\begin{array}{ccccccccc}
f_{1,x_1} & 0 & 0 & 0 & 0 & 0 & 0 & 0 & 0\\
f_{2,x_1} & f_{2,x_2} & f_{2,x_3} & f_{2,x_4} & 0 & f_{2,x_6} & 0 & 0 & f_{2,x_{10}}\\
f_{3,x_1} & f_{3,x_2} & f_{3,x_3} & f_{3,x_4} & 0 & f_{3,x_6} & 0 & 0 & f_{3,x_{10}}\\
f_{4,x_1} & 0 & f_{4,x_3} & f_{4,x_4} & 0 & f_{4,x_6} & 0 & f_{4,x_9} & f_{4,x_{10}}\\
f_{5,x_1} & 0 & f_{5,x_3} & f_{5,x_4} & f_{5,x_5} & f_{5,x_6} & 0 & 0 & f_{5,x_{10}}\\
0 & 0 & 0 & 0 & f_{6,x_5} & f_{6,x_6} & f_{6,x_7} & 0 & f_{6,x_{10}}\\
0 & 0 & 0 & 0 & 0 & f_{7,x_6} & f_{7,x_7} & 0 & 0\\
0 & 0 & 0 & f_{9,x_4} & 0 & 0 & 0 & f_{9,x_9} & 0\\
0 & 0 & 0 & 0 & 0 & 0 & f_{10,x_7} & 0 & f_{10,x_{10}}
\end{array}
\right]
\end{gathered}$$

The homeostasis matrix is obtained by removing the first row (corresponding to the input node 1) and the last column (corresponding to the output node 10):
$$
H=\begin{gathered}
\left[
\begin{array}{cccccccc}
f_{2,x_1} & f_{2,x_2} & f_{2,x_3} & f_{2,x_4} & 0 & f_{2,x_6} & 0 & 0\\
f_{3,x_1} & f_{3,x_2} & f_{3,x_3} & f_{3,x_4} & 0 & f_{3,x_6} & 0 & 0\\
f_{4,x_1} & 0 & f_{4,x_3} & f_{4,x_4} & 0 & f_{4,x_6} & 0 & f_{4,x_9}\\
f_{5,x_1} & 0 & f_{5,x_3} & f_{5,x_4} & f_{5,x_5} & f_{5,x_6} & 0 & 0\\
0 & 0 & 0 & 0 & f_{6,x_5} & f_{6,x_6} & f_{6,x_7} & 0\\
0 & 0 & 0 & 0 & 0 & f_{7,x_6} & f_{7,x_7} & 0\\
0 & 0 & 0 & f_{9,x_4} & 0 & 0 & 0 & f_{9,x_9}\\
0 & 0 & 0 & 0 & 0 & 0 & f_{10,x_7} & 0
\end{array}
\right]
\end{gathered}
$$

By cofactor expansion, we obtain 
\begin{equation}
\begin{aligned}
\det(H) &= \pm f_{6,x_5} f_{7,x_6} f_{10,x_7} \det(B_1) 
\end{aligned}
\end{equation}
where 
\[
B_1=
\left[\begin{array}{ccccc}
f_{2, x_{1}} & f_{2, x_{2}} & f_{2, x_{3}} & f_{2, x_{4}} & 0\\
f_{3, x_{1}} & f_{3, x_{2}} & f_{3, x_{3}} & f_{3, x_{4}} & 0\\
f_{4, x_{1}} & 0 & f_{4, x_{3}} & f_{4, x_{4}} & f_{4, x_{9}}\\
0 & 0 & 0 & f_{9, x_{4}} & f_{9, x_{9}}\\
f_{5, x_{1}} & 0 & f_{5, x_{3}} & f_{5, x_{4}} & 0
\end{array}\right].
\]
The four irreducible factors obtained above ($f_{6,x_5},\,f_{7,x_6},\,f_{10,x_7},\, B_1$) are the same as those produced by the algorithm (Fig.~\ref{fig:dopamine-homeostasis-conditions}). They give rise to the four structural homeostasis subnetworks identified by the algorithm.

\clearpage

\bibliography{references}

\begin{thebibliography}{59}
\providecommand{\natexlab}[1]{#1}
\providecommand{\url}[1]{{#1}}
\providecommand{\urlprefix}{URL }
\providecommand{\doi}[1]{\url{https://doi.org/#1}}
\providecommand{\eprint}[2][]{\url{#2}}
 \bibcommenthead

\bibitem[{Andrade et~al.(2022)Andrade, Madeira, and Antoneli}]{andrade2022}
Andrade PPAC, Madeira JLO, Antoneli F (2022) Homeostatic mechanisms in
  biological systems. arXiv 2202.11218:1--31. \doi{10.48550/arXiv.2202.11218}

\bibitem[{Ang and McMillen(2013)}]{AM13}
Ang J, McMillen D (2013) Physical constraints on biological integral control
  design for homeostasis and sensory adaptation. Biophys J 104(2):505--515

\bibitem[{Antoneli et~al.(2018)Antoneli, Golubitsky, and Stewart}]{AGS18}
Antoneli F, Golubitsky M, Stewart I (2018) Homeostasis in a feed forward loop
  gene regulatory network motif. J Theor Biol 445:103--109

\bibitem[{Antoneli et~al.(2025)Antoneli, Golubitsky, Jin, and
  Stewart}]{antoneli2025a}
Antoneli F, Golubitsky M, Jin J, et~al (2025) Homeostasis in input-output
  networks: Structure, classification and applications. Math Biosci 384:109435

\bibitem[{Antoneli et~al.(2026{\natexlab{a}})Antoneli, Best, Golubitsky, and
  Jin}]{antoneli2026}
Antoneli F, Best J, Golubitsky M, et~al (2026{\natexlab{a}}) Homeostasis in
  networks with same input and output nodes and metal ion regulation. In
  preparation

\bibitem[{Antoneli et~al.(2026{\natexlab{b}})Antoneli, Golubitsky, Jin, and
  Stewart}]{antoneli2025b}
Antoneli F, Golubitsky M, Jin J, et~al (2026{\natexlab{b}}) Homeostasis in gene
  regulatory networks. Int J Biomath 19:2550070.
  \doi{10.1142/S1793524525500706}

\bibitem[{Aoki et~al.(2019)Aoki, Lillacci, Gupta, Baumschlager, Schweingruber,
  and Khammash}]{ALGBSK19}
Aoki S, Lillacci G, Gupta A, et~al (2019) A universal biomolecular integral
  feedback controller for robust perfect adaptation. Nature 570:533--537

\bibitem[{Araujo and Liota(2018)}]{AL18}
Araujo R, Liota L (2018) The topological requirements for robust perfect
  adaptation in networks of any size. Nature Comm 9:1757

\bibitem[{Araujo and Liotta(2023)}]{araujo2023universal}
Araujo RP, Liotta LA (2023) Universal structures for adaptation in biochemical
  reaction networks. Nature Communications 14(1):2251

\bibitem[{Best et~al.(2009)Best, Nijhout, and Reed}]{best2009homeostatic}
Best J, Nijhout HF, Reed MC (2009) Homeostatic mechanisms in dopamine synthesis
  and release: a mathematical model. Theor Biol Med Model 6(1):1--20

\bibitem[{Briat et~al.(2016)Briat, Gupta, and Khammash}]{briat2016antithetic}
Briat C, Gupta A, Khammash M (2016) Antithetic integral feedback ensures robust
  perfect adaptation in noisy biomolecular networks. Cell systems 2(1):15--26

\bibitem[{Brualdi and Cvetkoi\'{c}(2009)}]{BC09}
Brualdi R, Cvetkoi\'{c} D (2009) A Combinatorial approach to matrix theory and
  its applications. Chapman \& Hall/CRC Press.

\bibitem[{Brualdi and Ryser(1991)}]{BR91}
Brualdi R, Ryser H (1991) Combinatorial matrix theory. Cambridge University
  Press.

\bibitem[{Cappelletti et~al.(2020)Cappelletti, Gupta, and
  Khammash}]{cappelletti2020hidden}
Cappelletti D, Gupta A, Khammash M (2020) A hidden integral structure endows
  absolute concentration robust systems with resilience to dynamical
  concentration disturbances. Journal of the Royal Society Interface 17(171)

\bibitem[{Claus and {Chavarr{\'i}a-Krauser}(2012)}]{claus2012}
Claus J, {Chavarr{\'i}a-Krauser} A (2012) Modeling regulation of zinc uptake
  via {{ZIP}} transporters in yeast and plant roots. PLOS One 7(6):e37193.
  \doi{10.1371/journal.pone.0037193}

\bibitem[{Claus et~al.(2015)Claus, Ptashnyk, Bohmann, and
  {Chavarr{\'i}a-Krauser}}]{claus2015}
Claus J, Ptashnyk M, Bohmann A, et~al (2015) Global {{Hopf}} bifurcation in the
  {{ZIP}} regulatory system. Journal of Mathematical Biology 71(4):795--816.
  \doi{10.1007/s00285-014-0836-1}

\bibitem[{Clausznitzer et~al.(2010)Clausznitzer, Oleksiuk, Løvdok, Sourjik,
  and Endres}]{clausznitzer10}
Clausznitzer D, Oleksiuk O, Løvdok L, et~al (2010) Chemotactic response and
  adaptation dynamics in \textit{{E}scherichia coli}. PLoS Comput Biol
  6(5):1--11. \doi{10.1371/journal.pcbi.1000784}

\bibitem[{Drengstig et~al.(2012)Drengstig, Ni, Thorsen, Jolma, and
  Ruoff}]{drengstig2012robust}
Drengstig T, Ni X, Thorsen K, et~al (2012) Robust adaptation and homeostasis by
  autocatalysis. The Journal of Physical Chemistry B 116(18):5355--5363

\bibitem[{Duncan et~al.(2024)Duncan, Antoneli, Best, Golubitsky, Jin, Nijhout,
  Reed, and Stewart}]{duncan2024homeostasis}
Duncan W, Antoneli F, Best J, et~al (2024) Homeostasis patterns. SIAM Journal
  on Applied Dynamical Systems 23(3):2262--2292

\bibitem[{Edgington and Tindall(2015)}]{edgington15}
Edgington MP, Tindall MJ (2015) Understanding the link between single cell and
  population scale responses of \emph{Escherichia coli} in differing ligand
  gradients. Comp Struct Biotech J 13:528--538

\bibitem[{Edgington and Tindall(2018)}]{edgington18}
Edgington MP, Tindall MJ (2018) Mathematical analysis of the
  \textit{{E}scherichia coli} chemotaxis signalling pathway. Bull Math Biol
  80(4):758--787

\bibitem[{Ferrell(2016)}]{F16}
Ferrell J (2016) Perfect and near perfect adaptation in cell signaling. Cell
  Syst 2:62--67

\bibitem[{Golubitsky and Stewart(2017)}]{GS17}
Golubitsky M, Stewart I (2017) Homeostasis, singularities and networks. J Math
  Biol 74:387--407

\bibitem[{Golubitsky and Stewart(2023)}]{gs2023}
Golubitsky M, Stewart I (2023) Dynamics and Bifurcation in Networks -
  {{Theory}} and Applications of Coupled Differential Equations. {SIAM},
  Philadelphia, PA, \doi{10.1137/1.9781611977332}

\bibitem[{Golubitsky and Wang(2020)}]{GW20}
Golubitsky M, Wang Y (2020) Infinitesimal homeostasis in three-node
  input-output networks. J Math Biol 80:1--23. \doi{10.1007/s00285-019-01457-x}

\bibitem[{Hong et~al.(2025)Hong, Moon, Hirono, and Kim}]{hong2025topological}
Hong H, Moon S, Hirono Y, et~al (2025) Topological criterion for robust perfect
  adaptation of reaction fluxes in biological networks. Iscience 28(6):112394.
  \doi{10.1016/j.isci.2025.112394}

\bibitem[{Jin and Rempala(2026)}]{jin2026}
Jin J, Rempala GA (2026) Infinitesimal homeostasis in mass-action systems.
  Journal of Mathematical Biology 92(3):35. \doi{10.1007/s00285-026-02352-y}

\bibitem[{Khammash(2021)}]{khammash2021perfect}
Khammash MH (2021) Perfect adaptation in biology. Cell Systems 12(6):509--521

\bibitem[{Lloyd(2013)}]{L13}
Lloyd A (2013) The regulation of cell size. Cell 154:1194

\bibitem[{Ma et~al.(2009)Ma, Trusina, El-Samad, Lim, and Tang}]{MTELT09}
Ma W, Trusina A, El-Samad H, et~al (2009) Defining network topologies that can
  achieve biochemical adaptation. Cell 138:760--773

\bibitem[{Madeira and Antoneli(2022)}]{madeira2022homeostasis}
Madeira JLO, Antoneli F (2022) Homeostasis in networks with multiple input
  nodes and robustness in bacterial chemotaxis. J Nonlinear Sci 32(3):37.
  \doi{10.1007/s00332-022-09793-x}

\bibitem[{Madeira and Antoneli(2024)}]{madeira2024}
Madeira JLO, Antoneli F (2024) Homeostasis in networks with multiple inputs. J
  Math Biol 89:17. \doi{10.1007/s00285-024-02117-5}

\bibitem[{Nijhout and Reed(2014)}]{NR14}
Nijhout H, Reed M (2014) Homeostasis and dynamic stability of the phenotype
  link robustness and plasticity. Integr Comp Biol 54(2):264--75

\bibitem[{Nijhout et~al.(2004)Nijhout, Reed, Budu, and Ulrich}]{NRBU04}
Nijhout H, Reed M, Budu P, et~al (2004) A mathematical model of the folate
  cycle: new insights into folate homeostasis. J Biol Chem 279:55008--55016

\bibitem[{Nijhout et~al.(2014)Nijhout, Best, and Reed}]{NBR14}
Nijhout H, Best J, Reed M (2014) Escape from homeostasis. Math Biosci
  257:104--110

\bibitem[{Nijhout et~al.(2015)Nijhout, Best, and Reed}]{NBR15}
Nijhout H, Best J, Reed M (2015) Using mathematical models to understand
  metabolism, genes and disease. BMC Biol 13:79

\bibitem[{Nijhout et~al.(2018)Nijhout, Best, and Reed}]{NBR18}
Nijhout H, Best J, Reed M (2018) Systems biology of robustness and homeostatic
  mechanisms. WIREs Syst Biol Med 11(3):e1440

\bibitem[{Nijhout et~al.(2006)Nijhout, Reed, Lam, Shane, Gregory~III, and
  Ulrich}]{nijhout2006}
Nijhout HF, Reed MC, Lam SL, et~al (2006) In silico experimentation with a
  model of hepatic mitochondrial folate metabolism. Theoretical Biology and
  Medical Modelling 3(1):40

\bibitem[{Nijhout et~al.(2017)Nijhout, Sadre-Marandi, Best, and
  Reed}]{nijhout2017systems}
Nijhout HF, Sadre-Marandi F, Best J, et~al (2017) Systems biology of phenotypic
  robustness and plasticity. Integr Comp Biol 57(2):171--184

\bibitem[{Oellerich et~al.(2021)Oellerich, Emelianenko, Liotta, and
  Araujo}]{oellerich2021}
Oellerich T, Emelianenko M, Liotta L, et~al (2021) Biological networks with
  singular jacobians: their origins and adaptation criteria. bioRxiv
  2021.03.01.433197. \doi{10.1101/2021.03.01.433197}

\bibitem[{Pool et~al.(2018)Pool, Sweby, and Tindall}]{pool2018integrated}
Pool F, Sweby PK, Tindall MJ (2018) An integrated mathematical model of
  cellular cholesterol biosynthesis and lipoprotein metabolism. Processes
  6(8):134

\bibitem[{Qian and Vecchio(2018)}]{QD18}
Qian Y, Vecchio DD (2018) Realizing 'integral control' in living cells: how to
  overcome leaky integration due to dilution? J R Soc Interface
  15(139):20170902

\bibitem[{Qiao et~al.(2019)Qiao, Zhao, Tang, Nie, and Zhang}]{qiao2019network}
Qiao L, Zhao W, Tang C, et~al (2019) Network topologies that can achieve dual
  function of adaptation and noise attenuation. Cell systems 9(3):271--285

\bibitem[{Reed et~al.(2010)Reed, Lieb, and Nijhout}]{RLN10}
Reed M, Lieb A, Nijhout H (2010) The biological significance of substrate
  inhibition: a mechanism with diverse functions. Bioessays 32(5):422--429

\bibitem[{Reed et~al.(2017)Reed, Best, Golubitsky, Stewart, and
  Nijhout}]{RBGSN17}
Reed M, Best J, Golubitsky M, et~al (2017) Analysis of homeostatic mechanisms
  in biochemical networks. Bull Math Biol 79(9):1--24

\bibitem[{Reed et~al.(2008)Reed, Thomas, Pavisic, James, Ulrich, and
  Nijhout}]{reed2008mathematical}
Reed MC, Thomas RL, Pavisic J, et~al (2008) A mathematical model of glutathione
  metabolism. Theoretical biology and medical modelling 5(1):8

\bibitem[{Sadre-Marandi et~al.(2018)Sadre-Marandi, Dahdoul, Reed, and
  Nijhout}]{sadre2018sex}
Sadre-Marandi F, Dahdoul T, Reed MC, et~al (2018) Sex differences in hepatic
  one-carbon metabolism. BMC Sys Biol 12(1):89

\bibitem[{Schneider(1977)}]{S77}
Schneider H (1977) The concepts of irreducibility and full indecomposability of
  a matrix in the works of frobenius, k\"onig and markov. Lin Alg Appl
  18:139--162

\bibitem[{Tang and McMillen(2016)}]{TM16}
Tang ZF, McMillen DR (2016) Design principles for the analysis and construction
  of robustly homeostatic biological networks. J Theor Biol 408:274--289

\bibitem[{Thom(1969)}]{thom1969}
Thom R (1969) Topological models in biology. Topology 8(3):313--335.
  \doi{10.1016/0040-9383(69)90018-4}

\bibitem[{Thom(1975)}]{thom1975}
Thom R (1975) Structural Stability and Morphogenesis. W.A. Benjamin, Inc.,
  Reading, MA.

\bibitem[{Tindall et~al.(2008{\natexlab{a}})Tindall, Porter, Maini, and
  Armitage}]{tindall08b}
Tindall MJ, Porter SL, Maini PK, et~al (2008{\natexlab{a}}) Overview of
  mathematical approaches used to model bacterial chemotaxis {II}: bacterial
  populations. Bull Math Biol 70(6):1570—1607.
  \doi{10.1007/s11538-008-9322-5}

\bibitem[{Tindall et~al.(2008{\natexlab{b}})Tindall, Porter, Maini, Gaglia, and
  Armitage}]{tindall08a}
Tindall MJ, Porter SL, Maini PK, et~al (2008{\natexlab{b}}) Overview of
  mathematical approaches used to model bacterial chemotaxis {I}: The single
  cell. Bull Math Biol 70(6):1525—1569. \doi{10.1007/s11538-008-9321-6}

\bibitem[{Vecchio et~al.(2018)Vecchio, Qian, Murray, and Sontag}]{DQMS18}
Vecchio DD, Qian Y, Murray R, et~al (2018) Future systems and control research
  in synthetic biology. Annu Rev Control 45:5--17

\bibitem[{Wang et~al.(2021)Wang, Huang, Antoneli, and Golubitsky}]{WHAG21}
Wang Y, Huang Z, Antoneli F, et~al (2021) The structure of infinitesimal
  homeostasis in input-output networks. J Math Biol 82:62.
  \doi{10.1007/s00285-021-01614-1}

\bibitem[{Wyatt et~al.(1999)Wyatt, Ritz-{De Cecco}, Czeisler, and
  Dijk}]{WRCD99}
Wyatt JK, Ritz-{De Cecco} A, Czeisler CA, et~al (1999) Circadian temperature
  and melatonin rhythms, sleep, and neurobehavioral function in humans living
  on a 20-h day. Amer J Physiol 277:1152--1163

\bibitem[{Xiao and Doyle(2018)}]{xiao2018robust}
Xiao F, Doyle JC (2018) Robust perfect adaptation in biomolecular reaction
  networks. In: 2018 IEEE conference on decision and control (CDC), IEEE, pp
  4345--4352

\bibitem[{Yi et~al.(2000)Yi, Huang, Simon, and Doyle}]{yi2000robust}
Yi TM, Huang Y, Simon MI, et~al (2000) Robust perfect adaptation in bacterial
  chemotaxis through integral feedback control. Proceedings of the National
  Academy of Sciences 97(9):4649--4653

\bibitem[{Zeeman(1977)}]{zeeman1977}
Zeeman EC (1977) Catastrophe {T}heory -- {S}elected {P}apers 1972--1977.
  Addison--Wesley, London.

\end{thebibliography}

\end{document}